\documentclass[11pt]{article}
\usepackage{amssymb,amsfonts, color}%,epsf}
\usepackage{latexsym,color,graphicx,amsmath,amsthm}
\usepackage{enumitem}%,mathabx}

\setlength{\textwidth}{6.0in}
\setlength{\evensidemargin}{0.25in}
\setlength{\oddsidemargin}{0.25in}
\setlength{\textheight}{9.0in}
\setlength{\topmargin}{-0.5in}
\setlength{\parskip}{2mm}
\setlength{\baselineskip}{1.7\baselineskip}

\newcommand{\sgn}{\text{sgn}}

\def\R{{\mathbb R}}
\def\S{{\mathbb S}}

\def\K{{\mathbb K}}

\def\x{{\bf x}}

\def\a{{\bf a}}

\def\rank{{\rm rank}}
\def\deg{{\rm deg}}
\def\rk{{\rm rank}}
\def\sp{{\rm span}}
\def\char{{\rm char}}
\def\poly{{\rm poly}}

\newtheorem{thm}{Theorem}[section]
\newtheorem{cor}[thm]{Corollary}
\newtheorem{lem}[thm]{Lemma}
\newtheorem{clm}[thm]{Claim}

\newtheorem{open}[thm]{Open Problem}
\newtheorem{remark}[thm]{Remark}

\theoremstyle{definition}
\newtheorem{defin}[thm]{Definition}

\begin{document}

\title{Subspace arrangements, graph rigidity and derandomization through submodular optimization\thanks{
The first author was partially supported from NSF grant DMS-1128155.
The second author was partially supported from NSF grant CCF-1412958
}}

\date{}
\author{
Orit E. Raz\thanks{%
Department of Mathematics, 
University of British Columbia, Vancouver, Canada.
{\sl oritraz@math.ubc.ca} }
\and 
Avi Wigderson\thanks{%
School of Mathematics, Institute for Advanced Study,
Princeton NJ 08540, U.S.A.
{\sl avi@ias.edu} }}

\maketitle
\noindent
\begin{center}
{\it Dedicated with admiration to L\'{a}szl\'{o} Lov\'asz, \\on the occasion
of his 70th birthday.}
\end{center}

\begin{abstract}

This paper presents a deterministic, strongly polynomial time algorithm for computing the matrix rank for a class of symbolic matrices (whose entries are polynomials over a field). This class was introduced, in a different language, by  Lov\'asz~\cite{Lov} in his study of flats in matroids, and proved a duality theorem putting this problem in $NP \cap coNP$. As such, our result is another demonstration where ``good characterization'' in the sense of Edmonds leads to an efficient algorithm. In a different paper Lov\'asz~\cite{Lov79} proved that all such symbolic rank problems have efficient probabilistic algorithms, namely are in $BPP$.  As such, our algorithm may be interpreted as a derandomization result, in the long sequence special cases of the PIT (Polynomial Identity Testing) problem. Finally, Lov\'asz and Yemini~\cite{LoYe} showed how the same problem generalizes the {\em graph rigidity} problem in two dimensions. As such, our algorithm may be seen as a generalization of the well-known deterministic algorithm for the latter problem.

There are two somewhat unusual technical features in this paper. The first is  the translation of 
Lov\'asz' flats problem into a symbolic rank one. The second is the use of submodular optimization for derandomization. We hope that the tools developed for both will be useful for related problems, in particular for better understanding of graph rigidity in higher dimensions.
\end{abstract}

\section{Introduction}

In this paper we provide a new {\em deterministic}, strongly polynomial time algorithm which can be viewed in two ways. The first is as solving a derandomization problem, providing a deterministic algorithm to a new special case of the PIT (Polynomial Identity Testing) problem. The second is as computing the dimension of the span a collection of subspaces in high dimensional space. Motivating and connecting the two is the problem of testing {\em graph rigidity}, to which an efficient deterministic algorithm is known only in  the plane, and is open for higher dimensions. Accordingly, we will divide the introduction to explain these three  problems.

\subsection{Polynomial Identity Testing (PIT)}

Let $\K$ be a field. Let $\x = (x_1, \dots x_d)$ be a $d$-tuple of independent variables. The PIT problem is to determine, given a multivariate polynomial $p\in \K[\x]$, if $p\equiv 0$ (as a polynomial). Of course, the description of $p$ as an input to this problem is central to its complexity, and many variants of this problem were considered. The most common formulation is when $p$ is given by an arithmetic formula or circuit\footnote{When the input is a circuit, the degree of $p$ is always assumed to be polynomial in the circuit's size, and in all cases considered in this paper this will be evident.}. 

The original version of this question was posed by Edmonds~\cite{Edm67}. In his formulation, $p$ is the determinant  of a matrix whose entries are linear forms in $\x$ (we will refer such a matrix as a {\em symbolic} matrix). 
 Lov\'asz~\cite{Lov79} proved that this problem is in $BPP$ namely has a fast probabilistic algorithm (for fields $\K$ 
larger than the degree of $p$): indeed, the algorithm simply picks  random elements from $\K$ and evaluates $p$ (note that evaluating $p$ is efficient in all three formulations above, and indeed in all formulations considered). This left open the problem of finding an efficient deterministic algorithm, namely derandomizing Lov\'asz's algorithm for PIT.  

\begin{open}
Is PIT $\in P$?
\end{open}

The importance of this seemingly specific open problem was revealed in an important result of Kabanets and Impagliazzo~\cite{KI04}. They showed that if the answer is positive (as everyone expects), this will imply non-trivial lower bounds on either arithmetic or Boolean circuits, well beyond current techniques.

The progress towards resolving this open problem has been by providing deterministic polynomial time algorithms for a large variety of special cases of it, with the idea of building up techniques. By far, in most of these results the special cases are defined by restricting the input polynomial to lie in some complexity class. In these cases, progress in derandomization followed closely progress on lower bounds for the appropriate class
(as is the case in the Boolean setting as well). There are literally dozens of such papers: many are mentioned and explained in the surveys~\cite{Sax13, ShYe10} and e.g. the recent paper~\cite{ASSS16}.

In parallel, with motivation from algebra, geometry and other areas, a different collection of special cases of PIT was studied, of a structural nature. Here one works with Edmond's formulation, and develops an understanding (and often a polynomial time algorithm) for cases where the symbolic matrix has restricted structure. This includes for example the works~\cite{BrLu08, CIK97, FGT16,  Gee99, IKS10, Mesh85}.

This paper contributes to the second line of research, providing new families of symbolic matrices for which PIT can be solved in deterministic polynomial time. To explain this structure we introduce some notation. We will work in a slightly more general setting, in two ways, as the results generalize to both. First, we will allow our symbolic matrices to have polynomial entries. In such cases, these polynomials will have simple formulas describing them. Second, we will be interested in computing the {\em rank} of the input symbolic matrix, not just whether its determinant vanishes. While seemingly a more general problem, this turns out to be equivalent to PIT (see e.g.  \cite[Appendix A]{GGOW15}\footnote{The proof in \cite{GGOW15} is given for {\em non-commutative} rank, but the exact same proof works verbatim for our usual notion of rank over $\K(\x)$.}).

Let $R$ be a family of polynomial maps $R = \{ r: \K^d  \rightarrow \K^n \}$. In all cases we assume the degree of all polynomials in all maps is at most $n$, and the number of variables $d$ is at most polynomial in $n$, so we will think of $n$ as the input size to the problem. 

A family of maps $R$  prescribes  a family of symbolic matrices, so that each row is an image of the $d$-vector of variables $\x$ under some map in $R$.
More formally, define PIT($R$) to be the set of all symbolic matrices $M$ (with $n$ columns, and $\poly(n)$ rows) in which every row of the matrix is of the form $r(\x)$, for some map $r\in R$. We will be interested in families $R$ for which the ranks of matrices in PIT($R$) can be computed  in polynomial time\footnote{We identify the set of matrices and the computational problem of determining their ranks.}. 

We first demonstrate the convenience of this notation. Call $R$ {\em complete}, if a deterministic polynomial-time algorithm for PIT($R$) implies a deterministic polynomial-time algorithm for PIT. Very simple maps are complete! It follows from Valiant's~\cite{Val79} hardness  of the determinant for the class\footnote{The arithmetic analog of the Boolean class $P$.} {\em VP} that 

\begin{thm}[\cite{Val79}]
The class $R_{\rm affine}$ of affine linear maps is complete.
\end{thm}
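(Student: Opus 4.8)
The plan is to reduce general PIT to PIT($R_{\rm affine}$) by invoking Valiant's realization of the determinant as a universal polynomial. Given an instance of PIT --- a polynomial $p\in\K[\x]$ presented by an arithmetic formula (or circuit) of size $s$ --- I would first apply Valiant's construction to produce, in deterministic polynomial time, a square symbolic matrix $M$ of dimension $m\times m$ with $m=\poly(s)$, all of whose entries are affine linear forms in $\x$ (in fact, each entry is a single variable $x_i$ or a field constant), such that $\det M = p$ as elements of $\K[\x]$. For formulas this reduction is polynomial; for a general bounded-degree circuit one first performs the standard depth reduction to a formula, which is where any quasi-polynomial overhead would enter, but the affine structure of the entries is unaffected by all of this.

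The next step is to observe that $M$ is a legal instance of PIT($R_{\rm affine}$): each row of $M$ is a tuple of $m$ affine linear forms in $\x$, hence it equals $r(\x)$ for the affine map $r\colon\K^d\to\K^m$ whose coordinate functions are exactly those forms, so every row is the image of $\x$ under a map in $R_{\rm affine}$. It then remains only to translate ``$p\equiv 0$?'' into a rank question: since $M$ is square, $\det M\equiv 0$ over the function field $\K(\x)$ if and only if $\rank M<m$, so $p\equiv 0$ if and only if $\rank_{\K(\x)} M<m$. Feeding $M$ to the hypothesized deterministic polynomial-time algorithm for PIT($R_{\rm affine}$), computing $\rank M$, and comparing it with $m$ therefore decides PIT in deterministic polynomial time --- which is precisely the assertion that $R_{\rm affine}$ is complete.

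I do not expect a genuine obstacle: the substance is the black-box use of Valiant's theorem together with the elementary equivalence between vanishing of a symbolic determinant and rank deficiency over $\K(\x)$ --- the latter being the rank-versus-PIT equivalence already quoted above (e.g.\ from \cite{GGOW15}). The only points needing care are bookkeeping ones: keeping the dimension $m$ polynomial in the input size, verifying that Valiant's gadget produces affine (indeed variable/constant) entries rather than higher-degree ones, and noting that permitting affine rather than merely linear maps costs nothing --- a constant can always be absorbed or simulated by a dummy variable --- but makes the statement cleanest.
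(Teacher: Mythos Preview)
Your proposal is correct and matches the paper's own justification, which is simply a one-line appeal to Valiant's completeness of the determinant for $VP$ (with a pointer to \cite{LiRe06} for details); you have merely spelled out the reduction the paper leaves implicit. Your caveat about the quasi-polynomial overhead for circuits versus formulas is a fair refinement that the paper itself glosses over.
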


Indeed, Valiant's original proof  (see more detail here~\cite{LiRe06})  implies a stronger theorem. Even restricting the support of each row to have at most a single variable in some coordinate, is general enough to be complete.

\begin{thm}
The class $R_{\rm sparse}$ of affine linear maps, such that each map is non-constant in at most a single variable from $\{ x_1, \dots x_d \}$, 
is complete.
\end{thm}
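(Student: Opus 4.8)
The plan is to reduce to the completeness of $R_{\rm affine}$, which is already available, and to work throughout with ranks (recall that rank computation for symbolic matrices with affine entries is exactly $\text{PIT}(R_{\rm affine})$, and rank and PIT are polynomially equivalent). So it suffices to give a deterministic polynomial‑time reduction from the rank of an arbitrary $M \in \text{PIT}(R_{\rm affine})$ to the rank of a matrix all of whose rows lie in $R_{\rm sparse}$. Fix such an $M$, with $n$ columns and $m = \poly(n)$ rows, and write $M = A_0 + \sum_{i=1}^{d} x_i A_i$ with constant matrices $A_0,A_1,\dots,A_d \in \K^{m\times n}$. I will construct an $M'$ of polynomial size, with all rows in $R_{\rm sparse}$, such that $\rank(M') = \rank(M) + nd$ over $\K(\x)$; then $\rank(M)$ is read off from $\rank(M')$.

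The gadget introduces, for each variable $x_i$, a private block of $n$ fresh columns to "carry" the factor $x_i$. Split the columns of $M'$ into blocks $C_0,C_1,\dots,C_d$, each of size $n$. The first $m$ rows of $M'$ form $[\,A_0 \mid A_1 \mid \cdots \mid A_d\,]$, with $A_i$ placed in block $C_i$; each such row is a constant vector, hence (vacuously) non‑constant in zero variables. In addition, for every $i\in\{1,\dots,d\}$ and every $j\in\{1,\dots,n\}$ we add a "binding" row equal to $-x_i\e_j$ in block $C_0$, to $\e_j$ in block $C_i$, and to $0$ elsewhere; this row has the form $\v + x_i\w$ for fixed vectors $\v,\w$, so it is non‑constant in the single variable $x_i$ and thus lies in $R_{\rm sparse}$. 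In block form,
\[ M' \;=\; \begin{pmatrix} A_0 & A_1 & \cdots & A_d \\ -x_1 I_n & I_n & & \\ \vdots & & \ddots & \\ -x_d I_n & & & I_n \end{pmatrix}. \]
It has $n(d+1)$ columns and $m+nd$ rows, both $\poly(n)$, and it is computable from $M$ in deterministic polynomial time (one only reads off the coefficient matrices $A_i$).

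The rank is now computed by a single Schur‑complement step with respect to the constant invertible block $I_{nd}$ in the lower‑right corner: since the lower‑left block is the column‑stack of the matrices $-x_iI_n$, the Schur complement of $I_{nd}$ is $A_0 - [\,A_1\mid\cdots\mid A_d\,]\cdot(\text{stack of }-x_iI_n) = A_0 + \sum_{i=1}^d x_iA_i = M$, whence
\[ \rank(M') \;=\; nd + \rank(M). \]
Hence a deterministic polynomial‑time algorithm for $\text{PIT}(R_{\rm sparse})$ computes $\rank(M)=\rank(M')-nd$ for every symbolic matrix $M$ with affine entries, i.e.\ solves $\text{PIT}(R_{\rm affine})$ in deterministic polynomial time; by the completeness of $R_{\rm affine}$ this derandomizes PIT, so $R_{\rm sparse}$ is complete.

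The only non‑routine point is finding the gadget: off‑loading the entire coefficient matrix $A_i$ of $x_i$ into a dedicated column block $C_i$ and re‑coupling it to the original columns through rows of the form "$-x_i\e_j$ in $C_0$, $\e_j$ in $C_i$" is exactly what turns a row depending on all of $x_1,\dots,x_d$ into a family of rows each depending on a single variable, while the constant block $I_{nd}$ makes the rank bookkeeping exact (a one‑line Schur complement, so there is no real obstacle once the construction is in hand). An alternative route is to start from Valiant's branching‑program construction, whose symbolic matrix already has at most one variable per \emph{entry}, and apply the same column‑block trick to separate the (possibly several) variables occurring in a common row; the reduction above is cleaner, as it only invokes the already‑proved completeness of $R_{\rm affine}$.
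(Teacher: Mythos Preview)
Your proof is correct. The gadget is clean, the Schur complement computation is right, and the row‑by‑row check that every row of $M'$ lies in $R_{\rm sparse}$ (the top rows are constant, the binding rows depend on a single $x_i$) is exactly what is needed.

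The approach, however, is genuinely different from the paper's. The paper does not give its own argument: it simply observes that Valiant's original reduction (from arithmetic circuits/formulas to a symbolic determinant, via algebraic branching programs) already outputs a matrix whose rows each involve at most one variable, so completeness of $R_{\rm sparse}$ is immediate from inspecting that construction. Your route instead treats the completeness of $R_{\rm affine}$ as a black box and builds an explicit, self‑contained polynomial‑time reduction $\text{PIT}(R_{\rm affine})\to\text{PIT}(R_{\rm sparse})$ via the block matrix $M'$. What this buys you is modularity: you never need to open up Valiant's construction, and the same gadget would work starting from any other proof that $R_{\rm affine}$ is complete. What the paper's route buys is brevity (nothing to prove, just a citation and an observation). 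Your closing paragraph already acknowledges this alternative, which is nice.
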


We now turn to define the polynomial maps we will be interested in, and for which we will be able to provide efficient deterministic algorithms. Some motivation for interest in these maps will be given in the next two subsections.

Consider the following class $R_2$. Here $d=n$. Every $p\in R_2$ is of the form $\x\mapsto (A-A^T)\x$, where $A$ is a rank-1 matrix. While this family may look very special, we note that the problem of graph rigidity in $\R^2$ (for which a polynomial time algorithm is known but far from trivial) is a very special case of PIT($R_2$).\footnote{Moreover, the same family of rank-2, skew symmetric matrices is featured in a very different PIT problem: determining the maximum rank of a subspace generated by given such matrices. A deterministic polynomial time solution for this problem is given by Lovasz' celebrated matroid parity algorithm~\cite{Lov80} (see also~\cite{LoPl09}, Theorem 11.1.2).}

\begin{thm}\label{R2}
PIT($R_2$) can be solved in deterministic polynomial time, over a field $\K$ with sufficiently large characteristic 
(more precisely, when $\char(\K)$ is larger than the number of rows of the input matrix or $\char(\K)=0$).
\end{thm}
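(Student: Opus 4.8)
The plan is to reduce the computation of $\rank_{\K(\x)}M$ to submodular function minimization, through a min--max formula of Dilworth--Lov\'asz type. Write each row of a matrix $M\in\mathrm{PIT}(R_2)$ as $\omega_i\x$, where $\omega_i=u_iv_i^T-v_iu_i^T$ is skew-symmetric of rank at most $2$; set $W_i=\sp\{u_i,v_i\}$ and discard the zero rows (those with $W_i=0$). Let $E=\{1,\dots,m\}$ index the remaining rows and define $g\colon 2^E\to\mathbb{Z}_{\ge 0}$ by $g(S)=\dim\big(\sum_{i\in S}W_i\big)$; this is a monotone, submodular, integer-valued polymatroid rank function with $g(\{i\})=2$. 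The heart of the argument is the identity
\[
\rank_{\K(\x)}M\;=\;\min\Big\{\,\textstyle\sum_j\big(g(S_j)-1\big)\;:\;\{S_j\}\ \text{is a partition of}\ E\ \text{into nonempty parts}\,\Big\},
\]
whose right-hand side is the value at $E$ of the Dilworth truncation of $g$ --- itself a matroid rank function.

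The inequality ``$\le$'' is elementary. For generic $\x^*$ we have $\rank_{\K(\x)}M=\rank[\omega_i\x^*]_i$, where $[\omega_i\x^*]_i$ is the matrix with rows $\omega_i\x^*$. Now for any subspace $U\subseteq\K^n$, the vectors $\omega_i\x^*$ with $W_i\subseteq U$ all lie in $U\cap(\x^*)^\perp$ --- indeed $\omega_i\x^*\in W_i\subseteq U$, and $\omega_i\x^*\perp\x^*$ since $(\x^*)^T\omega_i\x^*=0$ by skew-symmetry --- so they span a space of dimension at most $\dim U-1$, provided $\x^*\notin U^\perp$. Taking $\x^*$ that also avoids $U_j^\perp$ for each of the finitely many subspaces $U_j:=\sum_{i\in S_j}W_i$ coming from a fixed partition $\{S_j\}$, and summing over the parts, gives $\rank_{\K(\x)}M\le\sum_j\big(\dim U_j-1\big)=\sum_j\big(g(S_j)-1\big)$; minimizing over partitions yields ``$\le$''. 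The reverse inequality ``$\ge$'' --- producing an $\x^*$ at which $[\omega_i\x^*]_i$ attains that value --- is the crux, and this is exactly the content of Lov\'asz' duality theorem for flats in matroids~\cite{Lov} (equivalently, the statement that the Dilworth truncation of a linearly represented polymatroid is again linearly represented, here via $\x\mapsto\omega_i\x$). I expect this to be the main obstacle. It is also where the hypothesis on $\char(\K)$ is used: the certifying point is built from a sufficiently generic specialization, and its existence --- together with the non-vanishing over $\K$ of the relevant maximal minor of $[\omega_i\x]_i$, a polynomial of degree at most $m$ in $\x$ --- is guaranteed once $\char(\K)$ exceeds the number of rows of $M$, or $\char(\K)=0$. (For the graph-rigidity instances of $\mathrm{PIT}(R_2)$, a standard change of variables makes $g(S)-1$ equal to $2|V(S)|-3$ on connected edge sets $S$, so the formula specializes to the Laman / Lov\'asz--Yemini rank formula for the planar generic rigidity matroid.)

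Granting the identity, the algorithm is immediate. Evaluating $g(S)$ is the computation of the rank over $\K$ of the matrix with rows $\{u_i,v_i:i\in S\}$, done in strongly polynomial time by Gaussian elimination directly from the entries of $M$. The value at $E$ of the Dilworth truncation of $g$ is then computable in deterministic strongly polynomial time: the truncation reduces to $O(|E|)$ minimizations of submodular functions derived from $g$, and submodular function minimization is strongly polynomial --- by the ellipsoid method of Gr\"otschel--Lov\'asz--Schrijver, or by the combinatorial algorithms of Schrijver and of Iwata--Fleischer--Fujishige. Composing the two steps gives a deterministic, strongly polynomial algorithm computing $\rank_{\K(\x)}M$, hence solving $\mathrm{PIT}(R_2)$ --- the promised instance of derandomization via submodular optimization.
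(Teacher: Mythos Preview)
Your outline follows exactly the paper's route: associate to each row the $2$-dimensional subspace $W_i=\sp\{u_i,v_i\}$, identify $\rank_{\K(\x)}M$ with $\rho_1(F)=\min_\Pi\sum_P(d(P)-1)$, and compute the latter by submodular optimization (Dilworth truncation). The ``$\le$'' direction you give is the paper's as well.

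There is, however, a real gap in how you dispatch the ``$\ge$'' direction. You attribute it to Lov\'asz~\cite{Lov}, but what Lov\'asz actually \emph{proves} there is the formula $d(F\cap h)=\rho_1(F)$ only for hyperplanes $h$ in a specific \emph{general position} sense (Definition~\ref{def:genp}); that a \emph{generic} hyperplane $h(\x)$ satisfies this condition is asserted in~\cite{Lov} without proof. Establishing this genericity statement (Theorem~\ref{generic} here) is precisely the paper's main technical contribution, and it is not short. So ``I expect this to be the main obstacle'' is exactly right, but the citation does not discharge it.

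Relatedly, your account of where the hypothesis on $\char(\K)$ enters is off. It is not that a maximal minor has degree $\le m$ and hence ``is nonzero once $\char(\K)>m$'' --- a formal polynomial being nonzero is independent of the characteristic. In the paper's proof of Theorem~\ref{generic} the characteristic restriction enters through Euler's identity $\sum_t x_t\,\partial p/\partial x_t = (\deg p)\cdot p$, used in an inductive degree-reduction argument on elements of the intersection $S$; one needs $\deg p\neq 0$ in $\K$, hence $\char(\K)>|F|$ (or $0$). Your degree-of-minor heuristic does not substitute for this.
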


This construction can be generalized as follows. Here we will generate PIT instances whose entries are {\em polynomials}, rather than linear functions of the variables.
For a $k$-dimensional tensor $A$ of size $n$, denote by $\hat A$ its ``anti-symmetric'' version, namely where for every entry $(i_1,\dots, i_k)$ we have 
$\hat A (i_1, \dots, i_k) = \sum_{\sigma \in S_k} \sgn (\sigma) A(i_{\sigma(1)}, \dots , i_{\sigma(k)})$.
Note that for $k=2$ we have $\hat A = A - A^T$. 

We now extend $R_2$, in which a matrix (namely a 2-dimensional tensor) acts on one vector of variables, to $R_k$, in which a $k$-dimensional tensor acts on $k-1$ vectors of variables.
Let $R_k$ denote the following class of (degree $k-1$) maps. Let $\x^1, \x^2, \dots , \x^{k-1}$ be $n$-vectors of independent variables, so altogether $\x = (\x^1, \x^2, \dots , \x^{k-1})$ is a vector of $(k-1)n$ variables. A $k$-tensor of size $n$ in each dimension acts on $\x$ simply with the $i$'th dimension acting on $\x^i$ for $i\in [k-1]$. The output of this action is a vector (along dimension $k$) of length $n$ of polynomials of degree $k-1$, each linear in $\x^i$ for all $i$. Define $R_k$ to be all maps defined by $\hat A$ for any rank-1 tensor $A$. Note that with this notation $R_2$ is precisely the class defined above.

Generalizing the above theorem we prove:
\begin{thm}\label{Rk}
For every $k<n$, PIT($R_k$) can be solved in deterministic polynomial time, over a field 
$\K$ with sufficiently large characteristic 
(more precisely, when $\char(\K)$ is larger than the number of rows of the input matrix or $\char(\K)=0$).
\end{thm}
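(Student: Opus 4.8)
The plan is to turn the symbolic rank of $M$ into a combinatorial quantity attached to the linear matroid spanned by the tensor factors, and then to evaluate that quantity by submodular function minimization.

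\textbf{Step 1 (an explicit form for each row).} For a rank-$1$ tensor $A=a^1\otimes a^2\otimes\cdots\otimes a^k$ with $a^1,\dots,a^k\in\K^n$ one has $\hat A(i_1,\dots,i_k)=\det\big(a^s_{i_t}\big)_{s,t\in[k]}$. Contracting the first $k-1$ dimensions against $\x^1,\dots,\x^{k-1}$ and using multilinearity of the determinant in its columns, the $j$-th output coordinate becomes $\det\big[\,Y\mid(a^s_j)_s\,\big]$, where $Y$ is the $k\times(k-1)$ matrix with entries $Y_{s,t}=\langle a^s,\x^t\rangle$ (linear forms in $\x$). Expanding along the last column, the row equals $\sum_{s=1}^k m_s\,a^s$, where $m_s(\x)$ is, up to sign, the $(k-1)\times(k-1)$ minor of $Y$ obtained by deleting row $s$. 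Hence every row of $M$ lies in the subspace $V_r:=\sp\{a^{1,(r)},\dots,a^{k,(r)}\}\subseteq\K^n$, of dimension at most $k$, and $M$ factors as $M=D\,B$ with $B$ the $(kR)\times n$ matrix of all factor vectors and $D$ an $R\times(kR)$ block-diagonal matrix of minor polynomials. One also checks that when the $k$ factors of a row are linearly dependent the row is identically zero, so after discarding such rows we may assume $\dim V_r=k$, and then, as $\x$ varies, the $r$-th row sweeps out all of $V_r$.

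\textbf{Step 2 (the rank equals an arrangement invariant).} Let $\mathcal N$ be the matroid on the row set $[R]$ in which $S\subseteq[R]$ is independent exactly when one can choose linearly independent representatives $v_r\in V_r$ for $r\in S$; its rank function is $\rank_{\mathcal N}(S)=\min_{T\subseteq S}\big(|S\setminus T|+\dim\sum_{r\in T}V_r\big)$. Since for every value of $\x$ the $r$-th row of $M$ lies in $V_r$, and the generic rank of $M$ equals its maximum over substitutions, $\rank_{\K(\x)}M\le\rank_{\mathcal N}([R])$. The substantive claim is the reverse inequality: the very structured rows — each the fixed tuple of maximal minors of $Y^{(r)}$, and all of them sharing the variables $\x$ — are nonetheless generic enough inside their subspaces to attain $\rank_{\mathcal N}([R])$. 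This is the exact analogue, and a strict generalization, of the classical but nontrivial fact that the rank of the rigidity matrix at a generic configuration equals the rank of the abstract rigidity matroid, and it is here that Lov\'asz' flats-in-matroids framework is used: one shows that a linear dependence among a set of rows, after clearing denominators, localizes to a dependence among the $a$-vectors lying inside some flat $F$ of the matroid $\mathcal M$ on $\{a^{s,(r)}\}$, and a counting argument on flats then matches the two bounds — Lov\'asz' duality theorem, which is what places the quantity in $NP\cap coNP$. The hypothesis on $\char\K$ enters precisely here: the relevant square submatrices of $M$, whose entries are the minors of the $Y^{(r)}$, expand by Cauchy--Binet into polynomials with integer coefficients bounded in terms of the number of rows, and we need these not to vanish modulo $\char\K$ for the generic rank to reach the combinatorial bound.

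\textbf{Step 3 (computing the invariant by submodular minimization).} The set function $g(T):=(R-|T|)+\dim\sum_{r\in T}V_r$ on $2^{[R]}$ is submodular — $T\mapsto R-|T|$ is modular and $T\mapsto\dim\sum_{r\in T}V_r$ is the rank function of a polymatroid — so $\rank_{\K(\x)}M=\min_{T\subseteq[R]}g(T)$ can be computed in strongly polynomial time by the Gr\"otschel--Lov\'asz--Schrijver ellipsoid method or by a combinatorial submodular-minimization algorithm (Schrijver, Iwata--Fleischer--Fujishige). Each oracle call evaluating $g(T)$ is a single ordinary rank computation over $\K$ for the matrix obtained by stacking the chosen factor vectors, hence itself strongly polynomial, and the algorithm also returns a minimizing set (equivalently, the minimizing flat of $\mathcal M$), supplying the $coNP$ certificate. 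This solves PIT($R_k$), since $R=\poly(n)$ and $k<n$.

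\textbf{Main obstacle.} The crux is Step 2 — establishing that the symbolic rank of $M$ equals $\rank_{\mathcal N}([R])$ rather than merely being bounded above by it. The difficulty is exactly that $\x$ is shared across all rows, so the coefficient vectors $m^{(r)}(\x)$ are not free; excluding spurious dependencies is what forces the flats machinery and the characteristic hypothesis into the argument, and it is the higher-dimensional, tensorial generalization of the nontrivial direction of the theory of generic rigidity. By comparison, the translation in Step 1 and the submodular optimization in Step 3 are routine once this equality is available.
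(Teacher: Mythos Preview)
Your Step 2 identifies the wrong combinatorial invariant, and consequently Step 3 minimizes the wrong function.

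You claim that the symbolic rank of $M$ equals $\rank_{\mathcal N}([R])=\min_{T\subseteq[R]}\big(|[R]\setminus T|+\dim\sum_{r\in T}V_r\big)$. This is Lov\'asz' formula for the span of \emph{independently} chosen generic points $x_r\in V_r$ (Theorem~\ref{lovpoints} in the paper), and it is false here precisely because of the obstacle you yourself flag: all rows are governed by the \emph{same} $\x$. A two-line counterexample for $k=2$: take two rows with the same factor pair, say $a^{1,(1)}=a^{1,(2)}=u$ and $a^{2,(1)}=a^{2,(2)}=v$ with $u,v$ independent. Then $V_1=V_2=\sp\{u,v\}$ is $2$-dimensional, so neither row is discarded. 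Your formula gives $\min_T g(T)=2$ (attained at $T=\emptyset$ or $T=\{1,2\}$). But both rows equal $(v^tu-u^tv)\x$, so $\rank_{\K(\x)}M=1$. The point is that when several $V_r$ crowd into a common subspace, the shared $\x$ forces their rows to be \emph{collectively} constrained, not just pairwise, and a single subset $T$ cannot capture this.

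The correct invariant is partition-based: the $r$-th row spans $V_r\cap h$ where $h=\{\x^1,\dots,\x^{k-1}\}^\perp$ is a generic subspace of codimension $k-1$ (your Step~1 computation is exactly Lemma~\ref{fcaphk}), and so $\rank_{\K(\x)}M=d(F\cap h)=\rho_{k-1}(F)=\min_{\Pi}\sum_{P\in\Pi}(d(P)-(k-1))$, the minimum over all \emph{partitions} $\Pi$ of $[R]$ (Theorems~\ref{Lovthm} and~\ref{codimk}). In the counterexample this gives $(2-1)=1$ for the one-block partition, which is correct. The hard direction --- that a generic $\x$ really achieves this --- is Theorem~\ref{generic}, and this is where the characteristic hypothesis is used.

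This also means your Step~3 is insufficient: $\rho_c$ is not a submodular function of a subset but the \emph{Dilworth truncation} of one, and minimizing it requires the incremental scheme of Section~\ref{sec:alg} (or Frank--Tardos), which calls submodular minimization once per element. So the overall architecture (reduce to a combinatorial quantity, then optimize) is right, but the quantity and the optimization are both one level more intricate than what you wrote.
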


\subsection{Graph Rigidity} 
%\Onote{I changed this subsection to be over $\R$. The reason is that I'm not sure that the 
%property in the following paragraph is a generic property and how to prove the equivalence to a rank of a matrix over a general field.}

The problem of graph rigidity arises from several motivations, originally, mechanical engineering (see \cite{Lam70}). 
Rigidity theory is a fast-growing area, and we refer the interested reader to \cite{SJS} for more background and recent approaches.
Graph rigidiy has several versions, 
we describe perhaps the most common one, {\em generic} rigidity. It is supposed to capture the structural rigidity of a ``bars and joints'' framework described by a graph. We will not be formal here as precise definitions can be found e.g. in \cite{AR1}. Here the relevant field for the geometric/physical interpretation is the Real numbers $\R$, and we use it in this subsection as in other papers on this problem (although the algebraic formulation is meaningful for every field $\K$). 

Let $G(V,E)$ be an undirected graph on $n$ vertices and $m$ edges. An {\em embedding} of $G$ in $\R^t$ is  a map $\phi: V \rightarrow \R^t$. An embedding of $G$ is called {\em rigid} if there is no perturbation of the vertex positions which preserves all edge lengths, other than the rigid motions of $\R^t$. The graph $G$ is called {\em rigid} if every {\em generic} embedding of $G$ is rigid (equivalently, if there exists an embedding of $G$ which is rigid, see~\cite{AR1}). The main question is to determine if a  given graph $G$ is rigid (and more generally, compute the dimension of the non-rigid motions of a generic embedding, in case $G$ is not rigid).

An extremely convenient formulation of the problem (as a PIT) is the following. Let $x_{v,j}$ be a set of variables indexed by $v\in V$ and $j\in [t]$. The intuition is that $(x_{v,1}, \dots , x_{v,t})$ are the coordinates of a generic embedding of the vertex $v$ in $\R^t$. Given $G$, construct a symbolic matrix $M_{G,t}$ of dimensions $m\times nt$, which may be viewed as a concatenation of $t$ matrices, one for each dimension $j\in [t]$.
Every row corresponds to an edge $\{u,v\}\in E$, and for each $j$, the column $u,j$ contains the entry $x_{u,j}-x_{v,j}$, whereas the column $v,j$ contains the the negation $x_{v,j}-x_{u,j}$.

It is not hard to prove that the rank (as usual, over $\R(x)$)
%, when $\K$ is large enough, e.g. here larger than $n^2$.
 of $M_{G,t}$ determines  if $G$ is rigid, and indeed the dimension of non-rigid motions (see \cite{AR1} for the details).
It is easy to see that for every graph $G$, the matrix $M_{G,2}$ is in the class $PIT(R_2)$ 
above. 
Indeed, let $e_1,\ldots, e_{2n}$ denote the standard basis vectors in $\R^{2n}$.
For some $u<v\in[n]$, put $a=e_u-e_v$ and $b=e_{n+u}-e_{n+v}$. Consider the matrix $A=A_{u,v}:=a^tb$. Then $(A-A^t)\x$, where $\x=(x_{21},\ldots,x_{2n}, x_{11},\ldots,x_{1n})$ is the $\{u,v\}$ row of $M_{G,2}$. 
Thus Theorem~\ref{R2} yields as a corollary a polynomial time algorithm
to determine whether a given graph $G$ is rigid in $\R^2$.
%, for appropriate $\K$.
%\footnote{Laman's characterization and algorithm was done for the Reals, $\K =\R$, but it can be easily seen to work for every large enough field $\K$}.
%\begin{thm}
%There is a deterministic polynomial time algorithm for testing graph rigidity in the plane ($t=2$) over any large enough field.
%\end{thm}
Such algorithms for rigidity in $\R^2$ are known (see \cite[Section 2.2]{Hen92} and references therein).
Note that the matrices $M_{G,t}$ make sense over any field $\K$, instead of $\R$, and Theorem~\ref{R2} in fact provides a deterministic polynomial time algorithm to compute the rank of these matrices   over any field $\K$ with large enough characteristic.
% but the analogue question for rigidity in $\R^3$ is to the best of our knowledge open.

The symbolic matrix representation above shows that for every $t$, the problem of testing graph rigidity in $\R^t$ is in $BPP$, and it is a decades-old problem to whether it is also in $P$, even for the case $t=3$.

Lov\'asz and Yemini~\cite{LoYe} have developed an alternative approach for studying graph rigidity in the plane, which obtains a somewhat finer characterization of rigidity than Laman's. What is even more interesting is their method. They show that the matrices $M_{G,2}$ can actually be obtained in the following way. First, with every edge $\{u,v\}$ associate a certain $2$-dimensional subspace $f_{u,v} \subset \R^{2n}$. The intersection of this subspace $f_{u,v}$ with a {\em generic} hyperplane through the origin (of which the normal can be viewed essentially as the $2n$-vector of variables $x_{v,j}$) yields the $\{u,v\}$ row of $M_{G,2}$. In more detail, identify the vertices of $G$ with the set $V=[n]$, and let $e_1,\ldots,e_{2n}$ denote the standard basis in $\R^{2n}$. Define $f_{u,v}$ to be the subspace of $\R^{2n}$ spanned by the pair of vectors $e_u-e_v$ and $e_{n+u}-e_{n+v}$ (note that the definition of $f_{u,v}$ is symmetric in $u,v$). 
Let $h(\x)$ denote the subspace of $\R^{2n}$ orthogonal to the vector $\x= (y_1,\ldots,y_n,-x_1,\ldots,-x_n)$. It is not hard to verify (see \cite{LoYe} for the details) that $h(\x)\cap f_{u,v}$ is spanned by the $\{u,v\}$ row of $M_{G,2}$. Thus, for a generic  $\x$, we have
$$
\rk M_{G,2}=\dim \sp\{h(\x)\cap f_{u,v}\mid \{u,v\}\in E\}.
$$

Thus, the question of computing the rank of $M_{G,2}$ becomes the question of computing the dimension of the span of the resulting intersections (which here are simply lines) with a {\it generic} hyperplane. 
To analyze this, Lov\'asz and Yemini use a theory developed by Lov\'asz~\cite{Lov} which studies a similar problem for an arbitrary family of subspaces.
The relevant part of Lov\'asz's theory is introduced in the next subsection.

The idea of \cite{LoYe} can be applied also to rigidity in higher dimensions.
For simplicity of the presentation, let us consider only the case $t=3$. In this case we associate with each edge $\{u,v\}\in E$ a 3-dimensional subspace $g_{u,v}$ of $\R^{3n}$. Namely, the subspace spanned by the vectors $e_u-e_v$, $e_{n+u}-e_{n+v}$, $e_{2n+u}-e_{2n+v}$, where here $e_1,\ldots, e_{3n}$ stand for the standard basis of $\R^{3n}$.
Let $\x=(x_1,\ldots,x_n,y_1,\ldots,y_n,z_1,\ldots,z_n)$ and define $\tilde h(\x)$ to be the (codim 2) subspace of  $\R^{3n}$ orthogonal to the pair of vectors
$$(y_1,\ldots,y_n,-x_1,\ldots,-x_n,0,\ldots,0)$$
$$
(z_1,\ldots,z_n,0,\ldots,0,-x_1,\ldots,-x_n).
$$
It is not hard to verify that $\tilde h(\x)\cap f_{u,v}$ is one dimensional and spanned by the $\{u,v\}$ row of $M_{G,3}$. Thus, for a generic choice of $\x$, we have
$$
\rk M_{G,3}=\dim \sp\{\tilde h(\x)\cap f_{u,v}\mid \{u,v\}\in E\}.
$$
A crucial difference from the case $t=2$ is that here a generic choice of $\x$ does not yield a generic codim 2 subspace $\tilde h(\x)$ of $\R^{3n}$. From the perspective of this method and of our paper, this is ``the reason'' why rigidity in higher dimensions is more challenging.

\subsection{Subspaces and generic hyperplanes}

Let $F$ be a collection of subspaces in $\K^d$. Let $h$ be a generic hyperplane in $\K^d$, which without loss of generality can be taken to be all vectors perpendicular to $\x = (x_1, \dots x_d)$. For each subspace $f\in F$, let $f' = f\cap h$. Now consider the space spanned by the subspaces in $F':=\{f'\mid f\in F\}$ (note that the flats in $F'$ are functions of $\x$). The question is, what is the dimension of $\sp(F')$?

One of the major results of Lov\'asz' paper~\cite{Lov} is a  formula, called $\rho(F)$ (which we redefine in Section~\ref{sec:rho}), that determines this dimension for every family of subspaces, and for $\x$ satisfying a certain
``general position'' condition (see Definition~\ref{def:genp}). 
To show that a {\it generic} $\x$ satisfies Lov\'asz's general position condition  over any field (with large enough characteristic) is one main result of our paper (see Section~\ref{sec:generic}).
Note that this fact is mentioned (over the field $\R$) in \cite{Lov} with no proof. This fact is again mentioned\footnote{In Tanigawa~\cite{Tan12} an alternative general position condition is suggested, to supposedly correct a mistake in Lov\'asz's paper. However we find the counter example in \cite[footnote on p. 1416]{Tan12} false. We provide a full and detailed proof of Lov\'asz's formula in Section~\ref{sec:Lov}.}
 and applied, again with no proof, in Tanigawa~\cite{Tan12}. 
We see our paper as contributing to the completeness of these results.

When the subspaces $F$ are derived from a graph in the manner described above to generate the rigidity matrix, Lov\'asz and Yemini~\cite{LoYe} write the explicit special case of the formula $\rho(F)$, which yields an elegant characterization.
For the general case of an arbitrary family of subspaces $F$, the formula is given as the minimum, over all possible partitions of the family, of a certain easily computable function. As the number of partitions is exponential, there is no obvious efficient way of computing $\rho$. 
We have recently learned that the problem of computing $\rho$ is a special case of minimizing, over all partitions of a set $S$, the {\it Dilworth truncation} of a given submodular function $f$ defined over $S$; 
a strongly polynomial algorithm for this problem is given in Frank and Tardos~\cite[Chapters II.1 and IV.3]{FT88}. In our paper we introduce an alternative\footnote{Our algorithm seems different than the one in \cite{FT88}, as it does not use duality.} strongly polynomial algorithm for computing $\rho$, by reducing the original problem to a minimization problem of a certain submodular function. 
In fact, we prove our result to a more general quantity $\rho_c(F)$, introduced in Section~\ref{sec:rho}. (Note that $\rho(F)=\rho_1(F)$ is the quantity from \cite{Lov}.)
\begin{thm}\label{main}
There is a deterministic, strongly polynomial time algorithm to compute $\rho_c$ for every real number $c$.
\end{thm}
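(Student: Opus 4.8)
\noindent\emph{Proof proposal.} The plan is to unwind $\rho_c$ into a minimization over \emph{partitions} of a submodular set function, to recognize this as a Dilworth-truncation problem, and to solve it by invoking a strongly polynomial algorithm for submodular function minimization (SFM). For a finite family $F=\{f_1,\dots,f_m\}$ of subspaces of $\K^d$ and $S\subseteq F$ write $U_S:=\sum_{f\in S}f$ and $g(S):=\dim U_S$. The definition of $\rho_c$ in Section~\ref{sec:rho} should unpack (as in \cite{Lov} for $c=1$) to an expression of the form
$$\rho_c(F)\ =\ \min_{\mathcal{P}}\ \sum_{P\in\mathcal{P}}\bigl(g(P)-c\bigr),$$
the minimum ranging over all partitions $\mathcal{P}$ of $F$ into nonempty blocks. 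Two facts are immediate. First, $g$ is monotone, integer valued, bounded by $d$, and evaluable by one Gaussian elimination over $\K$, hence strongly polynomially. Second, $g$ is \emph{submodular}: since $U_S+U_T=U_{S\cup T}$ and $U_S\cap U_T\supseteq U_{S\cap T}$, one gets $g(S)+g(T)=\dim(U_S+U_T)+\dim(U_S\cap U_T)\ge g(S\cup T)+g(S\cap T)$. As a warm-up, subadditivity of $g$ shows the trivial partition is optimal when $c\le 0$ (so $\rho_c(F)=g(F)-c$), and the all-singletons partition is optimal when $c\ge\max_i\dim f_i$; the content lies between these thresholds.

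To compute $\min_{\mathcal{P}}\sum_{P}(g(P)-c)$, the Dilworth truncation of $g-c$ at $F$, I would run an incremental scheme in the spirit of the Dilworth-truncation algorithm of Frank and Tardos~\cite{FT88}: fix an order $f_1,\dots,f_m$ and, for $i=1,\dots,m$, maintain an optimal partition $\mathcal{P}_i$ of $\{f_1,\dots,f_i\}$ with value $V_i$, starting from $V_0=0$. The key lemma, proved by submodular uncrossing of blocks, is that some optimal partition of $\{f_1,\dots,f_i\}$ is obtained from $\mathcal{P}_{i-1}$ by choosing a subcollection $\mathcal{Q}\subseteq\mathcal{P}_{i-1}$, replacing its blocks by the single block $\{f_i\}\cup\bigcup\nolimits_{Q\in\mathcal{Q}}Q$, and leaving the remaining blocks of $\mathcal{P}_{i-1}$ untouched. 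Hence
$$V_i\ =\ V_{i-1}\ -\ c\ +\ \min_{\mathcal{Q}\subseteq\mathcal{P}_{i-1}}\Bigl[\,g\bigl(\{f_i\}\cup\bigcup\nolimits_{Q\in\mathcal{Q}}Q\bigr)\ -\ \sum\nolimits_{Q\in\mathcal{Q}}g(Q)\ +\ c\,|\mathcal{Q}|\,\Bigr],$$
and the bracketed expression is a submodular function of $\mathcal{Q}$ on the ground set $\mathcal{P}_{i-1}$: it is the composition of $g$ with the block-union map $\mathcal{Q}\mapsto\{f_i\}\cup\bigcup\nolimits_{Q\in\mathcal{Q}}Q$ (which is submodular in $\mathcal{Q}$, the blocks being disjoint), plus the modular terms $-\sum_{Q}g(Q)$ and $c|\mathcal{Q}|$. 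Thus step $i$ is one SFM call on a ground set of size $<m$ with a value oracle that costs one rank computation, and $m$ such calls produce $\rho_c(F)=V_m$ together with an optimal partition. A more economical alternative — the route I would ultimately take, which (unlike \cite{FT88}) needs no LP duality — is to package the entire computation as the minimization of a \emph{single} auxiliary submodular function on a suitable ground set whose minimum equals $\rho_c(F)$. Either way, feeding the resulting instance into a strongly polynomial SFM subroutine (Gr\"otschel--Lov\'asz--Schrijver; Schrijver; Iwata--Fleischer--Fujishige) and noting that all intermediate quantities stay bounded by $\poly(m,d)$ plus $O(m)$ copies of $c$ yields the strongly polynomial bound of Theorem~\ref{main}.

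The step I expect to be the main obstacle is the reduction itself: establishing the block-coarsening structure of near-optimal partitions, or — for the single-function version — exhibiting the auxiliary submodular function and proving that its minimum is exactly $\rho_c(F)$. This is where submodularity of $g$ is actually used, through an uncrossing argument, and where one must check that blocks remain nonempty and that the ``$-c$ per block'' term does not destroy the submodularity of the functions being minimized (it does not, since it is modular). Two minor points also need care: that the value oracle for $g$ is genuinely strongly polynomial over the relevant field $\K$ (Gaussian elimination on explicit matrices of subspace generators suffices), and that the parameter $c$, an arbitrary real, enters only through bounded modular shifts, so the standard strongly polynomial SFM algorithms apply verbatim with $c$ treated as a unit-cost real constant.
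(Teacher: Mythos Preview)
Your proposal is correct and follows essentially the same route as the paper: an incremental scheme inserting the $f_i$ one at a time, with each insertion reduced to a single submodular-minimization call, the key structural fact being that the new optimal partition is obtained from the old one by merging a subcollection of blocks with the incoming element (the paper's Lemma~\ref{insert1} and Corollary~\ref{monotone}, your ``uncrossing'' lemma). The only cosmetic difference is that the paper first replaces $F_{i-1}$ by $\hat F_{i-1}$ (the spans of the current blocks), so that the maintained partition is always the singleton partition and the per-step SFM is over subsets of $\hat F_{i-1}$ rather than over subcollections of blocks; this is equivalent to your formulation.
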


Closing this circle, we will also prove that the problem of computing $\rho_1$ is {\em equivalent} to PIT($R_2$).
This will yield Theorem~\ref{R2} as a corollary to Theorem~\ref{main}.

%We conclude with saying a few words regarding the proof of Theorem~\ref{main}. As $\rho_c$ actually computes the dimension of a certain family of subspaces $F'$, it is easy to see that it is a submodular function (over the set system of all subfamilies of $F$). This suggests using the tools of submodular optimization. The main problem is that we do not have access to $F'$, and that it is a minimization problem over {\em partitions} of $F$. The heart of the algorithm, and its complexity analysis, is the ability to focus on a single part of the (eventually) optimal partition, and use submodularity. For this we use and extend the structural theory developed in \cite{Lov}.

\subsection{Related works and applications}
We see our result as a step towards better understanding of the algorithmic aspects of the notions and formulas introduced in Lov\'azs \cite{Lov} and their applications. 

Let us mention one related concept studied in Lov\'asz~\cite{Lov} and discuss follow-up work by Tanigawa~\cite{Tan12}, which is related to Theorem~\ref{Lovthm} proved in this paper. It would be interesting to find efficient algorithms for the natural computational problem at hand. 
The reader may skip this subsection at first reading.

Let $F$ be a finite family of subspace in $\K^d$ (where $\K$ is a field of characteristic $0$). 
Let $X=\{x_f\mid f\in F\}$ be a collection of points in $\K^d$ such that $x_f\in f$ for each $f\in F$.
The set $X$ is said to be in {\it general position} with respect to $F$ if, for every $f\in F$ fixed, the following holds: Any subspace spanned by members of $F$ and points of $X\setminus\{x_f\}$ containing $x_f$ must contain the whole flat $f$.
Lov\'asz shows that there exists a choice of a set $X$ in general position with respect to any given family $F$. He then proves the following formula:
\begin{thm}[{\bf Lov\'asz~\cite{Lov}}]\label{lovpoints}
Let $F$ be a finite family of subspace in $\K^d$, and let $X=\{x_f\mid f\in F\}$ be in general position with respect to $F$. Then 
$$
\rk(\sp X)=\min_{G\subseteq F} \left\{\rk(\sp\bigcup G)+|F\setminus G|\right\}
$$
\end{thm}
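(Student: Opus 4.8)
The plan is to prove the two inequalities separately. The ``$\le$'' direction is immediate: for any $G\subseteq F$, writing $X=\{x_f\mid f\in G\}\cup\{x_f\mid f\in F\setminus G\}$, the first part lies in $\sp\bigcup G$ and the second has at most $|F\setminus G|$ elements, so $\rk(\sp X)\le\rk(\sp\bigcup G)+|F\setminus G|$; minimizing over $G$ gives the bound. It therefore remains to produce, for $X$ in general position, a single $G\subseteq F$ with $\rk(\sp\bigcup G)+|F\setminus G|\le\rk(\sp X)$.

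For this I would induct on $|F|$, the case $F=\emptyset$ being trivial. The only place general position is used is the following observation: if $f\in F$ satisfies $f\not\subseteq\sp X$, then $x_f\notin\sp(X\setminus\{x_f\})$ --- otherwise $\sp(X\setminus\{x_f\})$ is a subspace spanned by points of $X\setminus\{x_f\}$ and no members of $F$ that contains $x_f$, hence by general position it contains $f$, giving $f\subseteq\sp(X\setminus\{x_f\})\subseteq\sp X$, a contradiction. Now split into cases. If $\sp X=\sp\bigcup F$ (equivalently, every $f\in F$ lies in $\sp X$), take $G=F$: then $\rk(\sp\bigcup F)+0=\rk(\sp X)$. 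Otherwise some $f_0\in F$ has $f_0\not\subseteq\sp X$, so by the observation $x_{f_0}\notin\sp(X\setminus\{x_{f_0}\})$ and $\rk(\sp X)=\rk(\sp(X\setminus\{x_{f_0}\}))+1$. Since $F\setminus\{f_0\}\subseteq F$ and $(X\setminus\{x_{f_0}\})\setminus\{x_f\}\subseteq X\setminus\{x_f\}$ for every $f$, the set $X\setminus\{x_{f_0}\}$ is again in general position with respect to $F\setminus\{f_0\}$, so the inductive hypothesis yields $G'\subseteq F\setminus\{f_0\}$ with $\rk(\sp\bigcup G')+|(F\setminus\{f_0\})\setminus G'|=\rk(\sp(X\setminus\{x_{f_0}\}))$. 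Taking $G:=G'$ as a subset of $F$ and using $f_0\notin G'$ gives $\rk(\sp\bigcup G)+|F\setminus G|=\rk(\sp\bigcup G')+|(F\setminus\{f_0\})\setminus G'|+1=\rk(\sp X)$, which finishes the induction.

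I do not expect a serious obstacle in proving the formula itself once general position is invoked at the right spot; the parts needing care are the definition-chase showing that deleting a subspace together with its point preserves general position, and the bookkeeping identity $|F\setminus G'|=|(F\setminus\{f_0\})\setminus G'|+1$, which rests on $f_0\notin G'$. The genuinely substantive companion fact --- that some (indeed a generic) $X$ is in general position, and that this persists over fields of large characteristic --- is a separate matter handled elsewhere in the paper. Alternatively, one can view the statement through polymatroid truncation: $h(G):=\rk(\sp\bigcup G)$ is a polymatroid rank function on $F$, and the formula asserts that a generic one-vector-per-subspace selection realizes its rank-one truncation matroid $r(S)=\min_{T\subseteq S}\{h(T)+|S\setminus T|\}$; but the self-contained induction above seems cleaner in this context.
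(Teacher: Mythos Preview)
The paper does not actually supply a proof of this theorem: it is stated in Section~1.4 as a result of Lov\'asz~\cite{Lov} and used as background for Tanigawa's application, with no argument given in the present paper. So there is nothing here to compare your proof against.

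That said, your argument is correct and self-contained. The ``$\le$'' direction is as you say immediate. For ``$\ge$'', your key observation---that $f\not\subseteq\sp X$ forces $x_f\notin\sp(X\setminus\{x_f\})$---is exactly an instance of the general-position hypothesis (applied with the empty collection of subspaces and the full set $X\setminus\{x_f\}$ of points), and the induction on $|F|$ goes through cleanly: removing $f_0$ and $x_{f_0}$ preserves general position because any witnessing span for the smaller family is a fortiori a witnessing span for the original one, and the bookkeeping $|F\setminus G'|=|(F\setminus\{f_0\})\setminus G'|+1$ is just $f_0\notin G'$. Your remark that this is the rank-one truncation of the polymatroid $h(G)=\rk(\sp\bigcup G)$ is also on point and matches how the paper frames Dilworth truncation elsewhere; the inductive argument you give is the standard clean route to this particular formula.
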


An interesting application of Theorem~\ref{lovpoints} to the body-rod-bar rigidity problem is obtained by Tanigawa~\cite{Tan12}.
A {\it body-rod-bar framework} in $\R^d$ is
defined as a structure consisting of $d$-dimensional subspaces (bodies) and $(d-2)$-dimensional flats (rods) mutually linked by one-dimensional lines (bars). (The term ``rod'' is appropriate for $d=3$.)
More formally, a $d$-dimensional body-rod-bar-framework is a triple $(G,q,r)$, where 
$G=(V=B\cup R, E)$ is a graph, $r:R\to {\rm Gr}(d-1,\R^{d+1})\subset \mathbb{P}(\bigwedge^{d-1} (\R^{d+1}))$ is the {\it rod-configuration} mapping a vertex $v\in R$ to a $(d-1)$-dimensional subspace $r_v$ of $\R^{d+1}$, and $q:E\to {\rm Gr}(2, \R^{d+1})\subset \mathbb{P}(\bigwedge^2 (\R^{d+1}))$ is the {\it bar-configuration} mapping an edge $e\in E$ to a 2-dimensional subspace $q_e$ in $\R^{d+1}$, such that
$$
\text{$q_e$ and $r_v$ have a nonzero intersection,
whenever $v\in R$ is a vertex of $e$;}
$$
equivalently, 
$$
\text{$q_e\cdot r_v=0$, 
whenever $v\in R$ is a vertex of $e$,}
$$
where here the dot product should be interpreted appropriately (see \cite{Tan12} for the details).
Assume also that $r(u)\neq r(v)$ for every $u\neq v\in R$.

An {\it infinitesimal motion} of $(G,q,r)$ is a mapping $m:B\cup R\to \bigwedge^{d-1}(\R^{d+1})$ such that 
\begin{equation}\label{indep}
q_e\cdot (m(u)-m(v))=0, ~~\text{for every $e=\{u,v\}\in E$.}
\end{equation}
 An infinitesimal motion $m$ is called {\it trivial} if either $m(u)=m(v)$ for all $u,v\in V$, or if, for some fixed $v_0\in V$ we have $m(v_0)=r_{v_0}$ and $m(v)=0$ for every $v\in V\setminus \{v_0\}$.
Finally, a framework $(G,q,r)$ is called {\it infinitesimally rigid} if every infinitesimal motion is trivial.

The body-rod-bar problem gives rise to a matroid ${\rm BR}(G, q, r)$ defined on the edge set $E$ whose rank is the maximum size of independent linear equations in \eqref{indep} (for unknown m). From the definition, $(G, q, r)$ is infinitesimally rigid if and only if the rank of ${\rm BR}(G, q, r)$ is 
$
\tbinom{d+1}{2}|V|- (\tbinom{d+1}{2}+ |R|)$.

\begin{thm}[{\bf Tanigawa~\cite[Corollary 4.13]{Tan12}}]\label{tanig}
Let $G=(B\cup R, E)$ and suppose $d\ge 3$.
Then, for almost all bar-configurations $q$ and almost all rod-configurations $r$ we have
$$
\rk(E)=\min_{\Pi=\{F_0,\ldots,F_k\}}\left\{|F_0|+ \sum_{i=1}^k\left(\tbinom{d+1}{2}(V(F_i)-\tbinom{d+1}{2}-R(F_i)\right)\right\},
$$
where the minimum is taken over all partitions $\Pi$ of $E$.
\end{thm}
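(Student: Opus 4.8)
The plan is to deduce this statement from Lov\'asz's formula (Theorem~\ref{lovpoints}) by exhibiting the body-rod-bar matroid ${\rm BR}(G,q,r)$ as the matroid of a family of subspaces together with a set of points in general position. First I would set $\K = \R$ and work inside the exterior power $W := \bigwedge^{d-1}(\R^{d+1})$, which has dimension $D := \binom{d+1}{2}$; an infinitesimal motion is a vector in $W^{V}$, and each equation \eqref{indep} is one linear constraint on the pair $(m(u),m(v))$. The key observation is that for a fixed edge $e=\{u,v\}$ and a generic bar-configuration $q_e$, the single equation $q_e\cdot(m(u)-m(v))=0$ should be realized as the intersection of a fixed subspace (coming only from the combinatorics of which two vertex-blocks are involved, i.e. the $2D$-dimensional coordinate subspace $W_u \oplus W_v \subseteq W^V$) with a generic hyperplane determined by $q_e$. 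In other words, associate to each vertex $v$ the coordinate subspace, and to each edge $e$ a subspace $f_e$ of the appropriate dimension sitting in $W_u\oplus W_v$, and to each edge a point $x_{f_e}$ recording the linear functional $q_e$; then the rank of the system \eqref{indep} becomes $\rk(\sp X)$ in the language of Theorem~\ref{lovpoints}, up to an additive constant accounting for the trivial motions (the quotient by the $\binom{d+1}{2}$-dimensional space of global motions plus the $|R|$ constraints pinning the rod vertices — this is exactly where the $-\binom{d+1}{2}-R(F_i)$ terms in the formula originate, and where the ``$+|F_0|$'' absorbs the edges not yet selected).

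Second, I would run Theorem~\ref{lovpoints}: for $X$ in general position with respect to the family $F = \{f_e : e\in E\}$,
$$
\rk(\sp X) = \min_{H \subseteq F}\left\{ \rk\left(\sp\bigcup H\right) + |F\setminus H| \right\}.
$$
The term $|F\setminus H|$ becomes the $|F_0|$ in the target formula (edges left out of the ``dense'' part), and $\rk(\sp\bigcup H)$ must be shown to decompose, over the connected components $F_1,\dots,F_k$ of the subgraph spanned by $H$, as $\sum_i \big(\binom{d+1}{2}(V(F_i)) - \binom{d+1}{2} - R(F_i)\big)$. For a single connected component this is a direct dimension count: the relevant subspaces span all of $\oplus_{v\in V(F_i)} W_v$ except for one copy of $W$ (the diagonal / global motion, contributing $-\binom{d+1}{2}$) and except for one dimension per rod vertex in the component (the rod constraint, contributing $-R(F_i)$); across components the spans are in direct sum by the coordinate structure, so the ranks add. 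Combining, $\min_H$ over subsets is the same as $\min_\Pi$ over partitions after grouping the deleted edges into $F_0$ and the components of the rest into $F_1,\dots,F_k$.

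Third, I would handle the two technical points. One is that a generic bar-configuration $q$ (and generic rod-configuration $r$) yields a point set $X$ that is in general position with respect to $F$ in Lov\'asz's precise sense — this should follow because general position is a Zariski-open condition and Lov\'asz guarantees it is nonempty, so ``almost all'' $q,r$ work; I would cite Theorem~\ref{lovpoints}'s existence statement together with the standard density argument (and, if needed, the generic-realizability results proved in Section~\ref{sec:generic} of this paper for the underlying field). The other is matching the trivial-motion bookkeeping: I must check that the span of the trivial motions has dimension exactly $\binom{d+1}{2} + |R|$ under the hypothesis $r(u)\ne r(v)$ for distinct rod vertices, so that $\rk(E) = \binom{d+1}{2}|V| - \rk(\text{space of trivial motions orthogonal complement})$ aligns with $\rk(\sp X)$ plus the right constant. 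The main obstacle I expect is precisely this second point — getting the additive constants and the $-\binom{d+1}{2}-R(F_i)$ per-component corrections to come out exactly right, i.e. verifying that the subspace family built from $(G,q,r)$ has the span-rank behavior claimed on every connected subgraph, including degenerate cases where a component contains few or no rod vertices. The translation itself (edges $\leftrightarrow$ subspaces $f_e$, $q_e\leftrightarrow$ points) is routine once the ambient space $\bigwedge^{d-1}(\R^{d+1})$ is fixed; the accounting is where care is needed.
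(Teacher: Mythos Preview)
Your reduction to Theorem~\ref{lovpoints} alone is incomplete, and the gap is exactly where you flagged it as ``routine accounting'': the contribution $-R(F_i)$ from the rod vertices does \emph{not} fall out of a static dimension count on the subspaces $f_e$. The bars are not free --- the framework definition requires $q_e\cdot r_v=0$ whenever $v\in R$ is an endpoint of $e$, so the point $x_{f_e}$ you associate to an edge is constrained to lie in $f_e\cap h_r(u)\cap h_r(v)$, where $h_r(v)$ is a hyperplane determined by the rod configuration $r$. Your Zariski-density argument for general position therefore does not apply directly: the points are generic only inside these intersected flats, and those flats themselves depend on the (generic but not arbitrary) parameter $r$.

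The paper's description of Tanigawa's proof reflects precisely this two-layer structure: one first builds the family $F=\{f_e\cap h_r(u)\cap h_r(v)\mid e=\{u,v\}\in E\}$ and applies Theorem~\ref{lovpoints} to the bar points $q_e$; then, to evaluate $\rk\!\left(\sp\bigcup G\right)$ for a subfamily $G\subseteq F$, one must understand how intersecting the fixed flats $f_e$ with the rod hyperplanes $h_r(v)$ affects the span. That second step is a (relaxed) instance of Theorem~\ref{Lovthm}, applied iteratively --- adding the hyperplanes $h_r(v)$ one at a time --- and it is what produces the $-R(F_i)$ term. Your proposal skips this ingredient entirely; without it, you have no mechanism to justify that a generic rod configuration drops exactly one dimension per rod vertex in each component, nor that the constrained bars are then in Lov\'asz general position inside the resulting flats.
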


Tanigawa's proof is a nice combination of Theorem~\ref{lovpoints} with the other  result of Lov\'asz mentioned in the introduction, cited below  as Theorem~\ref{Lovthm}. Briefly, the first (simpler) step in the proof is to reduce the problem to the form of Theorem~\ref{lovpoints}. That is, a family of flats $F$ is introduced, and the question becomes to find the rank of a generic set of points $X=\{x_f\mid f\in F\}$.
 The family $F$ resulted from the reduction can be described as follow: Each edge $e=\{u,v\}$ of $G$ is associated with some fixed subspace $f_e$ in $\left(\mathbb{P}(\bigwedge^2(\R^{d+1}))\right)^{|V|}$. Then 
 $F=\{f_e\cap h(u)\cap h(v)\mid e=\{u,v\}\in E\}$, where $h_r(u),h_r(v)$ are subspaces depending on the choice of rod configuration $r$.
Since $r$ is taken generically, this imposes some genericity on the subspaces $h_r(v)$, but they are not exactly generic. The proof is then complete by proving a relaxed version of Theorem~\ref{Lovthm}, and adding the subspaces $h_r(v)$ one after the other.

For more recent applications of \cite{Lov,LoYe} see Tanigawa~\cite{Tan12, Tan15}.

\subsection{Organization of this paper}

In Section~\ref{sec:rho} we introduce the function $\rho_c(F)$, which is the main object of this study.
The rest of the paper has two separate parts. The first, in Sections \ref{sec:proprho} and \ref{sec:alg}, describes the algorithm to compute $\rho_c$. In Section~\ref{sec:proprho}, we present and prove properties of the function $\rho_c$. Using these properties we describe, in Section~\ref{sec:alg},  a deterministic strongly polynomial time algorithm that computes $\rho_c$ over every field via submodular optimization.  Note that, as there is an alternative algorithm \cite{FT88} in the literature to efficiently compute functions like $\rho_c$, this part can be skipped.

The second part, in Sections \ref{sec:Lov}, \ref{sec:reduction}, and \ref{sec:generic}, describes the genericity proof of $\rho$.
In Section~\ref{sec:Lov}, we state (and reprove) the result of Lov\'asz~\cite{Lov} above, relating $\rho_1$ to the intersection of $F$ with a hyperplane in  ``general position''. A similar relation is obtained for $\rho_c$, for an integer $c>0$ (see Theorem~\ref{codimk}).
In Section~\ref{sec:reduction}, we develop an explicit representation of a basis of the family $F'$ resulting from this  intersection, which give rise to the symbolic matrices PIT($R_2$) (and PIT($R_k$)). Using this, we prove in Section~\ref{sec:generic} that most hyperplanes (and more generally, subspaces) satisfy the ``general position'' definition of Lov\'asz, thus expressing the rank of a these symbolic matrices as appropriate $\rho(F)$. Using the algorithm above we can now compute these ranks deterministically and efficiently. This last section is the only one in which the size of the field $\K$ is important. 
%
%We conclude with a discussion in Section~\ref{sec:discussion}
%\Onote{do you want to add a discussion section? Currently we don't have one.}

%\Anote{If we have open problems, I would like to ask about when can we do the following. Assume we have a det alg for PIT($T$), for some family of maps $T$ on some $d$ variables. When can we solve PIT($T^{(k)}$), by which I mean that every 
%is the *same* map from $T$  applies $k$ times, to $k$ disjoint $d$-sets of variables (as in graph rigidity). }

\section{Subspaces, partitions, and the function $\rho_c$}\label{sec:rho}
We introduce the main objects of this study: Families of subspaces, their partitions, and the optimization problem we solve in this paper.
We consider linear subspaces $f$ of $\K^d$. 
Let $d(f)$ denote the dimension of a subspace $f$.
For a family $F$ of subspaces, we write $\sp F:=\sp\bigcup_{f\in F}f$ and
$$
d(F):=d(\sp F).
$$ 
A {\it partition} of $F$ is a set $\Pi=\{P_1,\ldots,P_t\}$ of nonempty, pairwise disjoint subfamilies of $F$, such that $F=\bigcup_{i=1}^tP_i$.
For a partition $\Pi$ of $F$ and a family of subspaces $G$, we define the {\it restriction} of $\Pi$ to $G$ by
\begin{equation}\label{restpart}
\Pi\cap G:=\{P\cap G\mid P\in \Pi,\;\; P\cap G\neq\emptyset\}.
\end{equation}
If $G\subset F$, then $\Pi\cap G$ forms a partition of $G$.

Lov\'asz~\cite{Lov} defined the following key function $\rho$ of a family of subspaces, whose meaning will be revealed in Section~\ref{sec:Lov}. We actually generalize his definition to a family of functions $\rho_c$, for every $c>0$ (his $\rho$ is our $\rho_1$ for $c=1$). Computing $\rho_c(F)$ in deterministic polynomial time given $F$, in Section~\ref{sec:alg}, will be the key to our derandomization results.

Fix a constant $c>0$. Let $F$ be a finite family of subspaces in $\K^d$. 
For a partition $\Pi$ of $F$,
we define 
$$
%\rho(F,\Pi)=
\rho_c(F,\Pi):=\sum_{P\in \Pi}(d(P)-c).
$$

\begin{equation}\label{rhodef}
%\rho(F)=
\rho_c(F):=\min_{\Pi}\rho_c(F,\Pi),
\end{equation}
where the minimum is taken over all partitions $\Pi$ of $F$. 
\begin{defin}
We say that $\Pi$ is a {\em minimal} partition of $F$, with respect to the constant $c>0$, if $\Pi$ attains $\rho_c(F)$ and has the smallest possible number of parts.
\end{defin}
\noindent{\it Remark.} In Corollary~\ref{unique} we prove that, fixing $c>0$, a minimal partition $\Pi$ of a family $F$ with respect to $c$ is unique.

%The main result of this paper is a deterministic strongly-polynomial time algorithm to compute $\rho_c$, and produce the minimal partition.
%\begin{thm}\label{alg}
%Let $F$ be a family of $n$ subspaces in $\R^d$.
%There exists a deterministic strongly-polynomial time algorithm (in $n$ and in $d$) to compute $\rho_c(F)$.
%\end{thm}
%The proof of Theorem~\ref{alg} is given in Section~\ref{sec:alg}.

\paragraph{Notation.} We will use small letters $f,g,h$ to denote subspaces in $\K^d$, capital letters $F,G,P,Q$ to denote families of subspaces, and $\Pi$ to denote partitions of a certain family $F$ of subspaces. Note that the elements of a partition $\Pi$ are themselves families of subspaces.

%%%%%%%%%

\section{Properties of minimal partitions}\label{sec:proprho}

In this and the next section we develop our algorithm in a fully self-contained manner. As mentioned in the introduction, the reader may skip these sections and apply the algorithm of \cite{FT88} as a black box. 
In this section, we introduce some properties of minimal partitions, to be used in our algorithm. We find these properties interesting in their own right, but some may be  known, indeed in more generality, for submosular functions.
\subsection{Main technical lemma}
We start with the following main technical lemma of this section.
\begin{lem}\label{mainlem}
Let $F,G$ be families of subspaces in $\K^d$ with minimal partitions $\Pi_F,\Pi_G$, respectively. 
Assume that $Q\in \Pi_G$ and $Q\subset F$. Then $Q$ is contained in one of the parts of $\Pi_F$.
\end{lem}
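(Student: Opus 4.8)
The plan is to analyze the interaction between the two partitions $\Pi_F$ and $\Pi_G$ via the restriction operation and a suitable exchange/uncrossing argument. First I would set $G' := \bigcup_{P\in\Pi_G} P = G$ and focus attention on $Q$; since $Q\subseteq F$, the restriction $\Pi_F\cap Q$ is a partition of $Q$, say $\Pi_F\cap Q = \{Q_1,\dots,Q_s\}$ with $Q_i = Q\cap P_i$ for the parts $P_i\in\Pi_F$ that meet $Q$. The goal is exactly to show $s=1$. Suppose for contradiction $s\ge 2$. The idea is to use the minimality of $\Pi_G$ (fewest parts among optimal partitions of $G$) together with a submodularity-type inequality for $d(\cdot)$ on subspace families — namely $d(A\cup B) + d(A\cap B) \le d(A) + d(B)$ when $A\cap B$ is interpreted as intersection of the families (span of common subspaces), or more robustly the inequality $d(A\cup B) \le d(A) + d(B) - d(\sp A \cap \sp B)$ — to compare the cost of two candidate partitions.

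The key step I expect: take the partition $\Pi_G$ and refine its part $Q$ by splitting it as $Q_1,\dots,Q_s$ (the pieces induced by $\Pi_F$), obtaining a partition $\Pi_G'$ of $G$; its cost is $\rho_c(G,\Pi_G') = \rho_c(G,\Pi_G) + \sum_{i=1}^s(d(Q_i)-c) - (d(Q)-c)$. On the other hand, take $\Pi_F$ and coarsen it by merging all the parts $P_1,\dots,P_s$ that meet $Q$ into a single part $P_1\cup\cdots\cup P_s$, obtaining a partition $\Pi_F'$ of $F$; its cost changes by $d(P_1\cup\cdots\cup P_s) - c - \sum_{i=1}^s(d(P_i)-c)$. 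Now I would add these two inequalities (using $\rho_c(G)\le\rho_c(G,\Pi_G')$ and $\rho_c(F)\le\rho_c(F,\Pi_F')$, i.e. optimality of $\Pi_G$ and $\Pi_F$) and try to derive a contradiction. The telescoping should leave me needing an inequality of the shape $\sum_i d(Q_i) + d(\bigcup_i P_i) \ge d(Q) + \sum_i d(P_i)$, or rather its reverse, which is where the submodularity of dimension (together with $Q_i\subseteq P_i$ and $Q = \bigcup Q_i$) must be invoked carefully — possibly iterating a two-set submodular inequality $s-1$ times.

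The main obstacle will be getting the uncrossing inequality to go in the right direction and to be strict, so that it actually contradicts minimality (of $\Pi_G$ as the optimal partition with fewest parts) rather than just reproving optimality. Specifically, if merging $P_1,\dots,P_s$ in $\Pi_F$ does not increase cost, then $\Pi_F'$ is another optimal partition of $F$, which is fine, but to contradict $\Pi_G$'s minimality I need either that splitting $Q$ strictly decreases $\rho_c(G)$ (impossible, $\Pi_G$ is optimal) or that there is an equally-cheap partition of $G$ with strictly more parts obtained in a way that still forces a contradiction — so the argument must instead show that if $Q$ is split by $\Pi_F$, one can produce an optimal partition of $G$ with \emph{fewer} parts, contradicting minimality of $\Pi_G$. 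I would therefore look for a way to merge, in $\Pi_G$, the part $Q$ with some other part $Q'\in\Pi_G$ whenever the corresponding $P_i$'s are forced to interact in $\Pi_F$, using the chain of submodular inequalities to show this merge does not raise the cost. This is the delicate combinatorial core; I expect it to require the sharper structural consequences of optimality of $\Pi_F$ (e.g. that distinct parts of a minimal partition have "spans in general position" in the sense that $d(P\cup P') < d(P)+d(P')-c$ fails), which presumably are among the properties proved just after this lemma in Section~\ref{sec:proprho}.
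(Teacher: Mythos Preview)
Your setup is exactly the paper's: write $\Pi_F\cap Q=\{Q_1,\dots,Q_s\}$ with $Q_i=P_i\cap Q$ for the parts $P_1,\dots,P_s\in\Pi_F$ meeting $Q$, and aim for $s=1$. You also correctly extract the two relevant moves: refining $\Pi_G$ by splitting $Q$ into the $Q_i$, and coarsening $\Pi_F$ by merging $P_1,\dots,P_s$. But from there the plan goes off track in two places.

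First, ``adding the two inequalities'' $\rho_c(G)\le\rho_c(G,\Pi_G')$ and $\rho_c(F)\le\rho_c(F,\Pi_F')$ is not how the argument closes. Only the first one is used as an input: optimality of $\Pi_G$ gives
\[
\sum_{i=1}^s\bigl(d(Q_i)-c\bigr)\ \ge\ d(Q)-c,\qquad\text{i.e.}\qquad \sum_i d(Q_i)-d(Q)\ \ge\ (s-1)c.
\]
The entire point is then to \emph{deduce} (not assume) the analogous inequality for the $P_i$, namely
\[
\sum_i d(P_i)-d\Bigl(\bigcup_i P_i\Bigr)\ \ge\ (s-1)c,
\]
which says that merging $P_1,\dots,P_s$ in $\Pi_F$ does not increase the cost. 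Since $\Pi_F$ is assumed to have the fewest parts among optimal partitions of $F$, this forces $s=1$. So the contradiction is with the minimality of $\Pi_F$, not of $\Pi_G$; there is no need for strictness, no need to merge anything in $\Pi_G$, and no appeal to later properties (which would be circular---this is the main lemma on which those rest).

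Second, the missing step is precisely the bridge inequality
\[
\sum_i d(Q_i)-d(Q)\ \le\ \sum_i d(P_i)-d\Bigl(\bigcup_i P_i\Bigr),
\]
and your plan does not supply a proof of it. The paper's device is a term-by-term chain comparison: order the parts arbitrarily, set $V_j'=\sp\bigl(\bigcup_{i\le j}Q_i\bigr)$ and $V_j=\sp\bigl(\bigcup_{i\le j}P_i\bigr)$, and write each side telescopically as $\sum_i s_i'$ and $\sum_i s_i$ where $s_i'=d\bigl((\sp Q_i)\cap V_{i-1}'\bigr)$ and $s_i=d\bigl((\sp P_i)\cap V_{i-1}\bigr)$. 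Because $Q_i\subseteq P_i$ and hence $V_{i-1}'\subseteq V_{i-1}$, one gets $s_i'\le s_i$ for every $i$, which is exactly the iterated submodular-type inequality you anticipated. That is the whole ``uncrossing''; once you see it, the proof is three lines and nothing further is needed.
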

For the proof, the idea is to show that if, when considering a minimal partition for $F$, it ``pays off'' to put the elements of $Q$ together, then it still ``pays off'' (or at least, harmless) to put these elements together, when this time considering a minimal partition for $G$.
\begin{proof}
Consider the restriction $\Pi':=\Pi_F\cap Q$ of $\Pi_F$ to $Q$  (as defined in \eqref{restpart}).
By assumption, $Q\subset F$, and thus $\Pi'$ forms a partition of $Q$.

Our assumption that $Q\in\Pi_G$, and recalling that $\Pi_G$ forms a minimal partition of $G$, implies that 
\begin{equation}\label{eq1}
\sum_{P\in\Pi'}(d(P) -c)\ge d(Q)-c.
\end{equation}

Fixing some arbitrary order on the elements of $\Pi'$, we write
$$
\Pi'=(P_1',\ldots,P_t'),
$$ 
where $P_i':=P_i\cap Q$ is non-empty and $P_1,\ldots, P_t\in\Pi_F$ are distinct.
Set $V_0':=\{0\}$. For each $1\le i\le t$, define 
$$
V_i':=\sp\left(\bigcup_{j=1}^i P_j'\right)
$$ 
and 
put $r_i':=d(V_i')-d(V'_{i-1})$ and $s_i':=d(P_i')-r_i'$. 
Note that 
$$
d(Q)=\sum_{i=1}^tr_i'
$$ and that 
\begin{equation}\label{si'}
s_i'=d((\sp P_i')\cap V_{i-1}').
\end{equation}
With this notation, \eqref{eq1} can be rewritten as
$$
\sum_{i=1}^t(r_i'+s_i') -tc\ge \sum_{i=1}^tr_i' -c
$$
which implies 
\begin{equation}\label{sumsi'}
\sum_{i=1}^ts_i' \ge c(t -1).
\end{equation}

Next, we define
$$
V_i:=\sp\left(\bigcup_{j=1}^i P_j\right)
$$ 
and put 
$r_i:=d(V_i)-d(V_{i-1})$ and $s_i:=d(P_i)-r_i$. 
Similar to above, we have
$$
d\left(\bigcup_{i=1}^tP_i\right)=\sum_{i=1}^tr_i
$$ and  
\begin{equation}\label{si}
s_i=d((\sp P_i)\cap V_{i-1}).
\end{equation}
We claim that 
\begin{equation}\label{eq2}
\sum_{i=1}^t(d(P_i) -c)\ge d\left(\bigcup_{i=1}^tP_i\right)-c.
\end{equation}
Indeed, the inequality \eqref{eq2} holds if and only if
$$
\sum_{i=1}^t(r_i+s_i) -tc\ge \sum_{i=1}^tr_i -c
$$
which holds if and only if 
\begin{equation}\label{sumsi}
\sum_{i=1}^ts_i \ge c(t -1).
\end{equation}
To prove the last inequality, notice that 
$V_i'\subset V_i$ and $\sp P_i'\subset\sp P_i$, for every $i$. Thus
$$
d((\sp P_i')\cap V_{i-1}')\le d((\sp P_i)\cap V_{i-1}).
$$ 
Hence, by \eqref{si'} and \eqref{si}, we get $s_i'\le s_i$. 
This fact combined with the inequality \eqref{sumsi'} implies \eqref{sumsi} and hence also \eqref{eq2}.
Since $\Pi_F$ is assumed to be minimal for $F$, we conclude that $t=1$ and
$Q\subset P_1$. This completes the proof.
\end{proof}

\subsection{Uniqueness of minimal partitions}

We prove uniqueness of minimal partitions.
\begin{cor}[{\bf Uniqueness}]
\label{unique}
Let $F$ be a family of subspaces in $\K^d$ and let $\Pi_1,\Pi_2$ be minimal partitions of $F$. Then $\Pi_1=\Pi_2$.
\end{cor}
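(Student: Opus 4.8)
The plan is to derive uniqueness directly from Lemma~\ref{mainlem} by applying it with $F$ playing the role of \emph{both} families in the lemma. Concretely, let $\Pi_1,\Pi_2$ be two minimal partitions of $F$. Pick any part $Q\in\Pi_2$. Since $Q\subseteq F$ and $\Pi_1$ is a minimal partition of $F$, Lemma~\ref{mainlem} (with ``$G$''$=F$, ``$\Pi_G$''$=\Pi_2$, ``$\Pi_F$''$=\Pi_1$) tells us that $Q$ is contained in one of the parts of $\Pi_1$. By symmetry, every part of $\Pi_1$ is contained in some part of $\Pi_2$. So each partition refines the other.

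The second step is to observe that this mutual-refinement property forces $\Pi_1=\Pi_2$. Take $Q\in\Pi_2$; it is contained in some $P\in\Pi_1$; and $P$ in turn is contained in some $Q'\in\Pi_2$. Then $Q\subseteq P\subseteq Q'$ with $Q,Q'\in\Pi_2$; since the parts of a partition are pairwise disjoint and nonempty, $Q=Q'$, hence also $Q=P$. Thus every part of $\Pi_2$ is a part of $\Pi_1$, and symmetrically every part of $\Pi_1$ is a part of $\Pi_2$, so $\Pi_1=\Pi_2$.

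I don't anticipate a genuine obstacle here — the corollary is essentially a formal consequence of the lemma. The one point to be careful about is checking that the hypotheses of Lemma~\ref{mainlem} are actually met in the degenerate case where both families coincide: we need $Q\in\Pi_2$ and $Q\subset F$, and the latter holds (as $Q$ is a block of a partition of $F$), so the application is legitimate. It is also worth noting explicitly that the argument only uses that $\Pi_1$ and $\Pi_2$ attain $\rho_c(F)$ with the minimum number of parts (the ``minimal'' hypothesis), which is exactly what Lemma~\ref{mainlem} requires of both input partitions.
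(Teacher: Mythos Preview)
Your proof is correct and follows essentially the same approach as the paper: both apply Lemma~\ref{mainlem} with $G=F$ (and the two minimal partitions playing the roles of $\Pi_F,\Pi_G$) to conclude that each partition refines the other, hence they coincide. The paper phrases this via the induced equivalence relations $\sim_1,\sim_2$, while you phrase it as mutual refinement of partitions, but the content is identical.
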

\begin{proof}
Let $\sim_1,\sim_2$ denote the equivalence relations on $F$ induced by the partitions $\Pi_1, \Pi_2$, respectively. Let $f,g\in F$ and assume that $f\sim_1 g$. That is $f,g\in Q$, for some $Q\in \Pi_1$.
Applying Lemma~\ref{mainlem} (with $F$, $G:=F$, and $Q$), we get that $Q$ is contained in one of the parts in $\Pi_2$.
Thus $f\sim_2 g$.
By symmetry, we conclude that $f\sim_1 g$ if and only if $f\sim_2 g$.
Thus $\Pi_1=\Pi_2$, as claimed.
\end{proof}

\begin{defin} Fix $c>0$. Define $\Pi^*(F)$ to be {\it the} minimal partition of a family of subspaces $F$ (with respect to $c$).
\end{defin}

\subsection{Monotonicity properties}
We prove the following ``monotonicity" property of minimal partitions.
\begin{cor}[{\bf Monotonicity}]\label{monotone}
Let $F,G$ be families of subspaces in $\K^d$ and assume that $G\subset F$.
Then $\Pi^*(G)$ is a refinement of $\Pi^*(F)\cap G$.
\end{cor}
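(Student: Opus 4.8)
The plan is to deduce the Monotonicity corollary directly from the main technical lemma, Lemma~\ref{mainlem}, in essentially one line, by choosing the inputs of the lemma appropriately. Recall what we must show: if $G\subset F$, then every part of $\Pi^*(G)$ is contained in some part of $\Pi^*(F)\cap G$; equivalently, the equivalence relation $\sim_G$ induced by $\Pi^*(G)$ on $G$ refines the relation induced on $G$ by $\Pi^*(F)$ (which is exactly the relation induced by $\Pi^*(F)\cap G$, since $G\subset F$ guarantees $\Pi^*(F)\cap G$ is a genuine partition of $G$).

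First I would fix an arbitrary part $Q\in\Pi^*(G)$. Since $\Pi^*(G)$ is by definition the minimal partition of $G$, we have $Q\in\Pi_G$ with $\Pi_G:=\Pi^*(G)$ a minimal partition of $G$, and moreover $Q\subset G\subset F$. Now apply Lemma~\ref{mainlem} with the roles ``$F$'' $:=F$, ``$G$'' $:=G$, ``$\Pi_F$'' $:=\Pi^*(F)$, ``$\Pi_G$'' $:=\Pi^*(G)$, and $Q$ as above: the hypotheses $Q\in\Pi_G$ and $Q\subset F$ are met, so the lemma yields that $Q$ is contained in a single part of $\Pi^*(F)$. Since $Q\subseteq G$, that part of $\Pi^*(F)$ intersected with $G$ is a part of $\Pi^*(F)\cap G$ containing $Q$. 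As $Q\in\Pi^*(G)$ was arbitrary, every part of $\Pi^*(G)$ lies inside a part of $\Pi^*(F)\cap G$, which is precisely the statement that $\Pi^*(G)$ refines $\Pi^*(F)\cap G$.

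I expect there is no real obstacle here — the content has all been pushed into Lemma~\ref{mainlem}, and this corollary is just the bookkeeping that turns ``each part of the finer partition stays inside a part of the coarser one'' into the word ``refinement.'' The one point to be careful about is the definitional check that $\Pi^*(F)\cap G$ is indeed a partition of $G$ (so that the phrase ``refinement of $\Pi^*(F)\cap G$'' is meaningful); but this is exactly the remark following equation~\eqref{restpart}, valid because $G\subset F$. One should also note that we are implicitly using Corollary~\ref{unique} so that $\Pi^*(F)$ and $\Pi^*(G)$ are well defined; with that in hand the proof is immediate.

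\begin{proof}
Write $\Pi_F:=\Pi^*(F)$ and $\Pi_G:=\Pi^*(G)$, which are well defined by Corollary~\ref{unique}. Since $G\subset F$, the restriction $\Pi_F\cap G$ is a partition of $G$ (see the discussion after~\eqref{restpart}), so the statement is meaningful. Let $Q\in\Pi_G$ be an arbitrary part. Then $Q\subset G\subset F$, and $\Pi_G$ is a minimal partition of $G$. Applying Lemma~\ref{mainlem} with the families $F$ and $G$, their minimal partitions $\Pi_F$ and $\Pi_G$, and this $Q$, we conclude that $Q$ is contained in one of the parts of $\Pi_F$, say $Q\subseteq P$ with $P\in\Pi_F$. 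Since $Q\subseteq G$, we get $Q\subseteq P\cap G\in\Pi_F\cap G$. As $Q$ was an arbitrary part of $\Pi_G=\Pi^*(G)$, every part of $\Pi^*(G)$ is contained in a part of $\Pi^*(F)\cap G$; that is, $\Pi^*(G)$ is a refinement of $\Pi^*(F)\cap G$.
\end{proof}
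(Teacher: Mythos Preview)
Your proof is correct and follows exactly the same approach as the paper: the paper's proof is the single line ``Apply Lemma~\ref{mainlem} to the families $F$ and $G$,'' and you have simply spelled out the details of that application.
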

\begin{proof}
Apply Lemma~\ref{mainlem} to the families $F$ and $G$.
\end{proof}

The following is another type of monotonicity property.
\begin{lem}
Let $F=\{f_1,\ldots,f_n\}$ be a family of $n$ subspaces in $\K^d$.
Let $f_i\subset f_i'$, for every $i=1,\ldots,n$, and consider
$
F':=\{f_1',\ldots,f_n'\}.
$  
For a partition $\Pi$ of $F$, let $\Pi'$ denote the partition of $F'$ induced by $\Pi$, replacing each $f_i$ by the corresponding $f_i'$.
Then $(\Pi^*(F))'$ is a refinement of $\Pi^*(F')$.
\end{lem}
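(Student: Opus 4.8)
The plan is to deduce this from Lemma~\ref{mainlem} (or really from its consequences, Corollary~\ref{unique} and Corollary~\ref{monotone}), by the same kind of ``enlarging can only coarsen'' argument used there. Concretely, fix the minimal partition $\Pi^*(F')$ of $F'$, and let $Q' \in \Pi^*(F')$ be one of its parts; write $Q = \{f_i \mid f_i' \in Q'\}$ for the corresponding subfamily of $F$. I want to show that $Q$ is entirely contained in a single part of $\Pi^*(F)$ — this is exactly the statement that $(\Pi^*(F))'$ refines $\Pi^*(F')$, since it says the equivalence relation defined by $\Pi^*(F)$ (pulled through the bijection $f_i \leftrightarrow f_i'$) is finer than that defined by $\Pi^*(F')$.

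\textbf{Key steps.} First I would set up the same ``restriction'' device: consider $\Pi' := \Pi^*(F) \cap$ (the preimage of $Q'$ under the correspondence), i.e.\ the partition of $Q$ induced by intersecting $Q$ with the parts of $\Pi^*(F)$; call its parts $P_1, \dots, P_t$ (nonempty, arising from distinct parts of $\Pi^*(F)$), and let $P_1', \dots, P_t'$ be the corresponding subfamilies of $Q'$. Second, since $\Pi^*(F')$ is minimal for $F'$ and $Q' \in \Pi^*(F')$, the partition $\{P_1', \dots, P_t'\}$ of $Q'$ cannot beat the single block $Q'$, so
$$
\sum_{i=1}^t (d(P_i') - c) \ge d(Q') - c .
$$
Following the bookkeeping in the proof of Lemma~\ref{mainlem}, with $V_i' := \sp(\bigcup_{j\le i} P_j')$, $r_i' := d(V_i') - d(V_{i-1}')$, $s_i' := d(P_i') - r_i' = d((\sp P_i') \cap V_{i-1}')$, this is equivalent to $\sum_i s_i' \ge c(t-1)$. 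Third, I do the analogous computation downstairs, with $V_i := \sp(\bigcup_{j\le i} P_j) \subseteq \K^d$ built from the $P_j$ (the true parts $P_1, \dots, P_t$, not their intersections with $Q$ — note $P_i \subseteq$ the part of $\Pi^*(F)$ it came from, and here I want to compare the cost of merging $P_1, \dots, P_t$ into one block against keeping them separate), getting $r_i, s_i$ with $s_i = d((\sp P_i) \cap V_{i-1})$. The crux is the inclusion $\sp P_i' \subseteq \sp P_i$ and $V_{i-1}' \subseteq V_{i-1}$: since each $f_i \subseteq f_i'$, we have $\sp P_j \subseteq \sp P_j'$ — wait, that is the \emph{wrong} direction, so I need to be careful about which family plays the role of $F$ and which of $G$ in the abstract lemma.

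\textbf{The fix, and the main obstacle.} The correct pairing is to apply Lemma~\ref{mainlem} with the roles $F \rightsquigarrow F'$, $G \rightsquigarrow F$: since $f_i \subseteq f_i'$, spans of subfamilies of $F$ are contained in spans of the corresponding subfamilies of $F'$, so $\sp P_i \subseteq \sp P_i'$ and $V_{i-1} \subseteq V_{i-1}'$, giving $s_i \le s_i'$, hence $\sum_i s_i \le \sum_i s_i'$. But that is the inequality in the \emph{unhelpful} direction — from $\sum s_i' \ge c(t-1)$ I cannot conclude $\sum s_i \ge c(t-1)$. So the literal transplant of Lemma~\ref{mainlem}'s proof does not work, and this is exactly the main obstacle. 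The right move is instead to argue in the contrapositive style used for Corollary~\ref{monotone}: I should \emph{not} try to re-merge parts of $\Pi^*(F)$, but rather take a part $Q$ of $\Pi^*(F)$ and push it up to $F'$. That is, let $Q \in \Pi^*(F)$, let $Q'$ be the corresponding subfamily of $F'$, and consider $\Pi'' := \Pi^*(F') \cap Q'$ with parts $P_1', \dots, P_t'$ coming from $P_1, \dots, P_t \in \Pi^*(F')$. Now $Q \in \Pi^*(F)$ minimal forces $\sum_i (d(P_i \cap Q) - c) \ge d(Q) - c$ where $P_i \cap Q$ is the preimage of $P_i'$ intersected with $Q$; translating and using $\sp(P_i \cap Q) \subseteq \sp(P_i' \cap Q') $ (again because $f_j \subseteq f_j'$) in the \emph{correct} monotone direction now yields $\sum_i s_i(\text{upstairs on }P_i') \ge \sum_i s_i(\text{downstairs}) \ge c(t-1)$, which says merging $P_1', \dots, P_t'$ into one block is free (or profitable) for $F'$; minimality of $\Pi^*(F')$ then forces $t = 1$, i.e.\ $Q'$ lies in a single part of $\Pi^*(F')$. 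That says precisely that $(\Pi^*(F))'$ refines $\Pi^*(F')$. The one routine check I would still owe is that intersecting $Q$ with the preimages of $P_1, \dots, P_t$ really does partition $Q$ (it does, since $Q \subseteq F$ corresponds to $Q' \subseteq F'$ and $\Pi^*(F')$ partitions $F'$), after which the inequality manipulation is identical to that in Lemma~\ref{mainlem}.
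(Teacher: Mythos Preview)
Your opening paragraph mis-states the direction: showing that each part $Q'\in\Pi^*(F')$ pulls back to something contained in one part of $\Pi^*(F)$ would prove that $\Pi^*(F')$ (via the bijection) refines $\Pi^*(F)$, not the other way around. You catch this implicitly when the inequalities go wrong, and your ``fix'' is the correct direction: take $Q\in\Pi^*(F)$, push to $Q'\subseteq F'$, and show $Q'$ lies in a single part of $\Pi^*(F')$.

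In the fix there is one residual gap. Having restricted $\Pi^*(F')$ to $Q'$ with parts $P_i'=P_i\cap Q'$ (for $P_i\in\Pi^*(F')$), you conclude that ``merging $P_1',\dots,P_t'$ into one block is free for $F'$; minimality of $\Pi^*(F')$ then forces $t=1$.'' But merging the $P_i'$ is an operation on a partition of $Q'$, not on $\Pi^*(F')$; it does not by itself contradict minimality of $\Pi^*(F')$. What you need is that merging the \emph{full} parts $P_1,\dots,P_t$ inside $\Pi^*(F')$ is free. Your inclusion chain actually delivers this: letting $Q_i\subseteq Q$ be the preimage of $P_i'$, you have $\sp Q_i\subseteq\sp P_i'\subseteq\sp P_i$ (first inclusion from $f_j\subseteq f_j'$, second from $P_i'\subseteq P_i$) and the same for the cumulative spans, so $s_i^{(Q)}\le s_i^{(P)}$. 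From $\Pi^*(Q)=\{Q\}$ one gets $\sum_i s_i^{(Q)}\ge c(t-1)$, hence $\sum_i s_i^{(P)}\ge c(t-1)$, i.e.\ merging $P_1,\dots,P_t$ is free, forcing $t=1$. Replace $P_i'$ by $P_i$ in the final step and the argument is complete.

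This is a genuinely different route from the paper's. The paper also takes $P\in\Pi^*(F)$ and aims to put $P'$ inside one part of $\Pi^*(F')$, but it does so by first proving the intermediate claim $\Pi^*(P')=\{P'\}$ and then invoking Lemma~\ref{mainlem} as a black box. For that intermediate claim it enlarges one subspace at a time, using the auxiliary family $G=\{f_1,f_1',f_2,\dots,f_m\}$: since both $P=\{f_1,\dots,f_m\}$ and $\{f_1,f_1'\}$ have trivial minimal partitions, two applications of Lemma~\ref{mainlem} force $\Pi^*(G)=\{G\}$, and deleting the now-redundant $f_1$ gives $\Pi^*(\{f_1',f_2,\dots,f_m\})$ trivial; iterate. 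Your approach avoids this auxiliary-family trick by reopening the proof of Lemma~\ref{mainlem} and observing that its $s_i'\le s_i$ step only needs $\sp P_i'\subseteq\sp P_i$, not $P_i'\subseteq P_i$ as subfamilies. This makes the overall argument shorter, at the cost of re-running the lemma's computation rather than citing it.
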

\begin{proof}
Let $P\in\Pi^*(F)$ and assume without loss of generality that $P=\{f_1,\ldots,f_m\}$, for some $m\le n$. It is easy to see, applying Lemma~\ref{mainlem}, that $\Pi^*(P)=\{P\}$.

Put $P':=\{f_1',\ldots,f_m'\}$. We claim that $\Pi^*(P')=\{P'\}$.
First note that it suffices to prove the claim for the special case where $f_1\subset f_1'$ and $f_i=f_i'$, for $i=2,\ldots,m$, and then apply the same argument repeatedly to each $i$. To prove the calim for the special case, consider the family $Q=\{f_1,f_1'\}$. It is easy to see, by definition, that $\Pi^*(Q)=\{Q\}$. By Lemma~\ref{mainlem}, $Q$ is contained in a part of $\Pi^*(G)$, for every family of subspaces $G$ that contains $Q$. Moreover, since $f_1\cup f_1'\subset f_1'$, we have
$$
\rho_c(G)=\rho_c(G\setminus\{f_1\})~~\text{and}~~\Pi^*(G\setminus\{f_1\})=\Pi^*(G)\cap(G\setminus \{f_1\})
$$
for every such $G$ (this follows directly from the definition of $\rho_c$ and of $\Pi^*$).

Define $G:=\{f_1,f_1',f_2,\ldots,f_m\}$. By what has just been argued, we have
\begin{equation}\label{eqH}
\Pi^*(P')=\Pi^*(G)\cap P'.\end{equation}
Since $P,Q\subset G$, and applying Lemma~\ref{mainlem}, we get that each of $P$ and $Q$ is contained in a part of $\Pi^*(G)$. But $P\cap Q\neq\emptyset$, thus the set $P\cup Q$  must be contained in a part of $\Pi^*(G)$. Noting that $P\cup Q=G$, this implies that $\Pi^*(G)=\{G\}$. Combined with \eqref{eqH}, this proves $\Pi^*(P')=P'$, as claimed.

Applying Lemma~\ref{mainlem} to the families $F'$, $P'$, and with $P'\in\Pi^*(P')$, we conclude that $P'$ is contained in one of the parts of $\Pi^*(F')$. Since this is true for every $P\in \Pi^*(F)$, the lemma follows.
\end{proof}

\subsection{The family $\hat F$}
Let $F$ be a family of subspaces in $\K^d$. We show that, in some sense, $F$ can be replaced by a simpler family $\hat F$ defined next.  With each  $P\in \Pi^*(F)$  associate the subspace $f_P:=\sp P$. 
Then define the family
$$
\hat F:=\{f_P\mid P\in \Pi^*(F)\}.
$$
Note that for $P\neq P'$ we have $f_P\neq f_{P'}$; otherwise, taking $P\cup P'$ yields a partition of $F$ with strictly less parts and with smaller or equal value of $\rho_c$, contradicting the minimality of $\Pi^*(F)$.

The family $F$ can be replaced by $\hat F$ in the sense of Lemma~\ref{hatPi}, and $\hat F$ is simpler in the sense of Lemma~\ref{parthatF}.
\begin{lem}\label{hatPi}
Let $F,G$ be families of subspaces in $\K^d$.
Then
$$
\rho_c(F\cup G)=\rho_c(\hat F\cup G) ~~\text{and}~~\Pi^*(F\cup G)\simeq\Pi^*(\hat F\cup G).
$$
\end{lem}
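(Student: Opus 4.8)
The plan is to set up a dimension-preserving, block-count-preserving bijection between those partitions of $F\cup G$ that keep every block of $\Pi^*(F)$ inside a single part, and all partitions of $\hat F\cup G$, via the obvious \emph{collapse} map (inside each part, replace the blocks $P\in\Pi^*(F)$ that it contains by the single subspaces $f_P=\sp P$) together with its inverse \emph{blow-up} map; both displayed assertions then drop out.

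The first ingredient is the structural fact that for every $P\in\Pi^*(F)$ one has $\Pi^*(P)=\{P\}$ (a strictly finer, lower-value partition of $P$ could be substituted into $\Pi^*(F)$, contradicting its minimality), so that Lemma~\ref{mainlem}, applied to the pair of families $(F\cup G,\,P)$, shows that $P$ is contained in a single part of $\Pi^*(F\cup G)$. Hence for each part $B\in\Pi^*(F\cup G)$ we get $B\cap F=\bigcup\{P\in\Pi^*(F):P\subseteq B\}$, and therefore the set $\hat B:=(B\cap G)\cup\{f_P:P\in\Pi^*(F),\ P\subseteq B\}$ has $\sp\hat B=\sp B$. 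I would then check the routine points: the $\hat B$'s form a partition of $\hat F\cup G$ (nonempty; pairwise disjoint, using the already-observed $f_P\ne f_{P'}$ for $P\ne P'$; covering $\hat F\cup G$, since by Lemma~\ref{mainlem} every $P$ sits in some part of $\Pi^*(F\cup G)$), and $B\mapsto\hat B$ is injective. As $d(\hat B)=d(B)$ for all $B$, this collapsed partition has the same $\rho_c$-value and the same number of parts as $\Pi^*(F\cup G)$; in particular $\rho_c(\hat F\cup G)\le\rho_c(F\cup G)$.

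For the opposite inequality I would run the blow-up: for a part $\hat C$ of $\Pi^*(\hat F\cup G)$ set $C:=(\hat C\cap G)\cup\bigcup\{P\in\Pi^*(F):f_P\in\hat C\}$; again $\sp C=\sp\hat C$, the $C$'s partition $F\cup G$ (they cover $F$ because each $f_P$ lies in some part of $\Pi^*(\hat F\cup G)$), so $\rho_c(F\cup G)\le\rho_c(\hat F\cup G)$. This proves the first equality. For the second, note that the collapse of $\Pi^*(F\cup G)$ is a $\rho_c$-minimizing partition of $\hat F\cup G$ with $|\Pi^*(F\cup G)|$ parts, while the blow-up of $\Pi^*(\hat F\cup G)$ is a $\rho_c$-minimizing partition of $F\cup G$ with at most $|\Pi^*(\hat F\cup G)|$ parts; the defining minimality (fewest parts among optimizers) of $\Pi^*(F\cup G)$ and of $\Pi^*(\hat F\cup G)$ then forces these two counts to be equal, so the collapse of $\Pi^*(F\cup G)$ is itself minimal, hence equals $\Pi^*(\hat F\cup G)$ by the uniqueness Corollary~\ref{unique}. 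That $\Pi^*(\hat F\cup G)$ is the collapse of $\Pi^*(F\cup G)$ is precisely the asserted $\Pi^*(F\cup G)\simeq\Pi^*(\hat F\cup G)$.

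I expect no genuinely hard step: Lemma~\ref{mainlem} does the structural work and Corollary~\ref{unique} finishes the second claim. The only thing that needs attention is bookkeeping — verifying collapse and blow-up are mutually inverse on the relevant classes of partitions and that both preserve $\rho_c$-value and part-count; the one mildly delicate point is that $F$, $\hat F$ and $G$ need not be disjoint as sets of subspaces, which is harmless if families are understood as indexed (so that $\hat F\cup G$ is a disjoint union), the natural reading here, and all the span and counting identities above are then unaffected.
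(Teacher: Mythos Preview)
Your proposal is correct and follows essentially the same approach as the paper: both use Lemma~\ref{mainlem} to ensure each $P\in\Pi^*(F)$ lies in a single part of $\Pi^*(F\cup G)$, then pass back and forth via the collapse/blow-up correspondence to match $\rho_c$-values and part counts, finishing with Corollary~\ref{unique}. The only cosmetic differences are the order (the paper does blow-up first, collapse second) and that the paper invokes Lemma~\ref{mainlem} directly with the pair $(F\cup G,\,F)$ rather than via the intermediate observation $\Pi^*(P)=\{P\}$; your remark about reading families as indexed to sidestep overlap issues is a point the paper leaves implicit.
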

\noindent By the sign $\simeq$ we mean that the identity holds after identifying the partiton $\Pi^*(\hat F\cup G)$ of $\hat F\cup G$ with the partition of $F\cup G$ naturally induced by it. Concretely, 
the lemma asserts that
$$\Pi^*(F\cup G)=\{(\bigcup_{f_P\in \hat Q}P)\cup (G\cap \hat Q)~\mid~ \hat Q\in\Pi^*(\hat F\cup G)\}.
$$
\begin{proof}
In the proof we often abuse notation and regard a partition of $\hat F\cup G$ as a one of $F\cup G$, as explained after the statement of the lemma.
Let  $\Pi^*$ be the partition of $F\cup G$ induced by $\Pi^*(\hat F\cup G)$, given by
$$
\Pi^*=\Big\{(\bigcup_{f_P\in \hat Q}P)\cup (G\cap \hat Q)~\mid~ \hat Q\in\Pi^*(\hat F\cup G)\Big\}.
$$
We have $|\Pi^*|=|\Pi^*(\hat F\cup G)|$ and 
$$
\rho_c(F\cup G,\Pi^*)=\rho_c(\hat F\cup G,\Pi^*(\hat F\cup G)).
$$ 
Thus
$$
\rho_c(F\cup G)\le \rho_c(\hat F\cup G).
$$

To prove the inverse inequality, apply Lemma~\ref{mainlem} to the families $F$ and $F\cup G$.
It follows that, for every $P\in \Pi^*(F)$, there exists $Q\in \Pi^*(F\cup G)$ such that 
$P\subset Q$.
This means that $\Pi^*(F\cup G)$ induces a well-defined partition $\hat\Pi^*$ of $\hat F\cup G$ with $|\Pi^*(F\cup G)|=|\hat\Pi^*|$ and
\begin{equation}
\rho_c(F\cup G,\Pi^*(F\cup G))= \rho_c(\hat F\cup G,\hat\Pi^*).
\end{equation}
Concretely, $\hat \Pi^*$ is given by
$$
\hat \Pi^*:=\{\hat Q\mid Q\in \Pi^*(F\cup G)\},$$
where 
$$
\hat Q:=\left\{f_P\mid P\subset Q, P\in \Pi^*(F)\right\}\cup
(Q\cap G).
$$
We have
\begin{align*}
\rho_c(F\cup G)
&= \rho_c(F\cup G,\Pi^*(F\cup G))\\
&=\rho_c(\hat F\cup G,\hat\Pi^*)\\
&\ge \rho_c(\hat F\cup G).
\end{align*}
This proves that $\rho_c(F\cup G)=\rho_c(\hat F\cup G)$.

Next, we claim that $|\Pi^*(F\cup G)|=|\Pi^*(\hat F\cup G)|$.
Indeed, by our argument above, the partition $\hat\Pi^*$ of $\hat F\cup G$ satisfies 
$$
\rho_c(\hat F\cup G,\hat\Pi^*)=\rho_c(\hat F\cup G)~~\text{and}~~|\hat\Pi^*|=|\Pi^*(F\cup G)|.
$$
Since $\Pi^*(\hat F\cup G)$ is taken to be the smallest that attains $\rho_c(\hat F\cup G)$, we get
$$|\Pi^*(\hat F\cup G)|\le |\Pi^*(F\cup G)|.$$
Similarly, by our argument above, the partition $\Pi^*$ of $F\cup G$ satisfies 
$$
\rho_c(F\cup G,\Pi^*)=\rho_c(F\cup G)~~\text{and}~~|\Pi^*|=|\Pi^*(\hat F\cup G)|.
$$
Thus, 
$$|\Pi^*(F\cup G)|\le |\Pi^*(\hat F\cup G)|.$$
This proves the claim.

By the uniqueness of minimal partition (see Corollary~\ref{unique}), we conclude that
$$
\Pi^*(\hat F\cup G)=\hat \Pi^*~~\text{and}~~\Pi^*(F\cup G)=\Pi^*.
$$
This completes the proof of the lemma.
\end{proof}

\begin{lem}\label{parthatF}
Let $F$ be a family of subspaces in $\K^d$.
Then 
$$\Pi^*(\hat F)=\{\{\hat f\}\mid \hat f\in\hat F\}.$$
\end{lem}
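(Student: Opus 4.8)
The plan is to show that the singleton partition $\Pi_0:=\{\{\hat f\}\mid \hat f\in\hat F\}$ of $\hat F$ simultaneously attains $\rho_c(\hat F)$ and has the fewest possible parts among partitions attaining $\rho_c(\hat F)$; by the uniqueness of minimal partitions (Corollary~\ref{unique}) this forces $\Pi^*(\hat F)=\Pi_0$. Write $\Pi^*(F)=\{P_1,\ldots,P_k\}$, so $\hat F=\{f_{P_1},\ldots,f_{P_k}\}$; as observed right after the definition of $\hat F$, the $f_{P_i}$ are pairwise distinct, hence $|\hat F|=k=|\Pi^*(F)|$.

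The key device is a value- and size-preserving correspondence between partitions of $\hat F$ and those partitions of $F$ that coarsen $\Pi^*(F)$. Given a partition $\Pi$ of $\hat F$, say $\Pi=\{\hat Q_1,\ldots,\hat Q_m\}$ with $\hat Q_j=\{f_{P_i}\mid i\in S_j\}$ for a partition $\{S_1,\ldots,S_m\}$ of $[k]$, I substitute each $f_{P_i}$ by $P_i$ to obtain the partition $\Pi'':=\{\bigcup_{i\in S_j}P_i\mid j\in[m]\}$ of $F$. Then $|\Pi''|=|\Pi|=m$, and since $\sp\bigcup_{i\in S_j}f_{P_i}=\sp\bigcup_{i\in S_j}P_i$ for every $j$ (spans are unchanged by replacing a subfamily by its span), corresponding parts have equal dimension, so $\rho_c(\hat F,\Pi)=\rho_c(F,\Pi'')$.

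Applying this with $\Pi=\Pi_0$ gives $\Pi''=\Pi^*(F)$ and hence $\rho_c(\hat F,\Pi_0)=\rho_c(F,\Pi^*(F))=\rho_c(F)$. For an arbitrary partition $\Pi$ of $\hat F$, the corresponding $\Pi''$ is a partition of $F$, so $\rho_c(\hat F,\Pi)=\rho_c(F,\Pi'')\ge\rho_c(F)$; minimizing over $\Pi$ gives $\rho_c(\hat F)\ge\rho_c(F)$, and combined with $\rho_c(\hat F)\le\rho_c(\hat F,\Pi_0)=\rho_c(F)$ we conclude $\rho_c(\hat F)=\rho_c(F)$ and that $\Pi_0$ attains it. For the part count: if some partition $\Pi$ of $\hat F$ satisfied $\rho_c(\hat F,\Pi)=\rho_c(\hat F)$ with $|\Pi|<k$, then $\Pi''$ would be a partition of $F$ with $\rho_c(F,\Pi'')=\rho_c(F)$ and $|\Pi''|<k=|\Pi^*(F)|$, contradicting the minimality of the number of parts of $\Pi^*(F)$. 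Thus $\Pi_0$ is a minimal partition of $\hat F$, and Corollary~\ref{unique} identifies it with $\Pi^*(\hat F)$.

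I do not expect a real obstacle here: once the substitution correspondence is set up the argument is bookkeeping, and the only place the particular structure of $\hat F$ enters is the pairwise distinctness of the $f_{P_i}$ (guaranteeing $|\hat F|=|\Pi^*(F)|$), which is already established in the text. Alternatively, the statement follows quickly from Lemma~\ref{hatPi} applied with $G=\emptyset$: that lemma yields $\rho_c(\hat F)=\rho_c(F)$ together with the identification $\Pi^*(F)=\{\bigcup_{f_P\in\hat Q}P\mid\hat Q\in\Pi^*(\hat F)\}$, and since the $P_i$ are nonempty and pairwise disjoint, a part $\hat Q\in\Pi^*(\hat F)$ with $|\hat Q|\ge2$ would produce a part $\bigcup_{f_P\in\hat Q}P$ of $\Pi^*(F)$ equal to a union of two or more distinct $P_i$, which cannot be a single $P_j$ --- so every part of $\Pi^*(\hat F)$ must be a singleton.
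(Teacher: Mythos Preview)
Your proof is correct. The paper's own proof is exactly your alternative: it just says ``Apply Lemma~\ref{hatPi} with $G=\emptyset$.'' Your main argument is simply an explicit unfolding of the $G=\emptyset$ case of that lemma (the value- and size-preserving correspondence between partitions of $\hat F$ and coarsenings of $\Pi^*(F)$ is precisely what the proof of Lemma~\ref{hatPi} sets up), so both routes are essentially the same.
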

\begin{proof}
Apply Lemma~\ref{hatPi} with $G=\emptyset$.
\end{proof}

We introduce one more simple property that we need.
\begin{lem}\label{hatassoc}
$
\widehat{F\cup G}=\widehat{\widehat F\cup G}.
$\end{lem}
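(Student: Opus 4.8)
The plan is to show the two families $\widehat{F\cup G}$ and $\widehat{\widehat F\cup G}$ are literally equal as sets of subspaces, by unwinding the definition of the hat operation and using the structural lemmas already proved, principally Lemma~\ref{hatPi}. Recall that $\widehat{H}=\{\sp P\mid P\in\Pi^*(H)\}$, so I must understand $\Pi^*(F\cup G)$ and $\Pi^*(\widehat F\cup G)$ and see that the subspaces they span, grouped appropriately, coincide.

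First I would apply Lemma~\ref{hatPi} (with the roles of $F$ and $G$ as written there), which gives both $\rho_c(F\cup G)=\rho_c(\widehat F\cup G)$ and the correspondence $\Pi^*(F\cup G)\simeq\Pi^*(\widehat F\cup G)$. Concretely, as spelled out after the statement of Lemma~\ref{hatPi}, the parts of $\Pi^*(F\cup G)$ are exactly the sets $(\bigcup_{f_P\in\hat Q}P)\cup(G\cap\hat Q)$ as $\hat Q$ ranges over $\Pi^*(\widehat F\cup G)$. The key observation is then that for such a matched pair of parts $Q\in\Pi^*(F\cup G)$ and $\hat Q\in\Pi^*(\widehat F\cup G)$ we have $\sp Q=\sp\hat Q$: indeed $Q=(\bigcup_{f_P\in\hat Q}P)\cup(G\cap\hat Q)$, and each $f_P=\sp P$, so replacing each block $P$ by the single subspace $f_P$ does not change the span of the union, while the $G$-part $G\cap\hat Q$ is common to both. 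Hence $\widehat{F\cup G}=\{\sp Q\mid Q\in\Pi^*(F\cup G)\}=\{\sp\hat Q\mid \hat Q\in\Pi^*(\widehat F\cup G)\}=\widehat{\widehat F\cup G}$.

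The only genuine subtlety — and the step I would be most careful about — is the bookkeeping in matching parts across the two partitions: one must make sure the correspondence from Lemma~\ref{hatPi} is a genuine bijection between $\Pi^*(F\cup G)$ and $\Pi^*(\widehat F\cup G)$ (which it is, since Lemma~\ref{hatPi} establishes $|\Pi^*(F\cup G)|=|\Pi^*(\widehat F\cup G)|$ and the explicit induced-partition formula), and that the span of each matched pair really is the same subspace. Everything else is a direct substitution into the definition of $\widehat{\,\cdot\,}$. So the proof is essentially: quote Lemma~\ref{hatPi}, observe spans are preserved block-by-block under the $P\mapsto f_P$ replacement, and conclude.

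\begin{proof}
By Lemma~\ref{hatPi} (applied with the families $F$ and $G$), the minimal partition $\Pi^*(F\cup G)$ is obtained from $\Pi^*(\widehat F\cup G)$ via
$$
\Pi^*(F\cup G)=\Big\{(\bigcup_{f_P\in \hat Q}P)\cup (G\cap \hat Q)~\mid~ \hat Q\in\Pi^*(\hat F\cup G)\Big\},
$$
and this gives a bijection $\hat Q\mapsto Q:=(\bigcup_{f_P\in \hat Q}P)\cup (G\cap \hat Q)$ between $\Pi^*(\widehat F\cup G)$ and $\Pi^*(F\cup G)$. For a matched pair $(\hat Q,Q)$, since $f_P=\sp P$ for each $P\in\Pi^*(F)$, we have
$$
\sp Q=\sp\Big(\bigcup_{f_P\in\hat Q}P\Big)\cup\sp(G\cap\hat Q)=\sp\Big(\bigcup_{f_P\in\hat Q}\{f_P\}\Big)\cup\sp(G\cap\hat Q)=\sp\hat Q.
$$
Therefore
$$
\widehat{F\cup G}=\{\sp Q\mid Q\in\Pi^*(F\cup G)\}=\{\sp\hat Q\mid \hat Q\in\Pi^*(\widehat F\cup G)\}=\widehat{\widehat F\cup G},
$$
which is the claim.
\end{proof}
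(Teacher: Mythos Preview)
Your proof is correct and follows essentially the same approach as the paper: invoke Lemma~\ref{hatPi} to identify $\Pi^*(F\cup G)$ with $\Pi^*(\widehat F\cup G)$ (under the $P\leftrightarrow f_P$ replacement), and then observe that corresponding parts have the same span, so the hat families coincide. The paper's proof is a one-line version of exactly this argument.
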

\begin{proof}
By Lemma~\ref{hatPi}, $\Pi^*(F\cup G)=\Pi^*(\hat F\cup G)$. The assertion then easily follows.
\end{proof}

\section{An algorithm for computing $\rho_c(F)$}\label{sec:algorithm}
\label{sec:alg}
In this section we prove Theorem~\ref{main}. That is, we introduce an algorithm to compute $\rho_c(F)$, for any number $c$ and a given family $F$ of $n$ subspaces in $\K^d$, with polynomial running time in $n$ (and in $d$).
While we designed our algorithm for the class of functions $\rho_c$, it clearly works for a wider class of submodular functions. As it is different than the one in \cite{FT88}, we feel it would be interesting to explore its generality.
Note that the problem is trivial for $c\le 0$, which is why we consider only $c>0$.

As mentioned in the introduction, the problem of computing $\rho_c$ turns out to be an instance of a more general problem to which a strongly polynomial time algorithm is already known~\cite{FT88}. In more detail,  the {\it Dilworth truncation} of a set function $b':2^S\to \R\cup \{\infty\}$ is defined as the function
$$
b(X)=\min_{\Pi}\sum_{P\in\Pi} b'(P),
$$ 
where the minimum is taken over all partitions $\Pi$ of $X$. 

\begin{thm}[{\bf Frank and Tardos~\cite[IV.3]{FT88}}]
Let $b':2^S\to \R\cup \{\infty\}$ be a submodular set function. Suppose that a minimizing oracle for $b'$ is available.
Then $b(S)$ can be computed in a strongly polynomial time.
The algorithm also constructs a partition $\Pi$ of $S$ for which 
$b(S)=\sum_{P\in\Pi} b'(P)$.
\end{thm}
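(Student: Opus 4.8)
The statement is the classical \emph{Dilworth truncation algorithm}, so the plan is to reconstruct its proof. Fix once and for all an ordering $S=\{s_1,\dots,s_n\}$ and set $S_i=\{s_1,\dots,s_i\}$. The algorithm computes $b(S_1),b(S_2),\dots,b(S_n)=b(S)$ one after another, carrying along at stage $i$ a \emph{minimal} optimal partition $\mathcal P_i$ of $S_i$ (one attaining $b(S_i)=\sum_{P\in\mathcal P_i}b'(P)$ with the fewest parts). The base case is trivial: $S_1$ has only the partition $\{\{s_1\}\}$, so $b(S_1)=b'(\{s_1\})$ and $\mathcal P_1=\{\{s_1\}\}$. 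At the end the algorithm returns $b(S)=b(S_n)$ and the partition $\Pi=\mathcal P_n$.

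The engine of the algorithm is a structural lemma comparing $\mathcal P_i$ with $\mathcal P_{i-1}$: \emph{there is a subcollection $\mathcal Q\subseteq\mathcal P_{i-1}$ such that $\mathcal P_i$ is obtained from $\mathcal P_{i-1}$ by deleting the parts in $\mathcal Q$ and adding the single new part $\{s_i\}\cup\bigcup\mathcal Q$, all other parts being left untouched} (with $\mathcal Q=\emptyset$ meaning $s_i$ forms its own part). I would prove this in exactly the same two moves used for $\rho_c$ in Section~\ref{sec:proprho}. First, the uncrossing argument of Lemma~\ref{mainlem} and Corollary~\ref{monotone} uses only submodularity of $b'$ and goes through verbatim, giving that the restriction of $\mathcal P_i$ to $S_{i-1}$ is a \emph{coarsening} of $\mathcal P_{i-1}$; hence each part of $\mathcal P_i$ is a union of parts of $\mathcal P_{i-1}$, except that the part containing $s_i$ is $\{s_i\}$ together with such a union. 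Second, no block of $\mathcal P_i$ avoiding $s_i$ can be a union of two or more parts of $\mathcal P_{i-1}$: if some block $B_0=A_1\cup\dots\cup A_k$ with $k\ge2$ and $A_j\in\mathcal P_{i-1}$ occurred, then splitting $B_0$ back into $A_1,\dots,A_k$ cannot decrease the objective on $S_i$ (optimality of $\mathcal P_i$), so $\sum_j b'(A_j)\ge b'(B_0)$; but then merging $A_1,\dots,A_k$ inside $\mathcal P_{i-1}$ does not increase its objective on $S_{i-1}$, producing an optimal partition of $S_{i-1}$ with strictly fewer parts, contradicting the minimality of $\mathcal P_{i-1}$. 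The two moves together give the lemma.

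Granting the lemma, each step reduces to one submodular function minimization. Write $g(\mathcal Q):=b'\big(\{s_i\}\cup\bigcup\mathcal Q\big)-\sum_{P\in\mathcal Q}b'(P)$ as a set function on the ground set $\mathcal P_{i-1}$; then by the lemma
\[
b(S_i)=b(S_{i-1})+\min_{\mathcal Q\subseteq\mathcal P_{i-1}}g(\mathcal Q).
\]
Since distinct parts of $\mathcal P_{i-1}$ are disjoint, the map $\mathcal Q\mapsto\{s_i\}\cup\bigcup\mathcal Q$ is a lattice homomorphism from $2^{\mathcal P_{i-1}}$ into $2^{S}$, so $\mathcal Q\mapsto b'(\{s_i\}\cup\bigcup\mathcal Q)$ is submodular; subtracting the modular term $\sum_{P\in\mathcal Q}b'(P)$ keeps it submodular. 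A value oracle for $g$ is built from the minimizing oracle for $b'$ by the standard reductions of constrained submodular minimization (the numbers $b'(P)$ for $P\in\mathcal P_{i-1}$, and the values $b'(\{s_i\}\cup\bigcup\mathcal Q)$, are recovered this way). Hence a minimizer $\mathcal Q$ of $g$, and the value $\min_{\mathcal Q}g(\mathcal Q)$, are found by one call to a strongly polynomial submodular minimization routine, and $\mathcal P_i$ is updated from $\mathcal P_{i-1}$ and this $\mathcal Q$ in the obvious way.

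Assembling the pieces: there are $n$ iterations, each performing one strongly polynomial submodular minimization over a ground set of size at most $n$, so the whole procedure is strongly polynomial, and it returns $b(S)$ together with the partition $\Pi=\mathcal P_n$ realizing it. The step I expect to be the real obstacle is the structural lemma --- concretely the uncrossing claim that the restriction of $\mathcal P_i$ to $S_{i-1}$ only coarsens $\mathcal P_{i-1}$ --- everything after that being bookkeeping plus a black-box invocation of submodular minimization; in the present paper this uncrossing is precisely what Lemma~\ref{mainlem} establishes (there for $\rho_c$).
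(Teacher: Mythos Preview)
The paper does not prove this theorem: it is quoted as an external result of Frank and Tardos and used as a black box. There is therefore no ``paper's own proof'' to compare your attempt against. What the paper \emph{does} do, in Sections~\ref{sec:proprho}--\ref{sec:alg}, is develop its own algorithm for the special case $b'(X)=d(X)-c$, and your proposal is precisely that algorithm lifted to an arbitrary submodular $b'$: order the ground set, maintain the minimal optimal partition $\mathcal P_{i-1}$, and at each insertion reduce to a single submodular minimization (your function $g$ is the paper's $r_{F,g,c}$, and your structural lemma is the paper's Lemma~\ref{insert1} together with Corollary~\ref{monotone}). In that sense your plan matches the paper's approach to its special case exactly.

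One caveat worth flagging. You assert that the proof of Lemma~\ref{mainlem} ``uses only submodularity of $b'$ and goes through verbatim''. That is not accurate. The paper's proof hinges on the identity $s_i=d((\sp P_i)\cap V_{i-1})$, which relies on the \emph{modular} law $d(A)+d(B)=d(A\cup B)+d(A\cap B)$ for subspaces, and then on monotonicity of $d$ applied to intersections of spans to get $s_i'\le s_i$. For a general submodular $b'$ there is no analogue of ``$\sp P_i\cap V_{i-1}$'', the modular identity is only an inequality, and the chain argument as written does not transfer. The monotonicity statement you need (the minimal optimal partition refines under restriction) \emph{is} true for arbitrary submodular $b'$, but it requires a genuine uncrossing argument using submodularity directly on pairs of crossing parts, not the $r_i,s_i$ bookkeeping of Lemma~\ref{mainlem}. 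So your outline is correct, but the justification of the key structural step cannot simply point to Lemma~\ref{mainlem}.
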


\noindent {\it Remark.} In \cite{FT88}, a more general result is proved.

%\begin{thm}\label{alg}
%Let $F$ be a family of $n$ subspaces in $\R^d$.
%There exists a deterministic strongly-polynomial time algorithm (in $n$ and in $d$) to compute $\rho(F)$.
%\end{thm}
%\begin{proof}[Proof of Theorem~\ref{main}]
%\end{proof}
%\subsection{Proof of Theorem~\ref{main}}
\subsection{High-level description of the algorithm for $\rho_c$}\label{algdesc}
The input to the algorithm is a number $c$ and  a family of subspaces  $F=\{f_1,\ldots,f_n\}$  in $\K^d$
Write $F_i:=\{f_1,\ldots,f_i\}$. 
The high-level scheme of the algorithm is the following:
\renewcommand{\labelenumii}{\theenumii}
\renewcommand{\theenumii}{\theenumi.\arabic{enumii}.}
\begin{enumerate}
\item $\hat F_1\leftarrow \{f_1\}$.
\item {\bf For} $i\leftarrow$ $2$ {\bf to} $n$ 
\begin{enumerate}
\item $\Pi\leftarrow$ {\bf Compute }$\Pi^*(\hat F_{i-1}\cup\{f_i\})$
\item $\hat F_i\leftarrow\{\sp(P)\mid P\in\Pi\}$
\end{enumerate}
\item {\bf Return} $\sum_{\hat f\in \hat F_n}(d(\hat f)-c)$
\end{enumerate}
%\Anote{Changed last step in Alg to include c, and remarked on it below}
The heart of the algorithm is of course the missing description of Step 2.1, which computes, in the $i$th iteration, the minimal partition of the family $\widehat F_{i-1}\cup \{f_i\}$ with respect to $\rho$. 
\begin{lem}\label{lem:step2.1}
The computation in Step 2.1 can be done in strongly-polynomial time.
\end{lem}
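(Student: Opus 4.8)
The plan is to reduce the computation of $\Pi^*(\hat F_{i-1} \cup \{f_i\})$ to a single submodular function minimization, which can be carried out in strongly polynomial time by the ellipsoid method (or by Schrijver's/Iwata's combinatorial algorithms). The key structural input is Lemma~\ref{parthatF}: the family $\hat F_{i-1}$ is already in ``irreducible'' form, so $\Pi^*(\hat F_{i-1}) = \{\{\hat f\} \mid \hat f \in \hat F_{i-1}\}$. Adding one new subspace $f_i$ can only merge $f_i$ together with some subset of $\hat F_{i-1}$, leaving the rest as singletons. This is because, by Monotonicity (Corollary~\ref{monotone}), $\Pi^*(\hat F_{i-1})$ refines $\Pi^*(\hat F_{i-1}\cup\{f_i\}) \cap \hat F_{i-1}$; combined with the fact (from Lemma~\ref{mainlem} applied to $\hat F_{i-1}$ and $\hat F_{i-1}\cup\{f_i\}$, via the uniqueness corollary) that two singletons of $\hat F_{i-1}$ end up in the same part of $\Pi^*(\hat F_{i-1}\cup\{f_i\})$ only if they are both in the part containing $f_i$. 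So the whole task is: find the unique minimal subset $S^\star \subseteq \hat F_{i-1}$ such that $\{f_i\}\cup S^\star$ is a part of the minimal partition, and then $\Pi^*(\hat F_{i-1}\cup\{f_i\}) = \{\{f_i\}\cup S^\star\} \cup \{\{\hat f\} \mid \hat f \in \hat F_{i-1}\setminus S^\star\}$.

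To find $S^\star$, define for each $S \subseteq \hat F_{i-1}$ the quantity
$$
g(S) := d(\{f_i\}\cup S) - c - \sum_{\hat f \in S}\big(d(\hat f) - c\big).
$$
This measures the change in $\rho_c$-value caused by pulling the subspaces in $S$ into the same part as $f_i$ (relative to keeping them as singletons). The value of the partition $\{\{f_i\}\cup S\} \cup \{\text{singletons of }\hat F_{i-1}\setminus S\}$ is exactly $g(S) + \sum_{\hat f\in \hat F_{i-1}}(d(\hat f)-c) + c - c$; only the $g(S)$ term varies, so $\rho_c(\hat F_{i-1}\cup\{f_i\})$ is attained by minimizing $g(S)$ over $S$, and $S^\star$ is the (unique, by Corollary~\ref{unique}) minimizer of smallest cardinality. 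The function $S \mapsto d(\{f_i\}\cup S)$ is the rank function of a matroid (the linear matroid on the generators of the subspaces, modulo $f_i$), hence submodular and monotone; subtracting the modular term $\sum_{\hat f\in S}(d(\hat f)-c)$ preserves submodularity. Thus $g$ is a submodular function on subsets of the $(i-1)$-element ground set $\hat F_{i-1}$, and its value at any $S$ is computable in polynomial time via a rank computation over $\K$ (dimensions of spans of explicitly given subspaces). Submodular function minimization over an $m$-element ground set is strongly polynomial, and since we perform it once per iteration on a ground set of size at most $n-1$, the whole of Step 2.1 runs in strongly polynomial time. To extract the \emph{smallest} minimizer among all minimizers (to match the definition of $\Pi^*$), I would use the standard fact that the minimizers of a submodular function form a lattice, so the minimal one is well-defined and obtainable, e.g., by running SFM and then greedily testing removal of each element, or directly via the characterization of the minimal minimizer.

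The main obstacle is purely the bookkeeping that guarantees the reduction is faithful: one must check that (i) no two singletons of $\hat F_{i-1}$ get merged with each other \emph{unless} both are merged with $f_i$ — this needs Lemma~\ref{mainlem} plus the uniqueness of minimal partitions, applied carefully; and (ii) that the minimal (smallest-cardinality) minimizer $S^\star$ of $g$ indeed yields the \emph{minimal} partition $\Pi^*$, not just some optimal partition — here one invokes Corollary~\ref{unique} together with the observation that any partition achieving $\rho_c$ and having more parts than ours would contradict minimality, while any achieving it with fewer parts would force a strictly smaller $S$ into the $f_i$-block, contradicting that $g$ is minimized. Once these two points are nailed down, the strongly-polynomial running time follows immediately from the existence of strongly polynomial submodular minimization. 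I would also remark that this matches, at a high level, what one gets by specializing the Dilworth-truncation algorithm of \cite{FT88}, but the single-subspace-insertion structure makes the per-step subproblem a clean, ordinary submodular minimization rather than requiring the full Dilworth-truncation machinery.
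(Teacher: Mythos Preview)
Your approach is essentially the paper's own: the paper packages the structural observation as Lemma~\ref{insert1} (every part of $\Pi^*(\hat F_{i-1}\cup\{f_i\})$ not containing $f_i$ is a singleton, via Lemma~\ref{mainlem} exactly as you argue), defines the submodular function $r_{F,g,c}(X)=d(X\cup\{g\})-c+\sum_{f\in F\setminus X}(d(f)-c)$, which differs from your $g(S)$ only by an additive constant, and then appeals to Schrijver's strongly polynomial submodular minimization (Theorem~\ref{Schr}).

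One small slip: to realize the \emph{minimal} partition (fewest parts) you need the \emph{largest} minimizer of $g$, not the smallest, since each element left out of $S^\star$ becomes its own singleton part; the lattice of minimizers still delivers this in strongly polynomial time, so the argument goes through unchanged once the direction is corrected.
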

Recall that the minimal partition of $\hat F_{i-1}$ is the partition into singletons, by Lemma~\ref{parthatF}. So in this step we compute the effect on this partition of inserting one new subspace. We explain how to do so efficiently and prove Lemma~\ref{lem:step2.1} in Section~\ref{sec:minpart} below.
To describe and analyze step 2.1, we first need to recall submodular functions and optimization, which we do in Section~\ref{sec:submod}. The proof of the lemma is then given in Section~\ref{sec:minpart}.

We are now ready to prove Theorem~\ref{main}, assuming that Lemma~\ref{lem:step2.1} is true.
\begin{proof}[Proof of Theorem~\ref{main}]\noindent{\it Correctness of the algorithm.}
By Lemma~\ref{hatassoc}, we have
$$\widehat F_i=\widehat{\widehat F_{i-1}\cup\{f_i\}}.$$
Thus the computation of $\hat F_i$ in Step 2.2 is correct. 
In view of Lemmas~\ref{hatPi} and \ref{parthatF}, the algorithm's output is $\rho_c(F)$, as needed. 

\noindent{\it Running time of the algorithm.}
We represent a $k$-dimensional subspace $f$ in $\K^d$ by a $k\times d$ matrix whose rows form a basis for $f$.
The dimension $d(f)$ of a subspace $f$ is just the number of rows in the matrix representing the subspace, and hence can be computed in a constant time.
Let $P$ be a family of subspaces in $\K^d$. To compute $\sp(P)$, we take the union of the rows of the matrices in $P$ (representing subspaces) and apply Gauss elimination (using row operations only). If $P$ has $n$ subspaces, we will need to apply Gauss elimination to a matrix of dimensions at most $(nd)\times d$. The nonzero rows in the matrix received by this process will form a basis for $\sp(P)$.

Now let $F$ be a family of $n$ subspaces in $\K^d$. Cleary, each line in the above description of the algorithm, when applied to $F$, is called at most $n$ times. In each step, excluding Step 2.1, we are required to compute at most $n$ times one of the operations just described (finding dimension or span) or simple operations such as addition. In view of Lemma~\ref{lem:step2.1}, the proof is complete.
\end{proof}

\subsection{A submodular set function}\label{sec:submod}
Recall that a function $s$ defined on the
collection of subsets of a finite set $A$  is called {\it submodular} if 
$$
s(X)+s(Y) \ge  s(X\cup Y)+ s(X\cap Y) 
$$
for all $X,Y\subset A$.

The following is proved by Schrijver in~\cite{Schr}.
\begin{thm}[{\bf Schrijver~\cite{Schr}}]\label{Schr}
There exists a strongly polynomial-time algorithm minimizing a submodular function $s$, where $s$ is given by an oracle. The number of oracle calls is bounded by a polynomial in the size of the underlying set. The algorithm also finds a minimizer $X^*$ of $s$.
\end{thm}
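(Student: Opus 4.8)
The plan is to reduce the minimization of $s$ to an optimization problem over the associated \emph{base polytope}, and then to run an augmenting-type algorithm whose number of iterations is bounded by a polynomial in $n:=|A|$ alone. Normalizing so that $s(\emptyset)=0$ (which changes nothing), consider the base polytope
$$
B(s):=\Big\{x\in\R^A \;:\; x(A)=s(A),\ \ x(S)\le s(S)\ \text{for all}\ S\subseteq A\Big\},
$$
where $x(S):=\sum_{a\in S}x(a)$. Writing $x^-(a):=\min\{x(a),0\}$ and $x^-(A):=\sum_{a\in A}x^-(a)$, the starting point is the min--max theorem of Edmonds and Fujishige,
$$
\min_{S\subseteq A}s(S)=\max_{x\in B(s)}x^-(A),
$$
together with the fact that if $x\in B(s)$ attains the maximum then any $W$ with $\{a:x(a)<0\}\subseteq W\subseteq\{a:x(a)\le 0\}$ and $x(W)=s(W)$ is a minimizer of $s$. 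The inequality $\min_S s(S)\ge\max_x x^-(A)$ is immediate; the content is that equality is attained and witnessed constructively. So it suffices to produce an optimal $x\in B(s)$ together with such a tight set $W$, and this $W$ will be $X^*$.

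Next I would use Edmonds' greedy algorithm: for each linear order $\prec$ of $A$, the point $v^{\prec}$ with $v^{\prec}(a)=s(\{b:b\preceq a\})-s(\{b:b\prec a\})$ lies in $B(s)$ and is a vertex, every vertex arises this way, and one such $v^{\prec}$ is computed with $n$ oracle calls. The algorithm maintains $x\in B(s)$ as an \emph{explicit} convex combination $x=\sum_i\lambda_i v^{\prec_i}$ of at most $n$ vertices --- whenever the number of terms exceeds $n+1$, a short linear-algebra step finds an affine dependency among the $v^{\prec_i}$ and drops one term without moving $x$. The improvement step is the heart: form the digraph $D$ on $A$ with an arc $(u,v)$ whenever $u\prec_i v$ for some $i$ in the current combination, and let $P:=\{a:x(a)>0\}$. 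If no vertex with $x(a)<0$ is reachable from $P$ in $D$, then the set $W$ of vertices not reachable from $P$ is a down-set of every $\prec_i$, hence $v^{\prec_i}(W)=s(W)$ for all $i$ and so $x(W)=s(W)$; since $W$ contains all negative coordinates and no positive one, $x(W)=x^-(A)$, so $x$ is optimal, $W$ is a minimizer, and we stop. Otherwise one takes a shortest path in $D$ from $P$ to a negative vertex and applies a sequence of elementary exchanges --- each swapping two consecutive elements of some $\prec_i$, which perturbs $v^{\prec_i}$ by a nonnegative multiple of $e_v-e_u$ --- chosen to move mass along the path while keeping $x$ in $B(s)$ and not raising the relevant distances in $D$.

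The hard part is precisely the analysis of this last step: arranging the exchanges so that they make genuine progress, and exhibiting a monovariant that certifies a polynomial iteration bound. Following Schrijver, I would use a lexicographic potential built from the distances in $D$ from $P$ together with the index of the ``deepest'' element whose distance has not yet strictly decreased; one shows this potential strictly decreases, can change only $\poly(n)$ times, and that between changes only $\poly(n)$ exchanges occur. Each iteration costs $O(n)$ oracle calls and $\poly(n)$ arithmetic operations on rationals of polynomially bounded bit-size, which gives the claimed strongly polynomial algorithm and the minimizer $X^*=W$. The reduction to the base polytope, the min--max duality, and the greedy description of vertices are classical bookkeeping; the combinatorial termination argument is the substance of Schrijver's proof (and the point where the independent algorithm of Iwata--Fleischer--Fujishige diverges). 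One could instead observe that the Lov\'asz extension $\hat s$ of $s$ is convex, is evaluable by the greedy algorithm, and satisfies $\min_{[0,1]^A}\hat s=\min_S s(S)$, and minimize it with the ellipsoid method; but extracting a \emph{strongly} polynomial bound that way requires the Gr\"otschel--Lov\'asz--Schrijver machinery and is no less delicate.
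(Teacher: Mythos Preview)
The paper does not prove this theorem: it is quoted from Schrijver~\cite{Schr} and invoked as a black box in the proof of Lemma~\ref{lem:step2.1}. There is thus no ``paper's own proof'' to compare against.

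That said, your outline is a faithful sketch of Schrijver's actual argument: the reduction via the Edmonds--Fujishige min--max identity $\min_S s(S)=\max_{x\in B(s)}x^-(A)$, the greedy parametrization of vertices of $B(s)$ by linear orders, the maintenance of $x$ as a short convex combination, the exchange digraph, and the lexicographic potential governing termination are all as in~\cite{Schr}. One small imprecision: in Schrijver's digraph the arcs are between \emph{consecutive} elements in some $\prec_i$, not between all pairs with $u\prec_i v$; this does not affect reachability (hence the stopping criterion), but it does matter for the exchange step and the distance-based potential. You correctly flag that the termination analysis is where the real work lies and you do not carry it out; for the purposes of this paper nothing more is needed, since the theorem is only cited.
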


%\Anote{I now wonder if we ever need $c>0$ - our submodularity proof seems to work for any $c$}
%\Onote{If $c\le 0$ the problem becomes trivial. The minimal partition would be the partition of size one, taking all the subspaces together, there's no advantage in separating.}

In this section we consider a set function defined as follows. Let  $F$ be a family of subspaces in $\K^d$ and let $g\subset\K^d$ be a subspace not in $F$. 
%\Anote{Added $c$, and added it to the parameters of the function $r$ - pls check I did not forget any}
Fix $c>0$.
Define $r_{F,g,c}:2^F\to \K$  by  
$$
r_{F,g,c}(X):=d\left(X\cup \{g\}\right)-c+\sum_{f\in\overline X}(d(f)-c),
$$
where $\overline{X}:=F\setminus X$.
We then put
$$
r_{F,g,c}^*:=\min_{X\subset F}r_{F,g,c}(X)
$$
and we let $X_{F,g,c}^*$ denote a subset $X\subset F$ that attains $r_{F,g,c}^*$.

We show that $r_{F,g,c}$ is submodular.
\begin{lem}
Let $F$ and $g$ and $c$ be as above. Then  $r_{F,g,c}$
 is submodular.
\end{lem}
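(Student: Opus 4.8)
The plan is to verify the submodular inequality
$$
r_{F,g,c}(X)+r_{F,g,c}(Y)\ge r_{F,g,c}(X\cup Y)+r_{F,g,c}(X\cap Y)
$$
directly, by splitting $r_{F,g,c}$ into the two pieces it is built from and checking each separately. Write $r_{F,g,c}(X)=\alpha(X)+\beta(X)$, where $\alpha(X):=d(X\cup\{g\})-c$ and $\beta(X):=\sum_{f\in F\setminus X}(d(f)-c)$.

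First I would handle the easy part, $\beta$. Since $\beta(X)=\sum_{f\in F}(d(f)-c)-\sum_{f\in X}(d(f)-c)$, it is an affine function of the indicator vector of $X$ (a constant minus a modular function), so it is both submodular and supermodular; in particular $\beta(X)+\beta(Y)=\beta(X\cup Y)+\beta(X\cap Y)$ holds with equality. Thus it suffices to prove the inequality for $\alpha$.

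For $\alpha$, the constant $-c$ cancels (both sides have two terms), so the claim reduces to showing that $X\mapsto d(\sp(X)+g)$ is submodular, where I abuse notation writing $\sp(X)$ for $\sp\bigcup_{f\in X}f$ and $g$ for the fixed subspace. This is the standard fact that the dimension function of a sum of subspaces is submodular: for any subspaces $U,W$, $\dim(U+W)+\dim(U\cap W)=\dim U+\dim W$, hence $\dim(U+W)\le \dim U+\dim W-\dim(U\cap W)$. Apply this with $U:=\sp(X)+g$ and $W:=\sp(Y)+g$. Then $U+W=\sp(X\cup Y)+g$, so $\dim(U+W)=\alpha$-value at $X\cup Y$ (up to the additive constant). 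For the intersection term, note $U\cap W\supseteq \sp(X\cap Y)+g$, since $\sp(X\cap Y)\subseteq \sp(X)\subseteq U$ and likewise $\subseteq W$, and $g\subseteq U\cap W$; hence $\dim(U\cap W)\ge d(\sp(X\cap Y)+g)$, which is the $\alpha$-value at $X\cap Y$. Combining, $d(U)+d(W)\ge d(U+W)+d(U\cap W)\ge d(\sp(X\cup Y)+g)+d(\sp(X\cap Y)+g)$, i.e. $\alpha(X)+\alpha(Y)\ge\alpha(X\cup Y)+\alpha(X\cap Y)$. Adding the equality for $\beta$ gives submodularity of $r_{F,g,c}$.

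The only mild subtlety — and the one place to be careful rather than an actual obstacle — is the inclusion $U\cap W\supseteq \sp(X\cap Y)+g$ (as opposed to equality, which can fail): we only need the displayed direction of the dimension inequality, so a lower bound on $\dim(U\cap W)$ is exactly what is wanted, and the inclusion provides it. Everything else is routine linear algebra, and no large-characteristic or field-size hypothesis is needed here.
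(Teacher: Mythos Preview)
Your proof is correct and follows essentially the same approach as the paper: both use the dimension formula $\dim U+\dim W=\dim(U+W)+\dim(U\cap W)$ with $U=\sp(X\cup\{g\})$, $W=\sp(Y\cup\{g\})$, together with the key inclusion $U\cap W\supseteq \sp((X\cap Y)\cup\{g\})$, and both handle the complement-sum term via modularity (the paper carries it along explicitly rather than isolating it as your $\beta$). Your decomposition $r=\alpha+\beta$ is a slightly cleaner packaging of the same argument.
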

\begin{proof}
To simplify the notation, and as $F,g,c$ are fixed, we write for short $r=r_{F,g,c}$. Let $X,Y\subset F$. 
We need to show 
$$
r(X)+r(Y)\ge r(X\cup Y)+r(X\cap Y).
$$
Put $f_X:=\sp(X\cup\{g\})$.
By definition, we have
\begin{align*}
r(X)+r(Y)
&=d(X\cup\{g\})+d(Y\cup\{g\})
+\sum_{f\in \bar X}d(f)+\sum_{f\in\bar Y}d(f)-c|\bar X|-c|\bar Y|-2c\\
&=d(f_X)+d(f_Y)
+\sum_{f\in \bar X}d(f)+\sum_{f\in\bar Y}d(f)-c|\bar X|-c|\bar Y|-2c.
\end{align*}
By basic linear algebra, we have the identity
$$
d(f_X)+d(f_Y)=d(\sp(f_X\cup f_Y))+d(f_X\cap f_Y).
$$
Thus the last equality, after some rearranging, is
\begin{align*}
r(X)&+r(Y)=
\\
&\Big(d(\sp(f_X\cup f_Y))-c+\sum_{f\in \bar X\cap\bar Y}d(f)-c|\bar X\cap \bar Y|\Big)
+\Big(d(f_X\cap f_Y)-c+
 \sum_{f\in \bar X\cup\bar Y}d(f)-c|\bar X\cup \bar Y|\Big)\end{align*}
Noting that $\sp(f_X\cup f_Y)=\sp (f_{X\cup Y})$ and 
that 
$\sp(f_X\cap f_Y)\supset\sp (f_{X\cap Y})$, we get
\begin{align*}
r(X)+r(Y)
&\ge\left(d(f_{X\cup Y})-c+\sum_{f\in \overline {X\cup Y}}d(f)-c|\overline {X\cup Y}|\right)
+\left(d(f_{X\cap Y})-c+\sum_{f\in \overline {X\cap Y}}d(f)-c|\overline {X\cap Y}|\right)\\
&=r(X\cup Y)+r(X\cap Y).
\end{align*}
This proves the lemma.
\end{proof}

\subsection{Inserting one subspace}\label{sec:minpart}

We are now ready to describe in detail Step 2.1 which computes $\widehat F_i$ given $\widehat F_{i-1}$ and $f_i$.
More precisely, we describe a subroutine that receives as an input a family $F$ with $F=\widehat F$ and a subspace $g$, and outputs $\Pi^*(F\cup \{g\})$.

We will need the following observation.
\begin{lem}\label{insert1}
Let $G=F\cup \{g\}$ be a family of subspaces in $\K^d$. 
Let $Q_g\in\Pi^*(G)$ be the part that contains the subspace $g$.
Then 
$$
\Pi^*(G)\setminus\{Q_g\}\subset\Pi^*(F).
$$
\end{lem}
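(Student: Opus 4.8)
The plan is to show that removing the ``new'' part $Q_g$ from $\Pi^*(G)$ leaves a collection of parts, each of which is exactly a part of $\Pi^*(F)$. The natural tool is Lemma~\ref{mainlem} applied to the pair $F \subset G$: for every $P \in \Pi^*(F)$, the part $P$ is contained in some part of $\Pi^*(G)$. Equivalently, $\Pi^*(F)$ is a refinement of $\Pi^*(G) \cap F$ (this is precisely Corollary~\ref{monotone} with the roles $G \subset F$ read as $F \subset G$). So first I would record that $\Pi^*(G)\cap F$ is a coarsening of $\Pi^*(F)$, and note that $\Pi^*(G)\cap F = (\Pi^*(G)\setminus\{Q_g\}) \cup \{Q_g \setminus \{g\}\}$ if $Q_g \neq \{g\}$, or simply $\Pi^*(G)\setminus\{Q_g\}$ if $Q_g = \{g\}$; in either case each $P' \in \Pi^*(G)\setminus\{Q_g\}$ is a union of parts of $\Pi^*(F)$.

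The remaining task is the reverse: each such $P'$ must be a single part of $\Pi^*(F)$, not a strict union of several. Here I would argue by a swapping/exchange argument showing that if some $P' \in \Pi^*(G)\setminus\{Q_g\}$ properly refined in $\Pi^*(F)$, then we could construct a partition of $G$ that does at least as well as $\Pi^*(G)$ but with more parts, contradicting minimality (or strictly better, contradicting $\rho_c$-optimality). Concretely: let $\Pi_0 := \Pi^*(F)$. Consider the partition $\Pi' $ of $G$ obtained by taking $\Pi^*(F)$ restricted to $F \setminus Q_g$... but this requires knowing how $Q_g \setminus \{g\}$ sits inside $\Pi^*(F)$. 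The cleaner route: since each $P' \in \Pi^*(G)\setminus \{Q_g\}$ lies in $F$, apply Lemma~\ref{mainlem} with the roles $F \leftarrow G$, $G \leftarrow F$, $Q \leftarrow$ (a part of $\Pi^*(F)$ contained in $P'$) to get one inclusion, and then also use Lemma~\ref{mainlem} in the form that says $\Pi^*(P')=\{P'\}$ — which follows because $P' \in \Pi^*(G)$ forces $P'$ to be its own minimal partition (apply Lemma~\ref{mainlem} to $P' \subset G$, with $Q := P'$, to see $P'$ sits in one part, hence equals one part; then minimality of $\Pi^*(P')$ and the fact that $\{P'\}$ attains $\rho_c(P')$ with the fewest parts gives $\Pi^*(P')=\{P'\}$). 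Now applying Lemma~\ref{mainlem} to $P' \subset F$ with $Q := P'$ (valid since $\Pi^*(P')=\{P'\}$ and $P' \in \Pi^*(P')$, $P' \subset F$) yields that $P'$ is contained in one part of $\Pi^*(F)$. Combined with the refinement direction from the first paragraph (each part of $\Pi^*(G)\cap F$ is a union of parts of $\Pi^*(F)$), we get that $P'$ equals a single part of $\Pi^*(F)$.

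So the skeleton is: (1) use Lemma~\ref{mainlem}/Corollary~\ref{monotone} to get $\Pi^*(F)$ refines $\Pi^*(G)\cap F$; (2) for each $P' \in \Pi^*(G)\setminus\{Q_g\}$, show $\Pi^*(P')=\{P'\}$ by two applications of Lemma~\ref{mainlem} (first $P'\subset G$ to see $P'$ is a single part of its own minimal partition, i.e. a ``$\hat F$-type'' atom); (3) then Lemma~\ref{mainlem} applied to $P'\subset F$ with $Q:=P'$ gives $P'\subseteq$ some part of $\Pi^*(F)$, and combined with (1) this inclusion is an equality; (4) conclude $\Pi^*(G)\setminus\{Q_g\}\subseteq\Pi^*(F)$. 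I expect the only mildly delicate point to be step (2) — unpacking why membership in a minimal partition forces a set to be its own minimal partition — but this is essentially the same bookkeeping already used in the proof that $\Pi^*(\hat F)$ is the partition into singletons (Lemma~\ref{parthatF}) and in the monotonicity lemma, so it should go through cleanly; everything else is a direct invocation of Lemma~\ref{mainlem}.
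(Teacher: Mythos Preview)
Your proposal is correct and follows essentially the same approach as the paper: two applications of Lemma~\ref{mainlem}, one in each direction, to show that any $P'\in\Pi^*(G)\setminus\{Q_g\}$ both contains and is contained in a part of $\Pi^*(F)$. Your step~(2), establishing $\Pi^*(P')=\{P'\}$, is an unnecessary detour---since already $P'\in\Pi^*(G)$ and $P'\subset F$, Lemma~\ref{mainlem} applies directly (with the original $G$ playing the role of ``$G$'') to give $P'$ contained in a part of $\Pi^*(F)$, which is exactly how the paper argues.
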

\begin{proof}
For every $Q\in\Pi^*(G)\setminus\{Q_g\}$, we have $Q\subset F$.
By Lemma~\ref{mainlem}, there exists $P\in\Pi^*(F)$ such that $Q\subset P$.
Clearly, we also have $P\subset G$. Applying Lemma~\ref{mainlem} once again, we get that also $P\subset Q$. Thus, $P=Q$ which means that $Q\in \Pi^*(F)$. 
\end{proof}

\begin{cor}\label{cor:compr}
Let $F$ be a family of $n$ subspaces in $\K^d$ with $\hat F=F$ and let $g$ be another subspace in $\K^d$.
Then $\rho_c(F\cup \{g\})=r_{F,g,c}^*$ and 
$$
\Pi^*(F\cup\{g\})=\{X_{F,g,c}^*\cup \{g\}\}\cup\{\{f\}\mid f\in F\setminus X_{F,g,c}^*\},
$$
where $X_{F,g,c}^*$ and $r_{F,g,c}^*$ are as defined in Section~\ref{sec:submod}.\end{cor}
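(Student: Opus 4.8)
The plan is to identify the partitions of $F\cup\{g\}$ that are candidates for $\Pi^*(F\cup\{g\})$ with the subsets $X\subset F$ over which $r_{F,g,c}$ is minimized, and then check that the two minimization problems have the same optimal value and corresponding optimizers. First I would use Lemma~\ref{insert1}: writing $Q_g$ for the part of $\Pi^*(F\cup\{g\})$ containing $g$, every other part lies in $\Pi^*(F)$. Since $\hat F=F$, Lemma~\ref{parthatF} gives $\Pi^*(F)=\{\{f\}\mid f\in F\}$, so the parts of $\Pi^*(F\cup\{g\})$ other than $Q_g$ are singletons $\{f\}$ with $f\in F\setminus X$, where $X:=Q_g\setminus\{g\}\subset F$. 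Hence $\Pi^*(F\cup\{g\})$ has exactly the form
$$
\Pi_X:=\{X\cup\{g\}\}\cup\{\{f\}\mid f\in F\setminus X\}
$$
for some $X\subset F$.

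Next I would compute $\rho_c(F\cup\{g\},\Pi_X)$ for an arbitrary $X\subset F$ directly from the definition: the part $X\cup\{g\}$ contributes $d(X\cup\{g\})-c$, and each singleton $\{f\}$ with $f\in\overline X=F\setminus X$ contributes $d(f)-c$. Summing, $\rho_c(F\cup\{g\},\Pi_X)=d(X\cup\{g\})-c+\sum_{f\in\overline X}(d(f)-c)=r_{F,g,c}(X)$, which is precisely the set function of Section~\ref{sec:submod}. Now observe that for \emph{every} $X\subset F$, $\Pi_X$ is a legitimate partition of $F\cup\{g\}$, so $\rho_c(F\cup\{g\})\le\min_{X\subset F}r_{F,g,c}(X)=r_{F,g,c}^*$. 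Conversely, by the structural fact above, $\Pi^*(F\cup\{g\})=\Pi_{X_0}$ for some particular $X_0\subset F$, so $\rho_c(F\cup\{g\})=r_{F,g,c}(X_0)\ge r_{F,g,c}^*$. This gives $\rho_c(F\cup\{g\})=r_{F,g,c}^*$.

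For the identification of the partition itself, I would combine the two inequalities with the minimality (in number of parts) built into $\Pi^*$. We have $\Pi^*(F\cup\{g\})=\Pi_{X_0}$ with $r_{F,g,c}(X_0)=r_{F,g,c}^*$, i.e.\ $X_0$ is a minimizer of $r_{F,g,c}$, so we may take $X_{F,g,c}^*=X_0$, which yields the displayed formula for $\Pi^*(F\cup\{g\})$. The one point that needs a word of care is well-definedness: if $r_{F,g,c}$ has several minimizers, different choices of $X_{F,g,c}^*$ could a priori give different $\Pi_X$, yet $\Pi^*(F\cup\{g\})$ is unique by Corollary~\ref{unique}. The resolution is that $\Pi^*(F\cup\{g\})$ is both a $\rho_c$-minimizer and has the fewest parts; among the partitions $\Pi_X$ attaining $r_{F,g,c}^*$, the one with the largest $|X|$ (equivalently, fewest singletons) is forced, so $X_{F,g,c}^*$ is in fact uniquely determined as the maximal minimizer of $r_{F,g,c}$, and any minimizing oracle returning it suffices for the algorithm. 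This bookkeeping about the choice of minimizer is the only mildly delicate step; everything else is a direct unwinding of definitions together with Lemmas~\ref{insert1} and \ref{parthatF}.
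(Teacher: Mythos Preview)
Your proof is correct and follows essentially the same approach as the paper's (one-line) proof, which simply cites the definitions of $\rho_c$, $r_{F,g,c}^*$ and Lemma~\ref{insert1}; you have faithfully expanded those details, additionally invoking Lemma~\ref{parthatF} to translate the hypothesis $\hat F=F$ into $\Pi^*(F)=\{\{f\}\mid f\in F\}$. Your closing remark on the non-uniqueness of $X_{F,g,c}^*$ and its resolution via the ``fewest parts'' clause in the definition of $\Pi^*$ is a genuine clarification that the paper leaves implicit.
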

\begin{proof}
This follows from the definitions of $\rho_c$ and $r_{F,g,c}^*$, combined with Lemma~\ref{insert1}.
\end{proof}

\begin{proof}[Proof of Lemma~\ref{lem:step2.1}.]
Combinig Corollary~\ref{cor:compr} with Theorem~\ref{Schr}, we get that the computation in Step 2.1 can be done in strongly-polynomial time. 
\end{proof}
%%%%%%%%%5

\section{Intersecting subspaces with a hyperplane}\label{sec:Lov}

In this section we state (and reprove) a result of Lov\'asz~\cite{Lov}, which explains the source of the function $\rho$ (more precisely, taking $\rho_c$ with  $c=1$) as the dimension of the intersections of a family of subspaces with a  hyperplane in ``general position''. This connection has been used by Lov\'asz to study certain questions about matroids in \cite{Lov}, and by Lov\'asz and Yemini in \cite{LoYe} to study rigid structures in $\R^2$. We extend Lov\'asz' treatment to arbitrary fields $\K$.

In Theorem~\ref{codimk} below, we further extend Lov\'asz's result, in a straightforward manner, to apply to the  intersection of a family of subspaces with an arbitrary  subspace  (of any co-dimension) in ``general position'', instead of only a (co-dimension 1) hyperplane.

%\Anote{Addded}
Lov\'asz~\cite{Lov} uses a very specific notion of genericity, which he calls {\em general position} defined below, and shows that $\rho$ correctly computes the dimension of the intersection when the hyperplane is in general position with respect to the given family of subspaces. In Theorem~\ref{generic} we will prove that indeed ``general position'' is a generic property, namely holds for almost all hyperplanes. This will complete the connection with the PIT problem solved in this paper.

%After that, we motivate the notion of genericity used above, which will be related to the PIT problem in the following section.

%\Anote{Changed X,Y,Z to A,B,C. I think it is local, but we need to verify it does not appear elsewhere}

A {\it hyperplane} in $\K^d$ is a subspace (subspace of $\K^d$) of codimension 1.
Let $F$ be a family of (nonzero) subspaces in $\K^d$ and let $h\subset\K^d$ be a hyperplane in $\K^d$. 
We denote by $F\cap h$ the family $\{f\cap h\mid f\in F\}$.
Following Lov\'asz, we have the following definition:
\begin{defin}[General Position]\label{def:genp}
We say that $h$ is in {\it general position} with respect to $F$ if,
for every $A,B,C \subset F$, with $A$ nonempty, we have:\\
(i) If $\sp(A)\subset h$, then $\sp(A)=\{0\}$.\\
(ii) If\footnote{Note that here one can take any of $A,B,C$ to be the empty set, and we interpret $\sp(\emptyset)=\{0\}$.}
$$
\sp\left((A\cap h)\cup B\right)\cap \sp\left((A\cap h)\cup C\right)\subset h,
$$
then$$
\sp\left((A\cap h)\cup B\right)\cap \sp\left((A\cap h)\cup C\right)= \sp (A\cap h).
$$ 
\end{defin}

\noindent {\it Remark.} In Section~\ref{sec:reduction}, we prove (in Theorem~\ref{generic}) that being in general position with respect to a given family $F$ is a generic property; this fact is mentioned in \cite{Lov} without a proof.

%\Onote{I changed $\R$ to $\K$, even though Lovasz's proof is originally stated over $\R$}
\begin{thm}[{\bf Lov\'asz~\cite[Theorem 2.3]{Lov}}]\label{Lovthm}
Let $F$ be a family of subspaces in $\K^d$. Let $h$ be a hyperplane in $\K^d$ in general position with respect to $F$. Then
$$
\rho_1(F)=d(F\cap h) %\{f\cap h\mid f\in F\})
$$
\end{thm}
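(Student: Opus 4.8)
\textbf{Proof plan for Theorem~\ref{Lovthm}.}

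The plan is to prove the two inequalities $d(F\cap h)\le \rho_1(F)$ and $d(F\cap h)\ge \rho_1(F)$ separately, the first being the easy direction and the second requiring the general position hypothesis in an essential way.

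For the upper bound $d(F\cap h)\le \rho_1(F)$, I would take an arbitrary partition $\Pi$ of $F$ and show $d(F\cap h)\le \sum_{P\in\Pi}(d(P)-1)$, from which the bound follows by taking $\Pi$ minimal. The point is that for each part $P$, the family $P\cap h=\{f\cap h\mid f\in P\}$ lies inside $\sp(P)\cap h$, which has dimension $d(P)-1$ provided $\sp(P)\not\subset h$; when $\sp(P)\subset h$, part (i) of general position forces $\sp(P)=\{0\}$, which is excluded since subspaces in $F$ are nonzero, so in fact every part has $\sp(P)\not\subset h$. Since $F\cap h=\bigcup_{P\in\Pi}(P\cap h)$, we get $d(F\cap h)=d\big(\sp\bigcup_{P\in\Pi}(P\cap h)\big)\le \sum_{P\in\Pi}d(\sp(P)\cap h)=\sum_{P\in\Pi}(d(P)-1)$, by subadditivity of $\dim\sp$ over a union. (Strictly, I should double check that general position is not even needed here beyond ruling out $\sp(P)\subset h$, and the nonzero assumption already does that via (i); I'll keep the (i) reference for safety.)

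The lower bound $d(F\cap h)\ge \rho_1(F)$ is the heart of the matter, and the main obstacle. Here I would argue by induction on $|F|$ (or on $\sum_{f\in F}d(f)$, whichever runs more smoothly), constructing a witnessing partition. The natural approach: pick $f\in F$, apply induction to $F\setminus\{f\}$ to get a minimal partition realizing $d((F\setminus\{f\})\cap h)=\rho_1(F\setminus\{f\})$, and then analyze how adjoining $f\cap h$ changes the span — either it increases $d(\cdot\cap h)$ by exactly $d(f)-1$ (so $f$ becomes its own singleton part and $\rho_1$ also goes up by $d(f)-1$), or the increase is smaller, forcing $f$ to merge with existing parts; condition (ii) of general position is exactly what is needed to control the dimension of the intersection $\sp((A\cap h)\cup B)\cap\sp((A\cap h)\cup C)$ when two such merges interact, ensuring it equals $\sp(A\cap h)$ rather than something larger, which is what keeps the submodular/dimension bookkeeping tight. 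I would set up the induction so that condition (ii), applied with $A$ the union of parts being merged and $B,C$ ranging over the relevant sub-pieces, pins down the "defect" $s_i$ terms appearing in the restriction argument of Lemma~\ref{mainlem}, and thereby matches the $\rho_1$ value part-by-part. The delicate point, and where I expect to spend the most care, is choosing which $f$ to remove (or more robustly, choosing the order in which to build up $F$) so that the inductive hypothesis yields a partition compatible with general position restricted to the smaller family — one must check that general position is inherited by subfamilies, which is immediate from the definition since the quantifiers in Definition~\ref{def:genp} range over all $A,B,C\subset F$ and hence over all subsets of any subfamily. Once the bookkeeping is arranged, both inequalities meet and the theorem follows.
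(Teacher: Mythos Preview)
Your upper bound argument is exactly the paper's: for any partition $\Pi$, each part $P$ has $\sp(P)\not\subset h$ by property~(i), hence $d(P\cap h)\le d(\sp(P)\cap h)=d(P)-1$, and subadditivity over the parts gives $d(F\cap h)\le\rho_1(F,\Pi)$.

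For the lower bound, however, your plan diverges from the paper and leaves a genuine gap. The paper does \emph{not} proceed by removing one element at a time. Instead it introduces an equivalence relation on $F$ --- declaring $f_1\sim f_2$ when $\sp((F\cap h)\cup\{f_1\})=\sp((F\cap h)\cup\{f_2\})$ --- and proves that the resulting parts $P_1,\dots,P_m$ satisfy the \emph{direct sum} decomposition $\sp(F\cap h)=\bigoplus_i\sp(P_i\cap h)$. Property~(ii) is used in a very specific way to establish this: one shows inductively on $k$ that $\sp(P_{i_1}'\cup\cdots\cup P_{i_{k-1}}')\cap\sp(P_{i_k}')=\{0\}$, by applying~(ii) with $A=P_{i_2}'\cup\cdots\cup P_{i_{k-1}}'$, $B=P_{i_1}$, $C=P_{i_k}$, after first checking via the definition of $\sim$ that the relevant intersection lies in $h$. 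Once the direct sum is in hand, if $m=1$ one is done (since then $d(F\cap h)=d(F)-1$), and if $m>1$ one recurses into each $P_i$, which is strictly smaller. This is the missing structural idea in your plan.

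Your element-by-element induction does not obviously produce this. When you adjoin $f$ and the increase $\delta$ in $d(\cdot\cap h)$ is strictly less than $d(f)-1$, you need to exhibit a partition of $F$ whose $\rho_1$-value exceeds the old one by at most $\delta$; but $f$ may interact with \emph{several} parts of the inductive partition simultaneously, and your description of how~(ii) would control the resulting merge (``$A$ the union of parts being merged, $B,C$ ranging over sub-pieces'') is not precise enough to yield a bound. Moreover, your appeal to the $s_i$ bookkeeping of Lemma~\ref{mainlem} is a red herring: that lemma compares two partitions purely at the level of the submodular function $d(\cdot)$ and says nothing about $d(\cdot\cap h)$ or about property~(ii). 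To make an incremental approach work you would essentially have to rediscover the direct-sum lemma above, so I recommend adopting the paper's equivalence-relation construction directly.
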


For completeness, we introduce a slightly more detailed proof, based on the 
line of argument from~\cite{Lov}.

\begin{proof}[Proof of Theorem~\ref{Lovthm}]
Fix $F$ and $h$ as in the statement.
Let $F':=F\cap h$. %\{f\cap h\mid f\in F\}
We need to show that $\rho_1(F)=d(F')$.

We first prove that $d(F')\le \rho_1(F)$.
That is, equivalently, we show that 
$
d(F')\le \rho_1(F,\Pi),
$
for every partition $\Pi$ of the family $F$.
Let $\Pi$ be a partition of $F$.
For $P\in\Pi$, let $P':=P\cap h$. Then
$$
\sp({ F'})=\sp\left(\bigcup_{P\in \Pi}\sp(P')\right)
$$
and hence 
$$
d(F')\le\sum_{P\in\Pi}d(P').
$$
Note also that, for every $P\in\Pi$, we have $\sp(P')\subset\sp(P)\cap h$ and hence
$$
d(P')\le d(\sp(P)\cap h)= d(P)-1,
$$
where here we used property (i) of the general position assumption on $h$, namely, we used the fact that $\sp(P)$ is not contained in $h$. We conclude that 
\begin{equation}\label{oneside}
d(F')\le\sum_{P\in\Pi}(d(P)-1),
\end{equation}
for every partition $\Pi$ of $F$. This implies $d(F')\le \rho_1(F)$.

To prove the reverse inequality, we show that, for a certain partition $\Pi^*$ of $F$, the inequality \eqref{oneside} is in fact tight.
We will construct $\Pi^*$ explicitly subsequently refining a given partition. We describe the first step, which is indeed the general step (the proof will allow us to proceed recursively).

Define an equivalence relation on $F$ as follows: For $f_1,f_2\in F$, $f_1\sim f_2$
if and only if 
$$\sp(F'\cup\{f_1\})=\sp(F'\cup\{f_2\}).$$
Let $\{P_1,\ldots,P_m\}$ be the partition (equivalence classes) of $F$ induced by the relation $\sim$.

The main idea is to prove that after intersection with $h$, the spans of the parts $P'_i$ become a direct sum decomposition of $\sp(F')$.
As we will see below, $\Pi^*$ will be achieved by  refining the partition $\{P_1,\ldots,P_m\}$ inductively. 
\begin{lem}\label{lem:dirsum1}
We have \begin{equation}\label{dirsum1}
\sp (F')=\oplus_{i=1}^m \sp(P_i').
\end{equation}
\end{lem}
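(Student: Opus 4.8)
\textbf{Plan for proving Lemma~\ref{lem:dirsum1}.}

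The plan is to establish the direct sum decomposition $\sp(F') = \oplus_{i=1}^m \sp(P_i')$ in two parts: first that the $\sp(P_i')$ together span $\sp(F')$, and second that the sum is direct, i.e., for each index $i$ the subspace $\sp(P_i')$ meets the span of all the others trivially. The spanning part is immediate: each $f' \in F'$ lies in some $P_j'$, so $\sp(F') = \sp(\bigcup_i \sp(P_i'))$. The substance is the directness, and this is where property (ii) of the general position hypothesis (Definition~\ref{def:genp}) must do the work.

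For the directness, fix an index $i$ and consider $W := \sp(P_i') \cap \sp(\bigcup_{j\neq i} P_j')$. I want to show $W = \{0\}$. The idea is that $W$ is a subspace sitting inside $h$ (since everything in sight is an intersection with $h$, hence contained in $h$), and it should be expressible in the form to which property (ii) applies. Concretely, pick any $f \in P_i$; then $P_i' \subseteq h$ and one would like to write $\sp(P_i') = \sp((A\cap h))$ for a suitable $A \subseteq F$ --- the natural candidate being $A = P_i$ itself, so $A \cap h = P_i'$ --- and then take $B$ to be (a set of subspaces whose intersections-with-$h$ span) the pieces from the other parts, with $C = \emptyset$. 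Then $\sp((A\cap h)\cup B) \cap \sp((A\cap h)\cup C)$ would compute $\sp(P_i') + (\text{other stuff})$ intersected against $\sp(P_i')$, which contains $W$; since this intersection lies in $h$, property (ii) forces it to equal $\sp(A\cap h) = \sp(P_i')$. That alone is not yet $W=\{0\}$, so the argument must be combined with the defining property of the equivalence relation $\sim$: the classes $P_1,\dots,P_m$ were built so that $f_1 \sim f_2$ iff $\sp(F'\cup\{f_1\}) = \sp(F'\cup\{f_2\})$, which should prevent a nonzero vector of $\sp(P_i')$ from being ``reachable'' from the other parts without collapsing the distinction between the classes.

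So the key steps, in order, are: (1) record the trivial spanning inclusion; (2) fix $i$, let $W$ be the intersection of $\sp(P_i')$ with the span of the rest, and observe $W \subseteq h$; (3) apply property (ii) with $A = P_i$ (so $A\cap h = P_i'$), with $B$ chosen so that $(A\cap h)\cup B$ spans $\sp(P_i') + \sp(\bigcup_{j\neq i}P_j')$ and with $C = \emptyset$, to learn that $\sp(P_i') + \sp(\bigcup_{j\neq i}P_j')$ intersected with $\sp(P_i')$ --- which is all of $\sp(P_i')$ --- equals $\sp(P_i')$ (a tautology, so one needs a subtler choice); the real move is (3') to apply property (ii) instead to a pair of the form $\sp((P_i'\cap h)\cup B_1)$ and $\sp((P_i'\cap h)\cup B_2)$ where $B_1, B_2$ separate out a hypothetical nonzero common vector, deriving a contradiction with the maximality/definition of the class $P_i$ under $\sim$; and (4) conclude $W = \{0\}$ for every $i$, hence the sum is direct.

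The main obstacle I anticipate is step (3'): getting the bookkeeping of property (ii) exactly right. Property (ii) is stated for three arbitrary subfamilies $A,B,C$ of $F$, and the subtlety is that $B$ and $C$ are families of the \emph{original} subspaces $f$, not of their truncations $f' = f\cap h$, whereas $W$ lives among the truncations. One must therefore pass carefully between $\sp(P\cup\{f\})$ and $\sp((P\cap h)\cup\{f\})$ and use property (i) to control the codimension drop, exactly as in the proof of the inequality \eqref{oneside}. I expect the clean way to close the argument is: suppose $0 \neq w \in W$; then $w$ lies in $h$ and in $\sp(P_i')$ and in $\sp(\bigcup_{j\neq i}P_j')$; pick $f\in P_i$ realizing part of this, note $\sp(F' \cup \{f\})$ is then not strictly larger than $\sp(F')$ in the relevant sense, and show $f$ would have to be equivalent under $\sim$ to members of some other $P_j$ --- contradicting that the $P_j$ are distinct equivalence classes. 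This forces $W=\{0\}$, completing the proof that the sum is direct and hence that \eqref{dirsum1} holds.
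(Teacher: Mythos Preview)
Your plan has a genuine gap: none of the choices of $A,B,C$ you propose in step (3)/(3') actually make property~(ii) bite, and the contradiction you sketch at the end does not follow.

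The reason your direct (non-inductive) attack fails is structural. Property~(ii) becomes useful only once you know the relevant intersection lies in $h$, and the paper's mechanism for that is the preliminary fact \eqref{Aij}: for $i\neq j$, $g_i\cap g_j=\sp(F')\subset h$, where $g_i=\sp(F'\cup\{f\})$ for any $f\in P_i$. This lets you bound $\sp((A\cap h)\cup B)\subset g_{i}$ and $\sp((A\cap h)\cup C)\subset g_{j}$ \emph{only when $B$ and $C$ are each contained in a single equivalence class} $P_{i}$, $P_{j}$. If you take $B=\bigcup_{j\neq i}P_j$ (or the analogous choice for $C$), there is no single $g_j$ containing $\sp((A\cap h)\cup B)$, so you have no reason to believe the intersection is in $h$, and property~(ii) is inapplicable. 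Conversely, if you take $C=\emptyset$ (as you try first), the intersection collapses to $\sp(A\cap h)$ and you get a tautology, as you noticed.

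Because each application of property~(ii) can confront only one class on each side, an inductive scheme is necessary. The paper argues by induction on $k$ that $\sp(P_{i_1}'\cup\cdots\cup P_{i_{k-1}}')\cap\sp(P_{i_k}')=\{0\}$. The inductive step takes $A=P_{i_2}\cup\cdots\cup P_{i_k}$, $B=P_{i_1}$, $C=P_{i_{k+1}}$: then $\sp((A\cap h)\cup B)\subset g_{i_1}$ and $\sp((A\cap h)\cup C)\subset g_{i_{k+1}}$, so \eqref{Aij} forces the intersection into $h$, and property~(ii) yields containment in $\sp(P_{i_2}'\cup\cdots\cup P_{i_k}')$; the induction hypothesis then finishes. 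Your proposal never isolates this ``one class on each side, truncated classes in the middle'' choice, which is the crux.

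Finally, the argument in your last paragraph does not work: from $0\neq w\in\sp(P_i')\cap\sp(\bigcup_{j\neq i}P_j')$ you cannot deduce that some $f\in P_i$ becomes $\sim$-equivalent to an $f'\in P_j$. The relation $\sim$ compares the spaces $\sp(F'\cup\{f\})$, which all strictly contain $\sp(F')$ by \eqref{dimAi}; a vector $w$ lying in $\sp(F')$ (as any such $w$ does) carries no information about those enlargements.
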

Before we prove Lemma~\ref{lem:dirsum1}, we establish some preliminary claims.
Let $g_1,\ldots,g_m$ be 
the (distinct) subspaces $g_i:=\sp(F'\cup \{f\})$ for some $f\in P_i$ (note that by construction $g_i$ is independent of the specific element $f\in P_i$ that we take).

We observe that, for every $1\le i\le m$, 
\begin{equation}\label{dimAi}
d(g_i)=d(F')+1.
\end{equation}
Indeed,
by property (i) of general position,  $f$ is not contained in $h$ and $\dim (f\cap h)=\dim(f)-1$, for every $f\in F$. Hence, for every $f\in F$, 
one can choose a basis for $f$ with all elements of the basis in $h$ except for exactly one element $b_f$ which is not in $h$. 
Thus, fixing any $f\in P_i$, we have
$$
g_i=\sp(F'\cup\{f\})=\sp(F'\cup\{b_f\})=\sp(F')\oplus \sp\{b_f\}.
$$
Thus, $d(g_i)=d(F')+1$, as needed.

Next, we observe that, for $i\neq j$, we have 
\begin{equation}\label{Aij}
g_i\cap g_j=\sp(F')\subset h.
\end{equation}
Indeed, by construction $g_i\neq g_j$, and in particular $g_i\cap g_j\subsetneq g_i$.
Combining this with \eqref{dimAi}, we get
$d(g_i\cap g_j)\le d(g_i) -1= d(F')$.
By the definition of $g_i,g_j$,  we also have $\sp(F')\subset  g_i\cap g_j$. Hence $g_i\cap g_j=\sp(F')$ and \eqref{Aij} follows.

\begin{proof}[Proof of Lemma~\ref{lem:dirsum1}]
Here property (ii) of the general position definition will be crucial for the induction step.
If $m=1$ then \eqref{dirsum1} clearly holds. 
For $m\ge 2$, it suffices to show that, for every $2\le k\le m$ and every distinct indices $1\le i_1,\dots,i_k\le m$, one has
\begin{equation}\label{dirsum}
\sp(P_{i_1}'\cup\cdots\cup P_{i_{k-1}}')\cap\sp(P_{i_{k}}')=\{0\}.
\end{equation}

We prove \eqref{dirsum} by induction on $k$.
For $k=2$,
we need to show that $\sp(P_{i_1}')\cap\sp(P_{i_2}')=\{0\}$, for every distinct $1\le i_1,i_2\le m$.
By the definition of the subspaces $g_{i_1},g_{i_2}$ and applying \eqref{Aij}, we have
\begin{align*}
\sp(P_{i_1})\cap\sp(P_{i_2})
&\subset g_{i_1}\cap g_{i_2}\subset h.
\end{align*}
Since $h$ is in general position, using property (ii),  this implies that $\sp(P_{i_1})\cap\sp(P_{i_2})=\{0\}$.
This proves the induction base case $k=2$.

Assume next that \eqref{dirsum} holds for some $2\le k\le m-1$ fixed and for every distinct indices $1\le i_1,\ldots,i_{k}\le m$.
Let $1\le i_1,\ldots,i_{k+1}\le m$ be some distinct indices. To establish the induction step we need to prove
\begin{equation}\label{step}
\sp(P_{i_1}'\cup\cdots\cup P_{i_{k}}')\cap\sp(P_{i_{k+1}}')=\{0\}.
\end{equation}
Observe that in order to prove \eqref{step} it suffices to show that
\begin{equation}\label{dirs}
\sp(P_{i_1}'\cup\cdots\cup P_{i_k}')\cap\sp(P_{i_2}'\cup\cdots P_{i_{k+1}}')\subset\sp(P_{i_2}'\cup\cdots\cup P_{i_k}').
\end{equation}
Indeed, assume that \eqref{dirs} holds. Then
\begin{align*}
\sp(P_{i_1}'\cup\cdots\cup P_{i_k}')\cap\sp(P_{i_{k+1}}')
&=
\sp(P_{i_1}'\cup\cdots\cup P_{i_k}')\cap\sp(P_{i_2}'\cup\cdots\cup P_{i_{k+1}}')\cap\sp(P_{i_{k+1}}')\\
&\subset\sp(P_{i_2}'\cup\cdots\cup P_{i_k}')\cap \sp(P_{i_{k+1}}'),
\end{align*} 
where the first line uses the trivial fact that $\sp(P_{i_{k+1}}')\subset \sp(P_{i_2}'\cup\cdots\cup P_{i_{k+1}}')$ and the
second line is due to \eqref{dirs}.
By the induction hypothesis, we have 
$$
\sp(P_{i_2}'\cup\cdots\cup P_{i_k}')\cap \sp(P_{i_{k+1}}')=\{0\}.
$$
Thus, assuming that \eqref{dirs} is true, \eqref{step} follows.

Finally, we now prove \eqref{dirs}.
Note that, by the definition of the subspaces $g_i$ and using \eqref{Aij}, we have
\begin{align*}
\sp(P_{i_1}\cup(P_{i_2}'\cup\cdots\cup P_{i_k}'))\cap\sp((P_{i_2}'\cup\cdots\cup P_{i_k}')\cup P_{i_{k+1}})\subset g_{i_1}\cap g_{i_{k+1}}\subset h.
\end{align*}
Hence, our assumption that $h$ is in general position with respect to $F$ implies that in fact
\begin{align*}
\sp(P_{i_1}\cup(P_{i_2}'\cup\cdots\cup P_{i_k}'))\cap\sp((P_{i_2}'\cup\cdots\cup P_{i_k}')\cup P_{i_{k+1}})\subset \sp(P_{i_2}'\cup\cdots\cup P_{i_k}').
\end{align*}
This clearly implies \eqref{dirs}. 
Thus we have established the inductive step and 
this completes the proof of Lemma~\ref{lem:dirsum1}.
\end{proof}

Recall that our goal is to show that \eqref{oneside} is tight for some partition $\Pi^*$ of $F$.
In view of Lemma~\ref{lem:dirsum1}, for the partition $\{P_1,\ldots,P_m\}$ defined above, one has
\begin{equation}\label{sumGi'}
d(F')=\sum_{i=1}^md(P_i').
\end{equation}
That is, we expressed the quantity $d(F')$ as the sum of the quantities $d(P_i')$ for certain subfamilies $P_1,\ldots,P_m$ of $F$. 
This allows to prove the existence of $\Pi^*$ using induction on the size of $F$.

If $|F|=1$, the unique partition on $F$ clearly attains \eqref{oneside}.
For $|F|\ge 1$, let $\{P_1,\ldots,P_m\}$ be the partition of $F$ given by Lemma~\ref{lem:dirsum1}, satisfying \eqref{sumGi'}.
If $m=1$, 
the identity \eqref{sumGi'}, combined with \eqref{dimAi}, gives
\begin{equation*}
d(F')=d(P_1)-1.
\end{equation*}
This means that \eqref{oneside} is tight, and thus $\Pi^*=\{P_1\}$.
If $m>1$, then each subfamily $P_i$ has fewer elements than $F$.
Applying the induction hypothesis, there exist subpartitions 
$\Pi_i^*=\{P_{i1},\ldots,P_{im_i}\}$ of $P_i$, for each $1\le i\le m$, satisfying
$$
d(P_i)=\sum_{j=1}^{m_i}(d(P_{ij})-1).
$$
Combined with \eqref{sumGi'}, we get
$$
d(F')=\sum_{i=1}^m\sum_{j=1}^{m_i}(d(P_{ij})-1).
$$
So $\Pi^*:=\bigcup_{i=1}^m\Pi_i^*$ forms a partition of $F$ that attains \eqref{oneside}. 
This completes the proof of the theorem.
\end{proof}

%\Anote{Added this. It is important to the error probability in genericity and the size of $\K$.}
\begin{remark}\label{remarkLovthm}
Note that in the inductive proof of Lemma~\ref{lem:dirsum1}, it was sufficient to consider not {\em all} $k$-subsets of the $P_i$ in the given partition, but rather simply on intervals $P_2,P_3,\dots ,P_k$. The same induction on $k$ works without change. Thus even after refinement, in the proof of this theorem we never need to apply the 
 ``general position'' condition more than $|F|$ times. This will help us later bound  the show that $\rho_1(F)$ correctly computes $\dim (F \cap h)$ for most (or generic) hyperplanes $h$ even when $\K$ is finite and not too large.
 \end{remark}

%\Anote{Took this def out of the proof of the theorem}
We now generalize the theorem above to intersecting a family of subspaces with an arbitrary subspace. For this we need to extend the definition of ``general position''. 

Let $F$ be a family of subspaces in $\K^d$. Let $\{\x_1,\ldots,\x_k\}$ be a set of vectors, and define that the subspaces  $h_i=\{\x_1,\ldots,\x_i\}^\perp$.  Note that $h_i$ is of codimension $i$ in $\K^d$, and that $h'_i:=h_i\cap h_{i-1}$ is a hyperplane in $h_{i-1}$, for $i=1,\ldots,k$. We say that the subspace $h=h_k$ is in {\em general position} with respect to $F$ if for all $i\in [k]$ we have that the hyperplane $h'_i$ is in general position with respect to the family $F_i = F\cap h_{i-1}$.

\begin{thm}\label{codimk}
Let $F$ be a family of subspaces in $\K^d$. Let $h$ be a subspace in $\K^d$ of codimension $k$ in general position with respect to $F$. Then
$$
\rho_k(F)=d(F\cap h) %\{f\cap h\mid f\in F\})
$$
\end{thm}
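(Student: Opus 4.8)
\textbf{Proof plan for Theorem~\ref{codimk}.}
The plan is to induct on the codimension $k$, using Theorem~\ref{Lovthm} as the base case $k=1$ and the inductive structure built into the new definition of general position: the subspace $h = h_k$ is obtained from $h_{k-1}$ by intersecting with the hyperplane $h'_k$ of $h_{k-1}$, and $h'_k$ is assumed to be in general position (in the sense of Definition~\ref{def:genp}, but now working inside the ambient space $h_{k-1}$ rather than $\K^d$) with respect to the family $F_k = F\cap h_{k-1}$.

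First I would set up the induction. Write $F^{(i)} := F\cap h_i$ for $i = 0,\ldots,k$, so $F^{(0)} = F$ and $F^{(k)} = F\cap h$, and note $F^{(i)} = F^{(i-1)}\cap h'_i$ since $h_i = h_{i-1}\cap h'_i$ and intersecting with $h_{i-1}$ first does no harm once we are already inside $h_{i-1}$. By definition of general position for $h$, each $h'_i$ is in general position with respect to $F^{(i-1)}$ as a hyperplane in $h_{i-1}$. The inductive claim I would prove is that $d(F\cap h_i) = \rho_i(F)$ for all $i\in\{0,1,\ldots,k\}$; the case $i=k$ is the theorem. The base case $i=0$ is trivial ($\rho_0(F) = \min_\Pi \sum_{P\in\Pi} d(P) = d(F)$, attained by the one-block partition by submodularity of $d$, which gives $d(\sp F)$), and the case $i=1$ is exactly Theorem~\ref{Lovthm}.

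For the inductive step, assuming $d(F\cap h_{i-1}) = \rho_{i-1}(F)$, I would apply Theorem~\ref{Lovthm} to the family $G := F^{(i-1)} = F\cap h_{i-1}$ living in the ambient space $h_{i-1}$, with the hyperplane $h'_i$ of $h_{i-1}$. This gives $d(G\cap h'_i) = \rho_1(G)$, i.e. $d(F\cap h_i) = \rho_1(F\cap h_{i-1})$. The remaining task is the purely combinatorial identity
\begin{equation}\label{eq:rhochain}
\rho_1(F\cap h_{i-1}) = \rho_{i-1}(F) - (\text{something}) \quad\text{should in fact equal}\quad \rho_i(F),
\end{equation}
and I would expect the clean statement to be $\rho_1(F\cap h) = \rho_{c+1}(F)$ whenever $h$ is a hyperplane in general position with respect to $F$. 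Proving this is, I think, the main obstacle. The inequality $\rho_1(F\cap h) \le \rho_{c+1}(F)$ should follow by taking an optimal partition $\Pi$ for $\rho_{c+1}(F)$, restricting it to $F\cap h$, and using property (i) of general position ($d(\sp(P)\cap h) = d(P)-1$ for each block $P$), exactly as in the proof of \eqref{oneside}: each block contributes $d(P\cap h) - 1 \le (d(P)-1)-1 = d(P) - (c+1) \cdot$(no, one must track the constant), so $\sum_P (d(P\cap h) - 1) \le \sum_P (d(P) - 1) - |\Pi|$, which is at most $\rho_{c+1}(F)$ exactly when $c = $ the right value — this is where I need to be careful and likely the correct bookkeeping forces the shift $c \mapsto c+1$ per hyperplane. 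For the reverse inequality, I would run the same direct-sum argument as in Lemma~\ref{lem:dirsum1}: construct the partition of $F\cap h_{i-1}$ by the equivalence relation $f_1\sim f_2 \iff \sp((F\cap h_i)\cup\{f_1\}) = \sp((F\cap h_i)\cup\{f_2\})$, show via properties (i) and (ii) of general position of $h'_i$ that the images after intersecting with $h'_i$ form a direct sum, recurse, and lift the resulting tight partition back to a partition of $F$, tracking that $d(F\cap h_i)$ is expressed as $\sum_{\text{blocks}} (d(P) - i)$. Assembling these gives $d(F\cap h_k) = \rho_k(F)$ and completes the induction; the one genuinely delicate point throughout is verifying that "general position inside $h_{i-1}$" is exactly the hypothesis needed to rerun the Lemma~\ref{lem:dirsum1} argument one level down, which is precisely how the extended definition of general position was engineered, so it should go through with Remark~\ref{remarkLovthm} again controlling the number of times the condition is invoked.
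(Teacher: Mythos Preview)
Your overall scaffolding---induct on the codimension, peel off the last hyperplane $h'_k$ using Theorem~\ref{Lovthm} as a black box, and then reduce to a combinatorial identity relating $\rho_1(F\cap h_{k-1})$ to $\rho_k(F)$---is exactly the paper's route. The $\le$ direction you sketch (take an optimal partition for $\rho_k(F)$ and use $d(P\cap h_{k-1})\le d(P)-(k-1)$ via property~(i) iterated) is fine once the constants are tracked.

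The gap is in your $\ge$ direction. Re-running the direct-sum construction of Lemma~\ref{lem:dirsum1} inside $h_{k-1}$ only hands you a partition $\Pi'$ of $F\cap h_{k-1}$ with $d(F\cap h_k)=\sum_{P'\in\Pi'}(d(P')-1)$; lifting $\Pi'$ to a partition $\Pi$ of $F$ does \emph{not} give $d(F\cap h_k)=\sum_{P\in\Pi}(d(P)-k)$, because in general $d(P\cap h_{k-1})\ne d(P)-(k-1)$. What is true is $d(P\cap h_{k-1})=\rho_{k-1}(P)$, and this is the step you are missing: the induction hypothesis must be applied not to $F$ alone but to every block $P\subset F$ (legitimate because general position is inherited by subfamilies). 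The paper does precisely this: it substitutes $d(P')=\rho_{k-1}(P)$ into $\rho_1(F\cap h_{k-1})=\min_\Pi\sum_P(d(P')-1)$ to obtain the nested minimum $\min_\Pi\sum_P\bigl(\min_{\Pi_P}\sum_{Q\in\Pi_P}(d(Q)-(k-1))-1\bigr)$, and then shows this equals $\rho_k(F)$ by two one-line inequalities---take $\Pi_P=\{P\}$ for $\le$, and for $\ge$ use $\sum_{Q\in\Pi_P^*}(d(Q)-(k-1))-1\ge\sum_{Q\in\Pi_P^*}(d(Q)-k)$ together with the refined partition $\bigcup_P\Pi_P^*$. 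No further appeal to the direct-sum argument or to properties~(i)/(ii) is needed once Theorem~\ref{Lovthm} is in hand.
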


\begin{proof}
We prove by induction on the codimension $k$. The case $k=1$ is Theorem~\ref{Lovthm}.

Let $\x_1,\ldots,\x_k\in\K^d$ be vectors such that $h=\{\x_1,\ldots,\x_k\}^\perp$ is in general position with respect to $h$. 
We know that  $h'_k$ is in general position with respect to the family $F_k:=F\cap h_{k-1}$.
By Theorem~\ref{Lovthm} again, we have 
\begin{align*}
d(F\cap h)=d(F_k\cap h'_{k})&=\rho_1(F_k)\\
&=\min_{\Pi_k}\sum_{P'\in\Pi_k}(d(P')-1),
\end{align*} 
where the minimum ranges over all partitions $\Pi_k$ of $F_k$.
Note that $\Pi_k$ induces a partition $\Pi$ on $F$, in the obvious way.
Moreover, for every $P'\in \Pi_k$ there exists $P\subset F$ such that $P'=P\cap h_{k-1}$. By induction, we get
$$
d(P')=d(P\cap h_{k-1})=\rho_{k-1}(P).
$$
Thus,
\begin{align*}
d(F\cap h)
&=\min_\Pi\sum_{P\in \Pi}(\rho_{k-1}(P)-1)\\
&=\min_\Pi\sum_{P\in \Pi}\left(\left(\min_{\Pi_P}\sum_{Q\in\Pi_P}(d(Q)-k+1)\right)-1\right),
\end{align*}
where the first minimum (the outer one) in this exprssion is taken over all partitions $\Pi$ of $F$, and, fixing $\Pi$ and given $P\in \Pi$, the inner minimum is taken over all partitions $\Pi_P$ of the family $P$.

Note that, for any partition $\Pi$ of $F$, the partitions $\{\Pi_P\mid P\in \Pi\}$ induce a new partition $\Pi'$ which is a refinement of $\Pi$. Namely, $\Pi':=\bigcup_{P\in \Pi}\Pi_P$.
Note that taking $\Pi_P=\{P\}$ for each $P\in \Pi$, we get 
\begin{align}
d(F\cap h)
&\le \min_{\Pi}\sum_{P\in \Pi}\left(\left(\sum_{Q\in\{P\}}(d(Q)-k+1)\right)-1\right)
\nonumber\\ 
&=\min_{\Pi}\sum_{P\in \Pi}(d(P)-k)\nonumber\\
&=\rho_k(F).\label{lerhok}
\end{align}

We now prove the inverse inequality. Fix a partition $\Pi$ of $F$, and, for $P\in \Pi$, let $\Pi_P^*$ be a partition of $P$ that attains the minimum in 
$$
\min_{\Pi_P}\sum_{Q\in\Pi_P}(d(Q)-k+1).
$$
That is, the partitions $\Pi_P^*$ satisfy
$$
\sum_{P\in \Pi}\left(\left(\min_{\Pi_P}\sum_{Q\in\Pi_P}(d(Q)-k+1)\right)-1\right)=
\sum_{P\in \Pi}\left(\left(\sum_{Q\in\Pi_P^*}(d(Q)-k+1)\right)-1\right)
$$
Let $(\Pi')^*$ be the partition of $F$ induced by $\bigcup\{\Pi_P^*\mid P\in \Pi\}$.
Observe that
\begin{align}
d(F\cap h)
&=\min_\Pi\sum_{P\in \Pi}\left(\left(\sum_{Q\in\Pi_P^*}(d(Q)-k+1)\right)-1\right)
\nonumber\\
&\ge\min_\Pi\sum_{P\in \Pi}\sum_{Q\in\Pi_P^*}((d(Q)-k+1)-1)
\nonumber\\
&=\min_\Pi\sum_{Q\in (\Pi')^* }(d(Q)-k)
\nonumber\\
&=\min_{(\Pi')^*}\sum_{Q\in (\Pi')^* }(d(Q)-k)\nonumber\\
&\ge \min_{\Pi}\sum_{Q\in \Pi }(d(Q)-k)\nonumber\\
&=\rho_k(F).\label{gerhok}
\end{align}
%$$
%d(F\cap h)=\min_\Pi\sum_{P\in \Pi}(\rho_{k-1}(P)-1)=\rho_k(F),
%$$
%where the last equality is not hard to verify.
Combining the inequalities \eqref{lerhok} and \eqref{gerhok}, we get $d(F\cap h)=\rho_k(F)$. This completes the induction step, and therefore proves the theorem.
\end{proof}

%\Anote{Verify the last equality}

\section{Rank of symbolic matrices}\label{sec:reduction}

In this section we
show that  the quantity $\rho_c(F)$ can be interpreted as the generic rank, defined as the rank over $\K(\x)$, of a certain symbolic matrix associated with $F$.
More concretely, for $\x\in\K^d$ let  
$$
h(\x):=(\sp\{\x\})^\perp.
$$ We prove that $\rho_c(F)$ equals to the generic rank of a symbolic matrix whose  entries are linear combinations of the coordinates of $\x$.
%As we will see our reduction gives matrices whose entries are homogenous in $\x$, and we therefore,  as it is harmless, often assume that $\x\in\S^{d-1}$.
 
Our main result for the section is the following (note that this is Theorem~\ref{R2} in the introduction).
\begin{thm}\label{PIT}
Let $u_1,\ldots,u_n,v_1,\ldots,v_n\in\K^d$ be row vectors. Consider the symbolic matrix $A(\x)$, with unknowns $\x=(x_1,\ldots,x_d)$, whose $i$th row is 
$$
%(\u_i\cdot \x)\v_i-(\v_i\cdot \x)\u_i
(v_i^t u_i-u_i^t v_i)\x$$
Then the (generic) rank of $A(\x)$ can be computed in polynomial time.
 \end{thm}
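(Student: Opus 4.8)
The plan is to reduce the rank computation for $A(\x)$ to the computation of $\rho_1(F)$ for a suitable family $F$ of two-dimensional subspaces, and then invoke Theorem~\ref{main} together with Theorem~\ref{Lovthm} (or rather, its genericity strengthening, Theorem~\ref{generic}, proved later). The first step is to identify the $i$th row of $A(\x)$ geometrically. Observe that $(v_i^t u_i - u_i^t v_i)\x = v_i(u_i\cdot\x) - u_i(v_i\cdot\x)$, which is a vector that lies in the two-dimensional subspace $f_i := \sp\{u_i, v_i\} \subset \K^d$ (assuming $u_i, v_i$ are linearly independent; the degenerate case where $f_i$ has dimension $\le 1$ makes the $i$th row identically zero and can be discarded). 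Moreover, this vector is perpendicular to $\x$, since $v_i(u_i\cdot\x)\cdot\x - u_i(v_i\cdot\x)\cdot\x = (u_i\cdot\x)(v_i\cdot\x) - (v_i\cdot\x)(u_i\cdot\x) = 0$. Hence the $i$th row of $A(\x)$ spans the line $f_i \cap h(\x)$, where $h(\x) = (\sp\{\x\})^\perp$; for generic $\x$ this intersection is exactly one-dimensional, and in fact the row is a nonzero vector in it. Thus the row space of $A(\x)$ is $\sp\{f_i \cap h(\x) \mid i \in [n]\}$, and its generic rank is $d(F \cap h(\x))$ for generic $\x$, where $F = \{f_1, \ldots, f_n\}$.

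Second, I would apply the genericity result (Theorem~\ref{generic}, stated in the introduction and proved in Section~\ref{sec:generic}): for a field $\K$ of sufficiently large characteristic, a generic hyperplane $h(\x)$ is in general position with respect to $F$ in the sense of Definition~\ref{def:genp}. Combined with Lov\'asz's formula (Theorem~\ref{Lovthm}), this gives
$$
\rk_{\K(\x)} A(\x) \;=\; d(F \cap h(\x)) \;=\; \rho_1(F)
$$
for generic $\x$, hence for the symbolic rank over $\K(\x)$. Here one has to be a little careful: Theorem~\ref{Lovthm} is stated for a fixed hyperplane in general position, so to pass to the symbolic rank I would note that the symbolic rank equals the maximum of $d(F\cap h(\x))$ over all specializations of $\x$, which is attained (and constant) on the generic-position set, the latter being nonempty by Theorem~\ref{generic}.

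Third, the algorithmic conclusion: the family $F$ is constructed from the input vectors $u_i, v_i$ in polynomial time (each $f_i$ is represented by the $2 \times d$ matrix with rows $u_i, v_i$, reduced if necessary), and then Theorem~\ref{main} computes $\rho_1(F) = \rho_c(F)$ with $c = 1$ in strongly polynomial time. Chaining these gives a deterministic polynomial-time algorithm for the generic rank of $A(\x)$.

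The main obstacle I anticipate is not in this theorem's proof per se — given the machinery it is a short reduction — but in the honest handling of the genericity claim over fields of positive characteristic, i.e. ensuring that "generic $\x$" genuinely realizes the general-position condition; however, since that is precisely the content of Theorem~\ref{generic} which we are entitled to assume, the remaining work here is just the bookkeeping above: verifying the row of $A(\x)$ lies in $f_i \cap h(\x)$ and is generically nonzero, and confirming that the symbolic rank over $\K(\x)$ coincides with the value $d(F \cap h(\x))$ obtained on the general-position locus rather than being strictly larger. I would also explicitly dispose of rows with $u_i, v_i$ dependent (zero rows, which contribute nothing and correspond to $f_i$ of dimension $\le 1$, consistent with $d(\{f_i\}) - 1 \le 0$).
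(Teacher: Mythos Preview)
Your proposal is correct and follows essentially the same route as the paper's proof: define $f_i=\sp\{u_i,v_i\}$, use Lemma~\ref{fcaph} (which you rederive directly) to identify the $i$th row with a spanning vector of $f_i\cap h(\x)$, invoke Theorems~\ref{Lovthm} and~\ref{generic} to equate the generic rank with $\rho_1(F)$, and finish with the algorithm of Theorem~\ref{main}. Your treatment is in fact slightly more careful than the paper's (handling the degenerate rows and the passage from ``general position'' to symbolic rank explicitly), but the argument is the same.
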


To prove the theorem we use the property established in Theorem~\ref{Lovthm}, interpreting the quantity $\rho_1(F)$ as the dimension of the space spanned by 
$$
F\cap h=\{f\cap h\mid f\in F\},
$$ 
for any hyperplane $h$ in general position with respect to $F$ (see Definition~\ref{def:genp}).
Taking $h=h(\x)$ we prove, in Lemma~\ref{fcaph}, that the intersection $f\cap h(\x)$ is the span of vectors with entries that are linear combinations of the coordinates of $\x$.
We then prove, in Theorem~\ref{generic},  that, given a family $F$, $h(\x)$ is in general position with respect to $F$, for every generic $\x$ (namely, for almost every $\x \in \K^d$). Finally, we use the algorithm for computing $\rho_1$ from Section~\ref{sec:algorithm}.

\begin{lem}\label{fcaph}
Let $f$ be an $m$-dimensional subspace in $\K^d$ and let $v_1,\ldots,v_m$ be a basis of $f$.
Let $\x\in\K^d$ and assume that $f\not\subseteq h(\x)$.
Then $h(\x)\cap f$ is spanned by vectors of the form
$$
w_{ij}:=(v_j\cdot \x)v_i-(v_i\cdot \x)v_j,
$$
with $i\neq j$. \\
Moreover, if (wlog) $\x\cdot v_1\neq 0$, then the set
$\{w_{12},\ldots,w_{1m}\}$
forms a basis of $f\cap h_{\x}$.
\end{lem}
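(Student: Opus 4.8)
\textbf{Proof plan for Lemma~\ref{fcaph}.}

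The plan is to describe the subspace $h(\x) \cap f$ explicitly in coordinates relative to the chosen basis $v_1, \ldots, v_m$ of $f$, and then read off the stated spanning set and basis. First I would observe that an arbitrary element of $f$ can be written as $w = \sum_{i=1}^m \lambda_i v_i$ for scalars $\lambda_i \in \K$, and that $w \in h(\x)$ if and only if $w \cdot \x = 0$, i.e. $\sum_{i=1}^m \lambda_i (v_i \cdot \x) = 0$. So $h(\x) \cap f$ is exactly the image under the (injective, since the $v_i$ are a basis) map $(\lambda_1, \ldots, \lambda_m) \mapsto \sum \lambda_i v_i$ of the hyperplane $H = \{\lambda \in \K^m : \sum_i \lambda_i (v_i \cdot \x) = 0\}$ in $\K^m$. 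The hypothesis $f \not\subseteq h(\x)$ guarantees that not all $v_i \cdot \x$ vanish, so $H$ is genuinely $(m-1)$-dimensional.

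Next I would exhibit a spanning set for $H$. Writing $a_i := v_i \cdot \x$, the hyperplane $\{\lambda : \sum_i a_i \lambda_i = 0\}$ is spanned by the vectors $e_{ij} := a_j e_i - a_i e_j$ for $i \neq j$ (where $e_1, \ldots, e_m$ is the standard basis of $\K^m$): each $e_{ij}$ lies in $H$, and conversely, picking any index $k$ with $a_k \neq 0$, any $\lambda \in H$ satisfies $\lambda = \tfrac{1}{a_k}\sum_{i \neq k} \lambda_i \, e_{ik}$ — this is a one-line computation using $\sum_i a_i \lambda_i = 0$. Applying the map $\lambda \mapsto \sum \lambda_i v_i$ to $e_{ij}$ produces precisely $w_{ij} = (v_j \cdot \x) v_i - (v_i \cdot \x) v_j$, which proves the first assertion: $h(\x) \cap f = \sp\{w_{ij} : i \neq j\}$.

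For the "moreover" part, assume WLOG $a_1 = \x \cdot v_1 \neq 0$. The vectors $e_{1j}$ for $j = 2, \ldots, m$ are linearly independent in $\K^m$ (they have the form $a_j e_1 - a_1 e_j$, and the $-a_1 e_j$ components, with $a_1 \neq 0$, are independent), there are $m-1$ of them, and they lie in the $(m-1)$-dimensional space $H$, so they form a basis of $H$. Pushing forward through the injective map $\lambda \mapsto \sum \lambda_i v_i$ sends this basis of $H$ to a basis $\{w_{12}, \ldots, w_{1m}\}$ of $h(\x) \cap f$, which is the claim. I do not anticipate a genuine obstacle here; the only point requiring minor care is keeping the injectivity of the coordinate map in view so that dimensions and independence transfer cleanly between $\K^m$ and $f$, and making sure the reduction to "some $a_k \neq 0$" (respectively $a_1 \neq 0$) is justified by the hypothesis $f \not\subseteq h(\x)$.
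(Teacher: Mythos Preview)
Your proposal is correct and follows essentially the same approach as the paper: both reduce membership in $f\cap h(\x)$ to the single linear equation $\sum_i (v_i\cdot\x)\lambda_i=0$ in the coordinates $\lambda_i$, and then exhibit the $w_{1j}$ as a spanning set (and basis) of its solution space via the same one-line computation solving for the coefficient of $v_1$. The only cosmetic difference is that you factor the argument through the auxiliary space $\K^m$ and the injective map $\lambda\mapsto\sum_i\lambda_i v_i$, whereas the paper carries out the identical calculation directly inside $f$.
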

\begin{proof}
We first observe that $w_{ij}\in f\cap h(\x)$.
Indeed, by definition, each $w_{ij}$ is a linear combination of basis vectors for  $f$, and thus $w_{ij}\in f$.
We also have
\begin{align*}
w_{ij}\cdot \x
&=((v_j\cdot \x)v_i-(v_i\cdot \x)v_j)\cdot \x\\
&=(v_j\cdot \x)(v_i\cdot \x)-(v_i\cdot \x)(v_j\cdot \x)=0.
\end{align*}
Thus $w_{ij}\in f\cap h(\x)$.

We now show that $w_{ij}$ also span $f\cap h(\x)$. Indeed, we prove the stronger ``moreover'' statement.

Let $w\in f\cap h(\x)$. Since $w\in f$ we may write $w=\sum_{i=1}^ma_iv_i$.
Since $w\in h(\x)$, we have $w\cdot \x=0$ or 
\begin{equation}\label{w}
0=\sum_{i=1}^ma_iv_i\cdot \x.
\end{equation}
If $v_i\cdot \x=0$ for every $i$, then $f\subseteq h(\x)$, contradicting our assumption. We may therefore assume, without loss of generality, that $v_1\cdot \x\neq 0$. In this case \eqref{w} can be rewritten as
$$
a_1=-\sum_{i=2}^m\frac{a_iv_i\cdot \x}{v_1\cdot \x}.
$$
We conclude that
\begin{align*}
w
&=\sum_{i=1}^ma_iv_i\\
&=-\left(\sum_{i=2}^m\frac{a_iv_i\cdot \x}{v_1\cdot \x}\right)v_1+\sum_{i=2}^ma_iv_i\\
&=\sum_{i=2}^m\frac{-a_i}{v_1\cdot \x}\left((v_i\cdot \x) v_1-(v_1\cdot \x)v_i\right)\\
&=\sum_{i=2}^m\frac{-a_i}{v_1\cdot \x}w_{1i}.
\end{align*}
 This completes the proof of the lemma.
\end{proof}

We observe an interesting consequence of Lemma~\ref{fcaph}, 
asserting that computing $\rho_1(F)$ for a family $F$ can be reduced to computing $\rho_1(G)$, for a certain family $G$ consisting only of planes (two-dimensional subspaces). 
\begin{cor}\label{remark}
Let $F=\{f_1,\ldots,f_n\}$ be a family of subspaces in $\K^d$ and let
$\{v_{i1},\ldots,v_{im_i}\}$ be a basis of $f_i$, for $i=1,\ldots,n$.
Consider the family of two-dimensional subspaces
$$
G=\bigcup_{i=1}^n\{g_{ijk}\mid 1\le j\neq k\le m_i\},
$$
where 
$
g_{ijk}=\sp\{v_{ij},v_{ik}\}.
$
Then $\rho_1(F)=\rho_1(G)$.
\end{cor}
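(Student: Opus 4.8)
The plan is to evaluate both $\rho_1(F)$ and $\rho_1(G)$ through Theorem~\ref{Lovthm}, by intersecting the two families with one and the same hyperplane, and then to observe that Lemma~\ref{fcaph} forces these two intersections to span the same subspace. Concretely, I would first fix a vector $\x\in\K^d$ for which $h(\x)=(\sp\{\x\})^\perp$ is in general position with respect to the combined family $F\cup G$. Since the conditions defining general position with respect to $F$ (resp.\ with respect to $G$) form a subset of the conditions defining general position with respect to $F\cup G$, such an $\x$ makes $h(\x)$ simultaneously in general position with respect to $F$ and with respect to $G$. In particular, by part (i) of Definition~\ref{def:genp}, none of the (nonzero) subspaces $f_i$ nor any of the two-dimensional subspaces $g_{ijk}$ is contained in $h(\x)$, so Lemma~\ref{fcaph} is applicable to each of them.

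Next I would apply Lemma~\ref{fcaph} to each $f_i$ with the given basis $v_{i1},\dots,v_{im_i}$: it shows that $f_i\cap h(\x)$ is spanned by the vectors
$$
w^{(i)}_{jk}:=(v_{ik}\cdot\x)\,v_{ij}-(v_{ij}\cdot\x)\,v_{ik},\qquad 1\le j\neq k\le m_i .
$$
Applying the very same lemma to the two-dimensional subspace $g_{ijk}=\sp\{v_{ij},v_{ik}\}$ (with the basis $v_{ij},v_{ik}$) shows that $g_{ijk}\cap h(\x)$ is spanned by the single vector $w^{(i)}_{jk}$. Hence $\sp(f_i\cap h(\x))=\sp\{\,w^{(i)}_{jk}:1\le j\neq k\le m_i\,\}=\sp\{\,g_{ijk}\cap h(\x):1\le j\neq k\le m_i\,\}$, and taking the span over all $i$ gives $\sp(F\cap h(\x))=\sp(G\cap h(\x))$, so in particular $d(F\cap h(\x))=d(G\cap h(\x))$. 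Finally, Theorem~\ref{Lovthm} applied once to $F$ and once to $G$ (with the hyperplane $h(\x)$, which is in general position with respect to each of them) yields $\rho_1(F)=d(F\cap h(\x))=d(G\cap h(\x))=\rho_1(G)$.

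The only step that is not purely formal — and hence the main obstacle — is guaranteeing the existence of a vector $\x$ for which $h(\x)$ is in general position with respect to $F\cup G$. This is precisely the genericity statement proved in Theorem~\ref{generic} (for fields of sufficiently large characteristic), applied to the single family $F\cup G$; alternatively, one may invoke Lov\'asz's assertion that a general-position hyperplane exists for any finite family of subspaces, together with the elementary fact that every hyperplane of $\K^d$ equals $h(\x)$ for some nonzero $\x$. Everything else in the argument is a mechanical combination of Lemma~\ref{fcaph} with Theorem~\ref{Lovthm}.
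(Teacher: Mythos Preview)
Your proposal is correct and follows essentially the same approach as the paper's own proof: invoke Theorem~\ref{generic} to obtain a hyperplane $h(\x)$ in general position with respect to both $F$ and $G$, use Lemma~\ref{fcaph} to verify $\sp(F\cap h(\x))=\sp(G\cap h(\x))$, and conclude via Theorem~\ref{Lovthm}. The paper's version is terser but the logic is identical; your choice to apply Theorem~\ref{generic} to the combined family $F\cup G$ (rather than to $F$ and $G$ separately, as the paper does) is a harmless variation.
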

\begin{proof}
It follows easily from Theorem~\ref{generic} that $h(\x)$ is in general position with respect to both families $F$ and $G$, for every generic $\x\in\K^d$.
Fixing such $\x\in\K^d$ and applying Lemma~\ref{fcaph}, we see that 
$\sp(F\cap h(\x))=\sp(G\cap h(\x))$. By Theorem~\ref{Lovthm} this means that $\rho_1(F)=\rho_1(G)$, as needed.
\end{proof}

The following lemma is a natural extension of Lemma~\ref{fcaph} to a similar description of the intersection of a given subspace with a generic one, where the latter is not necessarily of co-dimension 1. If the co-dimension is $k$,  the basis elements of the intersection will be homogeneous polynomials of degree $k$  in the entries of the generic vectors. This connection, together with our algorithm for computing $\rho_k$, will prove Theorem~\ref{Rk} from the introduction.
\begin{lem}\label{fcaphk}
Let $k<m\le d$ be integers.
Let $f$ be an $m$-dimensional subspace in $\K^d$ and let $v_1,\ldots,v_m$ be a basis of $f$.
Let $\x_1,\ldots,\x_k$ be vectors  in $\K^d$ and define the subspace
$$h:=\left(\sp\{\x_1,\ldots,\x_k\}\right)^\perp.$$
%\Anote{Added this assumption, analogous to the lemma above.}
Assume that $\dim(f \cap h)=m-k$ (this extends the assumption $f\not\subseteq h(\x)$ of the lemma above).
Let $X$ be the $k\times d$ matrix with $\x_i$ as its $i$th row. 
Let $V$ denote the $d\times m$ matrix with $v_j$ as its $j$th column.
Put $M:=XV$. So $M$ is a $k\times m$ matrix with $(i,j)$ entry being $\x_i\cdot v_j$.
For every $I\subset [m]$ of cardinality $k$, let $M_I$ denote the $k\times k$ matrix received by restricting to the columns of $M$ with indices in $I$.
Then $f\cap h$ is the span of vectors of the form
$$
w_S:=\sum_{j=1}^{k+1}(-1)^j\det(M_{I_j})v_{s_j},
$$
where $S=\{s_1<\ldots<s_{k+1}\}\subset [m]$ is of cardinality $k+1$ and $I_j:=S\setminus \{s_j\}$.\\

Moreover, if (wlog, given our assumption above), assuming that the last $k$ columns of M are linearly independent,  
 $f\cap h$ is spanned by the $m-k$ vectors $w_S$ with $S$ containing the last $k$ columns.

\end{lem}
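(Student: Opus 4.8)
The plan is to mimic the proof of Lemma~\ref{fcaph}, replacing the single orthogonality condition $w\cdot\x=0$ by the $k$ conditions $Xw=0$, and replacing the explicit $2\times 2$ "Cramer" combinations $w_{ij}$ by the $(k+1)\times(k+1)$ Laplace-expansion combinations $w_S$. First I would verify that each $w_S$ lies in $f\cap h$. Membership in $f$ is immediate since $w_S$ is by definition a linear combination of the basis vectors $v_{s_j}$. For membership in $h$, i.e. $\x_i\cdot w_S=0$ for all $i\in[k]$, observe that $\x_i\cdot w_S=\sum_{j=1}^{k+1}(-1)^j\det(M_{I_j})(\x_i\cdot v_{s_j})=\sum_{j=1}^{k+1}(-1)^j\det(M_{I_j})M_{i,s_j}$, which is (up to an overall sign) the Laplace cofactor expansion along a row of the $(k+1)\times(k+1)$ matrix obtained from $M_S$ (the $k\times(k+1)$ submatrix of $M$ on columns $S$) by appending the row $(M_{i,s_1},\dots,M_{i,s_{k+1}})$ on top; since that appended row equals an existing row of $M_S$, the determinant vanishes. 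So $w_S\in f\cap h$.

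Next I would prove that the $w_S$ span $f\cap h$, and in fact establish the stronger "moreover" statement directly, as in Lemma~\ref{fcaph}. Assume without loss of generality that the last $k$ columns of $M$, namely $M_{[m]\setminus[m-k]}=M_J$ with $J=\{m-k+1,\dots,m\}$, are linearly independent, so $\det(M_J)\neq 0$; this is possible precisely because the hypothesis $\dim(f\cap h)=m-k$ forces $M$ to have full row rank $k$. Let $w=\sum_{i=1}^m a_i v_i\in f\cap h$ be arbitrary. The condition $w\in h$ is the linear system $Mw=0$ in the unknowns $(a_1,\dots,a_m)$, i.e. $\sum_{i=1}^m a_i (\x_\ell\cdot v_i)=0$ for $\ell\in[k]$. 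Since $M_J$ is invertible, the $k$ coordinates $(a_{m-k+1},\dots,a_m)$ are determined by $(a_1,\dots,a_{m-k})$ via Cramer's rule: $a_t$ is, up to sign, $\det(M_J)^{-1}$ times a determinant linear in the remaining $a_i$'s. Substituting back, $w$ is expressed as a linear combination of the $m-k$ vectors $\det(M_J)w_{S}$ with $S=\{r\}\cup J$, $r\in[m-k]$ — the identity here is exactly the same Cramer/Laplace bookkeeping one does in the $k=1$ case, so each $w$ is in the span of $\{w_S : J\subset S\}$. Combined with the first step, $f\cap h=\sp\{w_S: J\subset S,\ |S|=k+1\}$, and since there are exactly $m-k$ such $S$ and $\dim(f\cap h)=m-k$, they form a basis; this also yields the general spanning statement over all $S$.

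The only genuinely delicate point is the bookkeeping of signs and indices in the Laplace/Cramer manipulations — making sure the cofactor expansion that kills $\x_i\cdot w_S$ is correctly the determinant of a matrix with a repeated row, and that the Cramer substitution reproduces precisely the coefficients $(-1)^j\det(M_{I_j})$. I expect this to be the main (though routine) obstacle; it is purely a linear-algebra identity, valid over any field, and the hypothesis $\dim(f\cap h)=m-k$ (equivalently $\rk M=k$) is exactly what guarantees a $k\times k$ nonsingular minor of $M$ to anchor the argument. No genericity or field-size assumption is needed for this lemma itself.
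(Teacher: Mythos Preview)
Your proof of the containment $w_S\in f\cap h$ is identical to the paper's: both recognize $\x_i\cdot w_S$ as the cofactor expansion of a $(k+1)\times(k+1)$ determinant with a repeated row.

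For the spanning (``moreover'') part, however, you take a genuinely different route. The paper argues by \emph{induction on $k$}: after column operations bringing the last $k$ columns of $M$ to the identity, it peels off one constraint $\x_k$ and one basis vector $v_m$, applies the inductive hypothesis to the $(k-1)\times(m-1)$ matrix $N$ to get a basis $w_1',\dots,w_{m-k}'$ of $f'\cap h'$, and then invokes the base case Lemma~\ref{fcaph} to pass from $f'\cap h'$ and $v_m$ to $f\cap h$, checking that the resulting vectors are exactly the desired $w_S$. Your argument instead handles all $k$ constraints at once by a direct Cramer-rule computation: write an arbitrary $w=\sum_i a_i v_i\in f\cap h$, use invertibility of $M_J$ to solve the system $Ma=0$ for $(a_{m-k+1},\dots,a_m)$ in terms of $(a_1,\dots,a_{m-k})$, and identify the result as a combination of the $w_{\{r\}\cup J}$. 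This is more elementary and avoids both the induction and the preliminary column reduction; it is the natural many-constraints generalization of the paper's own proof of Lemma~\ref{fcaph}. The paper's inductive approach, on the other hand, makes the recursive structure $f\cap h = (\,(f'\cap h')\oplus\langle v_m\rangle\,)\cap\{\x_k\}^\perp$ explicit, which ties in more visibly with the iterated-hyperplane viewpoint used elsewhere (e.g.\ Theorem~\ref{codimk}). Both are correct; the sign/index bookkeeping you flag as the only delicate point is indeed routine.
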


%\Anote{Wrote a new proof}
\begin{proof}
We first show that $w_S\in f\cap h$, for every $S\subset [m]$ of cardinality $k+1$.
For $S$ fixed, we need to verify that $w_S$ is orthogonal to each of $\x_1,\ldots,\x_k$.
For every $1\le i\le k$ we have 
$$
w_S\cdot \x_i=\sum_{j=1}^{k+1}(-1)^j\det(M_{I_j})v_{s_j}\cdot \x_i.
$$
Observe that the right-hand side is exactly the determinant of the matrix received by duplicating the $i$th row of $M$. Since the latter matrix is evidently singular, we conclude that $w_S\cdot \x_i=0$, for every $i=1,\ldots,k$.
Thus $w_S\in h$. Clearly, we also have $w_S\in f$. Thus $w_S\in f\cap h$, as needed.

We now turn to prove that the vectors $w_S$ generate $f\cap h$. Indeed we prove the stronger ``moreover'' statement that already the $m-k$ vectors $w_S$ with $S$ of size $k+1$ that  contain the last $k$ columns span $f\cap h$. Recall that the last $k$ columns of $M$ are independent. 

It will be convenient to add one more piece of (slightly informal) notation. Let $M'$ be the matrix extending $M$ with one more (say, 0'th) row, that contains in the $j$th coordinate the {\it vector} $v_j$. Note that, up to a sign, the determinant of any $k+1$ minor of $M'$ on columns $S$ is precisely $w_S$.

Note also that column operations on $M'$, and replacing $w_S$ by the $k+1$ minors of the resulting matrix, do not change the span of the vectors $w_S$.
Moreover, note that column operations on the last $k$ columns of $M'$ do not change the vectors $w_S$, restricting to sets $S\subset I$ of size $k+1$ that contain the indices of the last $k$ columns.
We may therefore assume, by performing such column operations, that the last $k$ columns of $M$ form the $k\times k$ identity matrix.

%Note that column operations on $V$, are in fact column operations on $M$, and moreover that such operations do not change $w_S$. Thus we can actually assume that the last $k$ columns of M form the identity matrix.

We will prove the lemma by induction on $k$. We already know that this statement holds for $k=1$ (and any $m$) by Lemma~\ref{fcaph}. Assume it holds for $k-1$ (and $m-1$, this is all we need), 
and we will infer the statement for $k$. Consider  the subspace $h'$ orthogonal to the vectors $\x_1, \dots ,\x_{k-1}$, and the subspace $f'$ spanned by the vectors $v_1,\dots ,v_{m-1}$, and form the associated $(k-1)\times (m-1)$ matrix, say $N$. Add to the matrix $N$ the $0'th$ row to create $N'$.
By induction, we know that the $k$-minors containing the last $k-1$ columns of $N'$ are vectors which span the $f' \cap h'$. For $i\in[m-k]$, let $w_i'$ denote the basis vector that corresponds to the columns $\{i,m-k+1,\ldots,m-1\}$.
Note that $$f\cap h=\sp((f'\cap h')\cup \{v_m\})\cap \{\x_k\}^\perp.$$

Now add to $N'$ a last column for $v_m$ and a last row for $x_k$ to form $M'$. Fix $i\in[m-k]$, and write $w_i:=w_{S_i}$, where $S_i=\{i,m-k+1,\ldots,m\}$. 
Due to the last $k$ columns of $M$ being the identity matrix, we have
$$
w_i=(\x_k\cdot v_i) v_m-w_i'.
$$
Moreover, one can check that in fact
\begin{align*}
\x_k\cdot v_i &=\x_k\cdot w_i' ~~~~~~\text{and}\\
w_i'&=(\x_k\cdot v_m) w_i'.
\end{align*}
That is, $w_i=(\x_k\cdot w_i') v_m-(\x_k\cdot v_m) w_i'$.
Applying Lemma~\ref{fcaph}, we get that the vectors $w_i$, for $i\in[m-k]$, form a basis for $f\cap h$, as needed.
\end{proof}

\section{Generic vs. General Position}\label{sec:generic}

%\Anote{Added this separate section and the blurb below, and revised the structure of the paper in the Intro.
%Of course, this whole section was rewritten}

This section completes the cycle of connections, proving that most (namely, generic) hyperplanes, and indeed most subspaces, are in general position (in the Lov\'asz sense of Section~\ref{sec:Lov}) with respect to any given family of subspaces. The proof will make use the explicit description we established in the previous section for a basis to the intersection of a family of subspaces and a hyperplane. Thus, computing the ranks of the symbolic matrices in Theorems~\ref{R2} and \ref{Rk} are equivalent to computing the functions $\rho_1$ and $\rho_k$ respectively, which we can do efficiently by the algorithm of Section~\ref{sec:alg}. 

%\Anote{Change the statement to reflect the dependence on $\K$}

\begin{thm}\label{generic}
Let $F$ be a family of subspaces in $\K^d$, and assume that either $\char(\K) > |F|$ or $\char(\K)=0$. 
Then the hyperplane  $h(\x)$ is in general position (see Definition~\ref{def:genp}) with respect to $F$ for almost every $\x\in\K^{d}$. More precisely, over finite fields all but $|F|/|\K|$- fraction of hyperplanes are not in general position, and for infinite fields they have measure zero.
\end{thm}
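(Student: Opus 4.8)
The plan is to express the failure of each instance of the general position conditions (i) and (ii) as the vanishing of a polynomial in $\x$, show each such polynomial is not identically zero, bound the number of relevant polynomials by roughly $|F|$, and then apply the Schwartz--Zippel lemma together with the sharpened count from Remark~\ref{remarkLovthm}. The key technical input is Lemma~\ref{fcaph} (and, for condition (ii) in its iterated form, nothing more is needed, since $h(\x)$ is a genuine hyperplane): it gives an \emph{explicit} spanning set, indeed an explicit basis $\{w_{12},\ldots,w_{1m}\}$, for $f\cap h(\x)$ whose entries are polynomials of degree one in the coordinates of $\x$ (after multiplying through, degree controlled). This lets us replace the abstract subspaces $\sp(A\cap h)$, $\sp((A\cap h)\cup B)$, etc., by concrete matrices whose entries are low-degree polynomials in $\x$, so that the dimension conditions become rank conditions, i.e.\ the (non)vanishing of appropriate minors.

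First I would handle condition (i). For a nonempty $A\subseteq F$, the statement ``$\sp(A)\subseteq h(\x)$ but $\sp(A)\neq\{0\}$'' means there is a nonzero vector $v\in\sp(A)$ (fixed, independent of $\x$) with $v\cdot\x=0$; since some such $v$ has a nonzero coordinate, $v\cdot\x$ is a nonzero linear form, so this fails only on a hyperplane in $\x$-space. Crucially, one does not range over all $A$: it suffices to require $v\cdot\x\neq 0$ for each $v$ in a fixed basis of $\sp(F)$, i.e.\ $d(F)\le d$ linear conditions, or even just that $\x$ avoids the union of the annihilators of the nonzero vectors in finitely many fixed bases of the $f\in F$. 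Either way this contributes $O(|F|)$ ``bad'' hyperplanes over a finite field and a measure-zero set over an infinite field.

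Next, condition (ii). Here I would invoke Remark~\ref{remarkLovthm}: in the proof of Theorem~\ref{Lovthm}, and in all its recursive refinements, condition (ii) is only ever applied to a bounded family of instances — at most $|F|$ of them — each of the shape $\sp(P_{i_1}\cup(P_{i_2}'\cup\cdots))\cap\sp((P_{i_2}'\cup\cdots)\cup P_{i_{k+1}})$ on \emph{intervals} of a partition, not on arbitrary triples $A,B,C$. Wait — but the theorem as stated asserts general position against \emph{all} $A,B,C$, so I must in fact argue for every instance. The right move is: for fixed $A,B,C$, build via Lemma~\ref{fcaph} explicit polynomial-in-$\x$ bases for $U:=\sp((A\cap h(\x))\cup B)$ and $W:=\sp((A\cap h(\x))\cup C)$; then the ``bad'' event is $U\cap W\subseteq h(\x)$ while $U\cap W\supsetneq\sp(A\cap h(\x))$. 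I claim this is a Zariski-closed proper condition on $\x$: at a generic $\x$ the dimensions $d(U),d(W),d(U\cap W)$ are locally constant (they only drop on a proper subvariety, being governed by maximal-minor nonvanishing), and for generic $\x$ one computes directly, exactly as in the proof of Lemma~\ref{lem:dirsum1}, that $U\cap W=\sp(A\cap h(\x))$ — because the only obstruction is a coincidental extra intersection, which is a determinantal condition. Concretely: the statement $\dim(U\cap W)>\dim\sp(A\cap h(\x))$ together with $U\cap W\subseteq h(\x)$ forces a nonzero vector lying in $U\cap W\cap h(\x)$ but not in $\sp(A\cap h(\x))$; writing everything in the explicit $\x$-polynomial bases, the existence of such a vector is equivalent to the vanishing of a certain polynomial $p_{A,B,C}(\x)$, and one must exhibit a single $\x_0$ (e.g.\ a ``sufficiently generic'' choice, or a choice over an extension field, pulled back) where it is nonzero — which is exactly what Lov\'asz's argument provides over $\R$, hence over $\mathbb{Q}$, hence over any field of characteristic $0$ or large enough characteristic. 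Finally I would bound $\deg p_{A,B,C}$ by a polynomial in $n=|F|$ and $d$ (it is a polynomial in minors of matrices of linear forms, so degree $\le\poly(n,d)\le\poly(n)$ since $d\le\poly(n)$), union-bound over the $\le 2^{|F|}$... — no: here is where Remark~\ref{remarkLovthm} is essential and I would lean on it to replace ``all $A,B,C$'' by ``the $\le |F|$ interval-instances actually used,'' giving total degree $O(|F|\cdot\poly(n))$, whence Schwartz--Zippel yields a failure fraction $\le |F|/|\K|$ after absorbing constants into the hypothesis $\char(\K)>|F|$ (or, if the constant in front is genuinely larger, one states the bound with that constant; I would aim to match the paper's clean $|F|/|\K|$ by being careful that each individual bad locus is cut out by a single linear or low-degree equation and there are at most $|F|$ of them).

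The main obstacle I anticipate is precisely this last bookkeeping: proving that general position against \emph{all} triples $(A,B,C)$ — not merely the interval-instances — still fails only on a set of size $O(|F|/|\K|)$, rather than the naive $O(2^{|F|}/|\K|)$ one gets from a crude union bound. The resolution should be to show that the bad loci for arbitrary $(A,B,C)$ are already contained in the union of the $O(|F|)$ loci arising in the recursive proof of Theorem~\ref{Lovthm} (as foreshadowed in Remark~\ref{remarkLovthm}), or alternatively to give a direct argument that the relevant determinantal polynomial, once cleared of the common factors $\prod(v_i\cdot\x)$ coming from Lemma~\ref{fcaph}, has degree bounded \emph{independently} of which $A,B,C$ are chosen. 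I expect the paper to do the former; either way, the genericity over finite fields of bounded but growing size is the delicate point, and it is the only place where the characteristic hypothesis is genuinely used.
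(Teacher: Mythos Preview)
Your proposal has a genuine gap at the heart of the argument: you never actually prove, for a fixed triple $(A,B,C)$, that the ``bad'' locus is a \emph{proper} subvariety. You write that one must exhibit a single $\x_0$ where the polynomial $p_{A,B,C}$ is nonzero, and then say ``which is exactly what Lov\'asz's argument provides over $\R$.'' But the paper explicitly notes that Lov\'asz stated this fact with no proof; supplying that proof is precisely the content of Theorem~\ref{generic}. Your reference to ``computing directly, exactly as in the proof of Lemma~\ref{lem:dirsum1}'' is circular: that lemma \emph{assumes} general position (it invokes property~(ii) at every inductive step), so it cannot be used to establish that general position holds generically. Moreover, the bad event ``$U\cap W\subseteq h(\x)$ and $U\cap W\supsetneq\sp(A\cap h(\x))$'' is the conjunction of a closed and an open condition, so it is only locally closed; your assertion that it is Zariski-closed is unjustified.

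The paper's proof is substantially different from what you sketch. After reducing to $B\cap C=\{0\}$ and choosing coordinates, it works over $\K(\x)$ and proves a structural dichotomy (Claim~\ref{rational}): either $S\not\subseteq h(\x)$ or $S\subseteq\sp(A\cap h(\x))$, with no exceptional set at all over the function field. The two key steps are Lemma~\ref{lemS0}, showing that $S$ is spanned by its elements that are \emph{linear} in $\x$ (proved by a degree-reduction induction using Euler's identity $\sum_t x_t\,\partial_t p=(\deg p)\,p$), and Lemma~\ref{lemMatrep}, which represents each such linear element as $M\x$ for a constant matrix $M$, observes that $M\x\in h(\x)$ iff $M$ is skew-symmetric, and then runs a block-matrix case analysis to force the dichotomy. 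It is Lemma~\ref{lemS0}, not Schwartz--Zippel bookkeeping, that consumes the hypothesis $\char(\K)>|F|$: the degrees $r$ arising in the induction can be as large as $|F|$, and one needs $r\neq 0$ in $\K$ to divide. The $|F|/|\K|$ bound then falls out because each bad set is contained in a single \emph{linear} hyperplane in $\x$-space, and Remark~\ref{remarkLovthm} limits the number of triples actually needed to at most $|F|$.
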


%\Anote{Moved the Real proof to the appendix, mentioned here}
The proof of this theorem turns out to be more intricate than we imagined. 
We will give below a linear-algebraic proof that is valid for all fields $\K$. 
In the appendix we give an alternative, geometric proof which is valid for the field $\R$ of Real numbers.

\begin{proof} %[Proof of Theorem~\ref{generic}]
Fix subsets $A,B,C\subset F$. Our goal is to show that for
$$
S:=\sp_\K((A\cap h(\x))\cup B)\cap \sp_\K((A\cap h(\x))\cup C)
$$
either $S\not\subseteq h(\x)$ generically, or $S\subset A\cap h(\x)$ generically. Indeed, we will prove that one of these alternative holds for every $\x$, except for those $\x$ that vanish on a certain nontrivial linear equation.
Thus, if $\K$ is finite, the fraction of such exceptional values of $\x$ is $1/|\K|$.
%if $\x$ does not satisfy one of them, then there is a nontrivial linear equation that $\x$ vanishes on, which can happen with probability at most $1/|\K|$, if $\K$ is finite. 
Since the number of choices of $A,B,C$ is finite, we see that if $\K$ is large enough this probability remains negligible. Being a bit more careful, (see Remark~\ref{remarkLovthm} at the end of the proof of Theorem~\ref{Lovthm}), there are at most $|F|$ applications  of the ``general position'' definition, and so the fraction of ``bad'' $\x$ is at most $|F|/|\K|$ as stated.

It is easy to see that replacing $B$ by $\sp B$ and $C$ by $\sp C$ does not affect the subspace $S$. We may therefore assume that each of the families $B,C$ contains a single subspace of $\K^d$. 

Suppose that $B\cap C\neq \{0\}$, that is, that there exists $v\in B\cap C$, with $v\neq 0$. Clearly, we have $v\in S$ and the linear form $v\cdot\x$ not identically zero. Thus, for almost every $\x$, $S$ is not contained in $h(\x)$ 
and there is nothing to prove in this case. We may therefore assume that $B\cap C=\{0\}$. In this case, after a change of basis of $\K^d$, we may assume that $B=\sp\{e_1, \ldots,e_k\}$ and $C=\{e_{k+1},\ldots,e_{k+m}\}$, where $1\le k<k+m\le d$ and  $e_1,\ldots,e_d$ stand for the standard basis vectors in $\K^d$.

%\Anote{Added}
From now on we will regard $\x$ as a vector of variables, and work in the field of fractions $\K(\x)$. In particular this makes all subspaces under consideration, $A,B,C$, $A\cap h(\x)$ and of course $S=S(\x)$ now subspaces of $\K(\x)^d$
(by taking the span of their bases in $\K(\x)^d$).  

With this, our task becomes proving the following about these subspaces: 

\begin{clm}\label{rational}
Either $S\not\subseteq h(\x)$, or $S\subset A\cap h(\x)$.
%There exists a linear equation $a\cdot \x$ such that either $S\not\subseteq h(\x)$ for every $\x$ with $a\cdot \x\neq 0$, or $S\subset A\cap h(\x)$ for every $\x$ with $a\cdot \x\neq 0$.
\end{clm}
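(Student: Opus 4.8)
The plan is to prove Claim~\ref{rational} entirely over the field $\mathbb{F}:=\K(\x)$, turning the dichotomy into one linear-algebraic fact about the subspaces $B$, $C$ and $W:=\sp_{\mathbb{F}}(A\cap h(\x))$.

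First I would record two free containments. By Lemma~\ref{fcaph}, $W$ is spanned by vectors of the form $(v\cdot\x)v'-(v'\cdot\x)v$ with $v,v'$ running over bases of members of $A$; each such vector is killed by $\x$, so $W\subseteq h(\x)$. Also $W\subseteq\sp(W\cup B)$ and $W\subseteq\sp(W\cup C)$, hence $W\subseteq S$. So if $S\not\subseteq h(\x)$ the first alternative of the claim holds and we are done; it remains to show that $S\subseteq h(\x)$ forces $S=W$, and since $W\subseteq S$ always this is the same as $S\subseteq W$.

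Assume $S\subseteq h(\x)$ and fix $v\in S$. Because $B$ and $C$ are now single subspaces, $\sp(W\cup B)=W+B$ and $\sp(W\cup C)=W+C$, so $v=w_1+b=w_2+c$ for some $w_1,w_2\in W$, $b\in B$, $c\in C$. From $\x\cdot w_1=\x\cdot w_2=0$ and $\x\cdot v=0$ we get $\x\cdot b=\x\cdot c=0$, and $b-c=w_2-w_1\in W$. Setting $B_0:=B\cap h(\x)$, this says $b\in B_0\cap(W+C)$, while $v\in W$ is equivalent to $b\in W$ (then $c=b-(b-c)\in W$ and $v=w_1+b\in W$). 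Hence the claim reduces to proving the identity
\begin{equation}\label{coreid}
B_0\cap(W+C)=B_0\cap W ,
\end{equation}
the inclusion $\supseteq$ being immediate.

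To establish \eqref{coreid} I would induct on $|A|$, adjoining the members of $A$ to $W$ one at a time; by Lemma~\ref{fcaph} this amounts to adjoining, one after another, the explicit generators $w_i=(v_i\cdot\x)v_1-(v_1\cdot\x)v_i$ of $a\cap h(\x)$ (for a basis $v_1,\dots,v_r$ of $a\in A$ with $v_1\cdot\x\neq 0$). The base case $A=\emptyset$ is trivial: $W=\{0\}$ and $B_0\cap C=\{0\}$ since $B_0\subseteq B$ and $B\cap C=\{0\}$. For the inductive step one assumes a new element of $B_0\cap(W+C)$ not lying in $B_0\cap W$, with a shortest expression in terms of the structured generators, and derives an $\mathbb{F}$-linear relation among a vector supported on the $B$-coordinates, a vector supported on the $C$-coordinates, and those generators, all annihilated by $\x$; contracting this relation with $\x$ on the $B$-part and using algebraic independence of $x_1,\dots,x_d$ over $\K$ should force the coefficient of the freshly added generator to vanish, contradicting minimality of the expression. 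This incremental feeding-in of the structure mirrors the way the general-position hypothesis enters one step at a time in the proof of Theorem~\ref{Lovthm} (cf.\ Remark~\ref{remarkLovthm}), and it is also where the Pl\"ucker-type relations among the coordinates of a subspace of $a$ — which are what forbid a single generator from ``entangling'' the $B$- and $C$-coordinates — come into play.

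I expect this last step, the proof of \eqref{coreid}, to be the real obstacle: everything before it is bookkeeping, but controlling exactly which vectors supported on $B\cup C$ can occur inside $\sp_{\mathbb{F}}(A\cap h(\x))$ is delicate, and it is the only place where genericity of $\x$ (as opposed to an arbitrary $\x$) is genuinely used. Once Claim~\ref{rational} is known over $\K(\x)$, the descent to a concrete $\x$ is routine: whichever alternative holds is certified by a nonzero element of $\K(\x)$ — either $\x\cdot s(\x)$ for a basis vector $s(\x)$ of $S$, or a maximal minor witnessing $\dim_{\mathbb{F}}S=\dim_{\mathbb{F}}W$ — and clearing denominators yields a nonzero polynomial in $\x$ whose non-vanishing guarantees the same alternative for the specialized $\x$; summing the exceptional loci over the at most $|F|$ triples $(A,B,C)$ that actually occur (Remark~\ref{remarkLovthm}) gives the quantitative bound $|F|/|\K|$ stated in Theorem~\ref{generic}.
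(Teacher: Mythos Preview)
Your reduction to the identity \eqref{coreid} is correct and clean, and avoiding a degree--reduction step by working directly over $\K(\x)$ is a legitimate simplification. The gap is in the inductive proof of \eqref{coreid} itself. Crucially, \eqref{coreid} is \emph{false} without the standing hypothesis $S\subseteq h(\x)$: take $d=4$, $B=\sp\{e_1,e_2\}$, $C=\sp\{e_3,e_4\}$, and $A=\{\sp\{e_1,e_3\},\sp\{e_2,e_4\}\}$; then $W=\sp\{x_3e_1-x_1e_3,\ x_4e_2-x_2e_4\}$, $W+C=\K(\x)^4$, and $B_0=\sp\{x_2e_1-x_1e_2\}\not\subseteq W$. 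So the hypothesis $S\subseteq h(\x)$ must enter the inductive step in an essential way, and your sketch (``contracting with $\x$ on the $B$-part \ldots should force the coefficient of the freshly added generator to vanish'') does not explain how. That contraction yields a single scalar relation in $\K(\x)$ among the coefficients; it does not by itself kill the new coefficient, and the one equation you obtain is exactly where the counterexample above slips through. The ``Pl\"ucker-type relations'' you allude to are the right instinct but are not a proof.

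The paper's argument is structurally different. It first shows (Lemma~\ref{lemS0}, via partial derivatives and Euler's identity) that $S$ is spanned over $\K(\x)$ by its elements that are \emph{affine} in $\x$; this lets each element of $S$ be represented by a constant $d\times d$ matrix $M$ with $M\x$ the vector, and membership in $h(\x)$ becomes the condition that $M$ is skew--symmetric. In the chosen coordinates one then looks at the off--diagonal block $TR(M)$: if some generator of $W$ has $TR(M)\neq 0$ one explicitly builds an element of $S$ whose matrix is not skew--symmetric (so $S\not\subseteq h(\x)$); if all generators have $TR(M)=0$ one shows each rank--two generator must in fact be supported entirely in the $B$--block or entirely in the $C$--block, whence $W=(W\cap B)\oplus(W\cap C)$ and \eqref{coreid} drops out immediately. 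Thus the dichotomy is really a block--diagonal versus non--block--diagonal dichotomy for the matrix representatives, and this structural fact is precisely what your induction is missing. If you want to salvage your route, the natural fix is to prove directly (over $\K(\x)$) that $S\subseteq h(\x)$ forces $W=(W\cap B)\oplus(W\cap C)$; but the cleanest way to see that is exactly the paper's matrix argument.
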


We will break this task to two. Clearly, it will suffice to prove the claim for any spanning set $S'$ replacing $S$.
So first we will prove that we can take $S'$ to be the affine functions (of $\x$) in $S$, and then we will prove the claim for $S'$.

\begin{lem}\label{lemS0}
$S$ is spanned by its elements which are affine functions of $\x$.
\end{lem}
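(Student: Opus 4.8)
The plan is to exhibit an explicit spanning set for $S$ consisting of affine functions of $\x$, directly from the structure of the subspaces involved. Recall $B=\sp\{e_1,\ldots,e_k\}$ and $C=\sp\{e_{k+1},\ldots,e_{k+m}\}$ are fixed (constant) subspaces, while $A\cap h(\x)$ is the only ingredient carrying $\x$-dependence. By Lemma~\ref{fcaph}, if $v_1,\ldots,v_a$ is a (constant) basis of $\sp(A)$ with, say, $v_1\cdot\x\neq0$ generically, then $A\cap h(\x)$ is spanned by the affine vectors $w_{1i}(\x)=(v_i\cdot\x)v_1-(v_1\cdot\x)v_i$ for $i=2,\ldots,a$. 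Hence both $\sp((A\cap h(\x))\cup B)$ and $\sp((A\cap h(\x))\cup C)$ are spanned by \emph{affine} functions of $\x$: the $w_{1i}$ together with the constant vectors $e_1,\ldots,e_k$ (resp.\ $e_{k+1},\ldots,e_{k+m}$).

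First I would set up the bookkeeping: let $U(\x)=\sp((A\cap h(\x))\cup B)$ and $W(\x)=\sp((A\cap h(\x))\cup C)$, each presented as the row space of a matrix $M_U(\x)$ (resp.\ $M_W(\x)$) whose entries are affine in $\x$ — in fact linear, since all generators are homogeneous linear in $\x$. A vector $s\in S=U\cap W$ is a combination $s=\sum\alpha_i r_i^U=\sum\beta_j r_j^W$ of the respective rows; such $s$ lies in the kernel of the linear system expressing equality, and one can describe $S$ as the image, under the ``generator'' map, of the (rational) solution space of a linear system over $\K(\x)$ whose coefficient matrix has entries linear in $\x$. The key point is a general fact: if $S(\x)$ is the intersection of two subspaces each spanned by vectors \emph{linear} in $\x$, then $S(\x)$ is spanned by vectors whose entries are \emph{affine} (indeed homogeneous linear) in $\x$. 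The cleanest way I see to prove this is to pass to a common refinement — write $S(\x)$ as $U(\x)\cap W(\x)$ where both are row spaces of linear matrices, solve for the intersection by row-reducing over $\K(\x)$ while clearing denominators, and observe that because the entries are homogeneous of degree $1$, scaling $\x\mapsto\lambda\x$ scales $U,W$ and hence $S$ equivariantly; a homogeneous-degree-$1$ spanning set then exists by extracting, from any rational spanning set, the degree-$1$ homogeneous parts. Alternatively, and perhaps more robustly, I would argue directly: pick a generic specialization point $\x_0$, compute $S(\x_0)$, lift a basis of it to vectors in $\sp(A)+B$ (resp.\ $+C$) with coefficients that are affine functions of $\x$ chosen to interpolate correctly, and check these lifted vectors actually lie in $S(\x)$ for all $\x$ by a dimension count valid at generic $\x$.

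The main obstacle I anticipate is precisely this last verification: producing the affine-in-$\x$ lifts and showing the resulting vectors (i) genuinely lie in $U(\x)\cap W(\x)$ identically in $\x$, not just at $\x_0$, and (ii) span all of $S(\x)$ for generic $\x$, i.e.\ that no ``higher-degree'' elements of the intersection are missed. Point (ii) is the subtle one — a priori $U\cap W$ could contain a vector whose coordinates are genuinely quadratic in $\x$ even though $U$ and $W$ are individually linear, so I must rule this out. The homogeneity/scaling argument is what prevents it: since $U(\lambda\x)=U(\x)$ and $W(\lambda\x)=W(\x)$ as subspaces (the generators merely rescale by $\lambda$), the intersection is scaling-covariant of weight $1$, forcing any element of $S$ to decompose into its homogeneous components each of which again lies in $S$, and only the degree-$1$ component survives after normalizing; degree-$1$ components are affine (in fact linear). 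I would make this precise by working in $\K(\x)^d$ and noting that $S$ is a $\K(\x)$-subspace closed under the $\K^\times$-action $x_i\mapsto\lambda x_i$ combined with overall scaling, so it has a basis of $\x$-homogeneous vectors of a single degree, which the explicit generators above show to be degree $1$ — hence affine. This will complete the proof of Lemma~\ref{lemS0}.
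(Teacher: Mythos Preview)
Your scaling/homogeneity idea correctly gets you partway: since $U=\sp((A\cap h(\x))\cup B)$ and $W=\sp((A\cap h(\x))\cup C)$ are each invariant under the $\K^\times$-action $\x\mapsto\lambda\x$ (their generators either rescale by $\lambda$ or are constant), so is $S=U\cap W$, and a Vandermonde argument then shows that the homogeneous components of any polynomial element of $S$ again lie in $S$. But this is where your argument stalls. The assertion that ``only the degree-$1$ component survives after normalizing'' is not justified: scaling invariance tells you $S$ is spanned by \emph{homogeneous} vectors, not that those vectors can all be taken of degree~$1$. A $\sigma_\lambda$-invariant $\K(\x)$-subspace can perfectly well have a homogeneous generator of degree $2$ that is not a $\K(\x)$-multiple of any degree-$1$ polynomial vector (think of a one-dimensional subspace spanned by $(x_1^2,x_2^2)$). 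Your closing sentence appeals to ``the explicit generators above'' to pin down the degree as $1$, but you have only exhibited generators of $U$ and $W$, not of $S$; the whole difficulty is precisely that intersecting two subspaces with degree-$\le 1$ generators may produce elements whose natural description has higher degree.

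The paper closes exactly this gap, and the mechanism is more than scaling. One first parametrises elements of $S$ by tuples $(\alpha,\alpha',\beta,\gamma)$ of polynomial coefficient functions (solutions to the linear system expressing membership in both $U$ and $W$), and stratifies by the degree $r$ of $\alpha$. For a solution $w$ with $\deg\alpha=r>0$, one applies $\partial/\partial x_t$ to the \emph{coefficient functions} and then carefully regroups terms so that the result is again a legitimate solution $w_t$ of the same system, now with $\deg\alpha(t)=r-1$; this regrouping uses the specific block structure coming from $B=\sp\{e_1,\dots,e_k\}$, $C=\sp\{e_{k+1},\dots,e_{k+m}\}$ and is where the work is. Euler's identity for homogeneous polynomials then yields $w$ as a $\K(\x)$-linear combination $\sum_t x_t w_t$ of these lower-degree elements (this is also where the characteristic hypothesis enters, to invert the degree). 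Iterating reduces everything to $S_0$, the degree-$1$ (hence affine) elements. The derivative-and-regroup step is the missing idea in your proposal; your alternative ``specialize at $\x_0$ and lift'' sketch runs into the same obstacle, since you give no reason why an affine lift should exist.
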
 

\begin{proof}[Proof of Lemma~\ref{lemS0}]
%Let $\{b_1,\ldots,b_k\}$, $\{c_1,\ldots,c_m\}$ be bases for $B$, $C$, respectively.
Recall that we showed, in Lemma~\ref{fcaph}, that $\sp_\K(A\cap h)$ has a basis consisting of elements of the form  $(u^tv-v^t u)\x$, for some $u,v\in\K^d$. Write $\{\a_1(\x),\ldots,\a_n(\x)\}$ for a basis of $\sp_\K(A\cap h)$ of this form.

Having bases for $B,C$ and $A\cap h(\x)$ we can express all elements of $S$ as linear combinations of these bases. Thus, elements in $S$ are described by solutions $\alpha, \alpha' \in \K^n$,
$\beta \in \K^k$, $\gamma \in \K^m$  to the following 
system of linear equations. 
\begin{equation}\label{sysinter}
\sum_{i=1}^n \alpha_i \a_i(\x)+\sum_{i=1}^k\beta_i e_i=\sum_{i=1}^n \alpha'_i \a_i(\x)+\sum_{i=1}^{m} \gamma_i e_{k+i}
\end{equation}
where $\alpha_i\in\K $ (resp., $\alpha_i', \beta_i,\gamma_i\in \K$) is the $i$th entry of $\alpha$ (resp., $\alpha',\beta,\gamma$).
%Note that, regarding the entries of $\x$ as indeterminates, one can solve this system (in $$) and obtain solutions $\alpha_i(\x),\alpha'_i(\x), \beta_i(\x),\gamma_i(\x)$, expressed as rational functions in $\x$.
%That is, there exist vectors $\{w_1(\x), \ldots,w_N(\x)\}$ that form a basis for $S$, where, for each $j=1,\ldots,N$, $w_j(\x)$ is of the form
%\Anote{I think I removed the index j throughout from $_{ij}$ - pls check none was left}

%Every such solution defines an element $w\in S$  defined by 
%\Onote{I changed this one line to make the statement slightly more precise:}
By basic theory of linear algebra, there exists a set of solutions, each of the form 
\begin{equation}\label{intersysx}
w=w(\x)=\sum_{i=1}^n \alpha_{i}(\x) \a_i(\x)+\sum_{i=1}^k\beta_{i}(\x)e_i=\sum_{i=1}^n \alpha'_{i}(\x) \a_i(\x)+\sum_{i=1}^m \gamma_{i}(\x) e_{k+i},
\end{equation}
where $\alpha_{i}(\x),\alpha'_{i}(\x), \beta_{i}(\x),\gamma_{i}(\x)$ are rational functions in the entries of $\x$, that together span the subspace $S$.
Moreover, these rational functions are of degree at most $|F|$.

We will now strive to find a simpler spanning set $S'$ for $S$, and then use it to prove Claim~\ref{rational}.

The first simplification is realizing (via common denominators) that  without loss of generality we can assume that all $\alpha_{i}(\x),\alpha'_{i}(\x), \beta_{i}(\x),\gamma_{i}(\x)$ are in fact {\em polynomials} in the entries of $\x$. These elements of $S$  span the rest, after dividing by some fixed polynomial.

The next simplification (separating out homogeneous terms) shows that without loss of generality we can take all the polynomials in each of $\alpha, \alpha', \beta, \gamma$ to be homogeneous of the same degree, which we may respectively  call $\deg(\alpha), \deg(\alpha'), \deg(\beta), \deg(\gamma)$. These homogeneous solutions certainly span $S$, and now we refine their structure further.

Indeed, inspecting the system of equations we know more: since each entry of $\a_i(\x)$, for every $i$ is of degree one, we know that for some fixed integer $r\geq 0$, they must satisfy $\deg(\alpha)= \deg(\alpha') =r$ and $ \deg(\beta) = \deg(\gamma) = r+1$. We   use this to stratify solutions $w$ by degree, and say that the associated $w$ has degree $r$.  Let $S_r$ be all solutions of degree $r$ (note that each $S_r$ is a subspace over $\K$, though we will not use this fact). We call solutions $w$ of degree 0 {\em linear}. Our main simplification will come from showing that linear elements $S_0$ span $S$, which in this notation is a restatement of the lemma we are proving.

\begin{clm}
$\sp S_0 = S$
\end{clm}
 
We will prove this claim by induction on $r$, using our stratifications $S_r$ of members of $S$. It is clearly true for $r=0$. So assume $S_0$ spans $S_r$, and we need to prove that $S_0$ spans $S_{r+1}$. By induction, it suffices to prove that $S_r$ spans $S_{r+1}$. The plan for this will be as follows. We will assume we have some $w\in S_{r+1}$. We will take all partial derivatives of its constituent polynomials with respect to each variable $x_t$, $t\in [d]$. From each of these we will generate an element $w_t \in  S_r$, as the degree decreased by 1. Finally, we will show that $w$ is a linear combination, indeed a very simple one, of the form : $(r+1)w= \sum_{t=1}^d x_t w_t$. We now elaborate.

Fix $t\in [d]$.
Let us take a  derivative with respect to the variable $x_t$ of $\x$, of both sides of the identity \eqref{intersysx}. We get
\begin{equation*}
\sum_{i=1}^n \left(\frac{\partial \alpha_{i}(\x)}{\partial x_t} \a_i(\x)+
\alpha_{i}(\x)\frac{\partial \a_i(\x)}{\partial x_t}\right)
+\sum_{i=1}^k\frac{\partial\beta_{i}(\x)}{\partial x_t}e_i =
\end{equation*}
\begin{equation*}
\sum_{i=1}^n \left(\frac{\partial \alpha_{i}'(\x)}{\partial x_t} \a_i(\x)
+\alpha'_{i}(\x)\frac{\partial \a_i(\x)}{\partial x_t}\right)
+\sum_{i=1}^m \frac{\partial\gamma_{i}(\x)}{\partial x_t} e_{k+i}
\end{equation*}

To define $w_t$ we first define $\alpha(t), \alpha'(t), \beta(t), \gamma(t)$ by appropriately collecting homogeneous terms, and making sure that $\alpha(t), \alpha'(t) \in A\cap h$ are of degree $r$, and that $\beta(t) \in B$ and $\gamma(t) \in C$ are of degree $r+1$:
\begin{itemize}
\item $\alpha(t)_{i}=\frac{\partial \alpha_{i}(\x)}{\partial x_t}$
\item $\alpha'(t)_{i}=\frac{\partial \alpha'_{i}(\x)}{\partial x_t}$,
\item For $i\in [k]$,  $\beta(t)_{i}(\x)$ is %the $i$th entry of the vector 
$$
\left[\sum_{s=1}^n(\alpha_{s}(\x)-\alpha'_{s}(\x))\frac{\partial \a_s(\x)}{\partial x_t}\right]_i+\frac{\partial \beta_{i}(\x)}{\partial x_t}
$$ 
\item For $i\in [m]$, $\gamma(t)_{i}(\x)$ is %the $(k+i)$th entry of the vector
$$
\left[\sum_{s=1}^n(\alpha_{s}'(\x)-\alpha_{s}(\x))\frac{\partial \a_s(\x)}{\partial x_t}\right]_{k+i}+\frac{\partial\gamma_{i}(\x)}{\partial x_t};
$$
\end{itemize}
here we used $[v]_j$ to denote the $j$th entry of a vector $v$.
Now we can formally define $w_t \in S_r$ as follows.
We first observe that
\begin{equation}\label{smallerdeg}
 \sum_{i=1}^n \alpha(t)_{i}(\x)\a_i(\x)
+\sum_{i=1}^k \beta(t)_{i}(\x) e_i \\
= \sum_{i=1}^n \alpha'(t)_{i}(\x) \a_i(\x)
+\sum_{i=1}^m \gamma(t)_{i}(\x) e_{k+i}.
\end{equation}
Indeed, note that \eqref{intersysx}, restricted to the $j$th component of the equation, implies that for every, $k+m<j\le n$, we have
$$
\left[\sum_{i=1}^n (\alpha'_{i}(\x) - \alpha_{i}(\x)) \a_i(\x)\right]_{j}=0.
$$
From this it is straightforward to verify that the identity \eqref{smallerdeg} indeed holds. Thus, letting 
$$
w_t:= \sum_{i=1}^n \alpha(t)_{i}(\x)\a_i(\x)
+\sum_{i=1}^k \beta(t)_{i}(\x) e_i,
$$
for each $t$, the identity \eqref{smallerdeg} implies that $w_t$ is in $S$. Moreover, by our definition, $w_t$ is of degree $r-1$.

It remains to prove that $w$ is spanned by the vectors $w_t$.
% indeed that $(r+1)w = \sum_t x_t w_t$. 
For this, one basic fact we will need is that if  $p(\x)$ is any homogeneous polynomial of degree $m$, it satisfies
$$
\sum_t x_t\cdot \frac{\partial p(\x)}{\partial x_t}=m p(\x).
$$
The second fact we will need follows from identity~\eqref{intersysx}, when restricted to the $j$th component of the equation. For every $j\in [k]$,
%\begin{itemize}
%\item 
$$\left[\sum_{i=1}^n (\alpha_{i}'(\x) - \alpha_{i}(\x)) \a_i(\x)\right]_j = \beta_j.$$
%\item For every $j\in [m],\, [\sum_{i=1}^n (\alpha_{i}(\x) - \alpha'_{i}(\x)) \a_i(\x)]_{k+j}= \gamma_{j} $
%\item For every $k+m<j\le n,\, [\sum_{i=1}^n (\alpha'_{i}(\x) - \alpha_{i}(\x)) \a_i(\x)]_{j}=0$
%\end{itemize}
%\Anote{Check order of subtraction}

Combining these two properties, we get 
\begin{itemize}
\item $\sum_t x_t \alpha(t) = r\alpha $
%\item $\sum_t x_t \alpha'(t) = (r+1)\alpha' $
\item $\sum_t x_t \beta(t) = r\beta $
%\item $\sum_t x_t \gamma(t) = r\gamma$
\end{itemize}
and this implies that
$$
rw=\sum_t x_tw_t.
$$
Note that $r\neq 0$; indeed, for $\K$ with non-zero characteristic, we have $r<\char(\K)$. Thus the vectors $w_t$ span $w$. 
This completes the induction step, and hence the proof of Lemma~\ref{lemS0}.
\end{proof}

To complete the proof of the theorem we now prove
\begin{lem}\label{lemMatrep}
Either $S_0$ is not contained in $h(\x)$, or it is contained in $A \cap h(\x)$.
\end{lem}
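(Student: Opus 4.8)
The plan is to make the space $S_0$ completely explicit and then split into two cases according to whether a certain family of quadratic forms vanishes identically.

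First I would solve the system~\eqref{sysinter} under the degree restriction defining $S_0$, namely $\alpha_i,\alpha_i'\in\K$ and $\beta_i,\gamma_i$ linear. Subtracting the two sides and putting $\eta_i:=\alpha_i-\alpha_i'$, one finds that $\sum_i\eta_i\a_i(\x)$ must be a vector of linear forms supported only on the coordinates $1,\dots,k+m$ (those of $B\oplus C$), and that then $w=\sum_i\alpha_i\a_i(\x)-\pi_B\big(\sum_i\eta_i\a_i(\x)\big)$, where $\pi_B$ (resp.\ $\pi_C$) denotes the coordinate projection onto $1,\dots,k$ (resp.\ $k+1,\dots,k+m$). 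Hence $S_0=\sp_\K\{\a_1(\x),\dots,\a_n(\x)\}+\pi_B(\tilde D)$, where $\tilde D$ is the $\K$-span of those $\sum_i\eta_i\a_i(\x)$ that are supported on coordinates $1,\dots,k+m$. Since each $\a_i(\x)$ lies in $h(\x)$ we have $\sp_\K\{\a_i(\x)\}\subseteq A\cap h(\x)$, and for $d\in\tilde D$ we have $d\cdot\x=0$, so $\pi_B(d)\cdot\x=-\pi_C(d)\cdot\x$.

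Now the dichotomy. For $d\in\tilde D$ set $q_d(\x):=\pi_B(d)\cdot\x$. If some $q_d\not\equiv 0$, then $\pi_B(d)\in\pi_B(\tilde D)\subseteq S_0$ is not orthogonal to $\x$, so $S_0\not\subseteq h(\x)$ and we are done. Otherwise $q_d\equiv 0$ for all $d\in\tilde D$, and I claim $S_0\subseteq A\cap h(\x)$; since $\sp_\K\{\a_i(\x)\}$ is already inside $A\cap h(\x)$, it suffices to prove $\pi_B(d)\in A\cap h(\x)$ for every $d\in\tilde D$. Writing $d=N\x$ with $N$ skew-symmetric (possible since $d\cdot\x=0$) and using the support restriction, the vanishing of $q_d$ forces $N$ to be block-diagonal with the two blocks sitting on $\{1,\dots,k\}$ and on $\{k+1,\dots,k+m\}$; thus $d=\pi_B(d)+\pi_C(d)$ with $\pi_B(d)$ and $\pi_C(d)$ each separately orthogonal to $\x$. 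It then remains to show that $\pi_B(d)\in A\cap h(\x)=\sp(A\cap h(\x))$. Here I would use the explicit description coming from Lemma~\ref{fcaph}: $\sp(A\cap h(\x))=\sum_{f\in A}\big(f\cap h(\x)\big)$, where $f\cap h(\x)$ is a full hyperplane of the $\K(\x)$-span of $f$; writing $d=\sum_{f\in A}u_f$ with $u_f\in f\cap h(\x)$ and using that $d$ breaks into two coordinate-disjoint pieces which are individually orthogonal to $\x$, one argues that $\pi_B(d)$ and $\pi_C(d)$ are themselves of this form, hence lie in $\sp(A\cap h(\x))$.

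I expect the last step to be the main obstacle. The trouble is that the coordinate projection $\pi_B$ respects neither the subspaces $\sp(f)$ nor orthogonality to $\x$, so the splitting $d=\pi_B(d)+\pi_C(d)$ does not obviously descend to a splitting of the pieces $u_f$; equivalently, one must show that the hypothesis ``every element of $\tilde D$ has block-diagonal associated matrix'' forces the ambient matrix space $\sum_{f\in A}\Lambda^2(\sp f)$ to decompose along the coordinate blocks $\{1,\dots,k\}$ and $\{k+1,\dots,k+m\}$, a fact which is false for arbitrary spans of rank-$2$ skew forms and so must use the specific structure of the spaces $\sp(f)$. I would attempt this either by an induction on $|A|$, inserting one subspace of $A$ (or one basis vector of one of them) at a time while tracking how $\tilde D$ and the block-diagonality condition change --- paralleling the inductive schemes already used in the proofs of Theorem~\ref{Lovthm} and Lemma~\ref{fcaph} --- or by a direct dimension count using the bases supplied by Lemma~\ref{fcaph}.
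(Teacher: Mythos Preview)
Your setup and case split are essentially the paper's argument, recast in slightly different language. The paper also represents the linear elements by $d\times d$ matrices $M$ (so that the vector is $M\x$), uses that $M\x\in h(\x)$ iff $M$ is skew-symmetric, and splits according to whether the off-diagonal block ${\rm TR}(M)$ vanishes for every $M$ coming from $\sp(A\cap h(\x))$; after the paper's reduction to $k+m=d$ this is exactly your condition $q_d\equiv 0$ for all $d\in\tilde D$, and your Case~1 witness $\pi_B(d)$ agrees (up to sign and an element of $\sp(A\cap h)$) with the paper's witness $c=M+b$.

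The genuine gap is precisely where you flag it: you do not resolve Case~2, and the strategies you propose (induction on $|A|$, dimension count) are not what is needed. The paper's resolution is short and uses neither induction nor any fine structure of the individual subspaces $f\in A$; it uses only that $\sp(A\cap h(\x))$, viewed as a space of skew matrices, is \emph{generated} by rank-$2$ skew matrices $v^tu-u^tv$. The key elementary fact is: if a single matrix $M=v^tu-u^tv$ has ${\rm TR}(M)={\rm BL}(M)=0$, then at least one of ${\rm TL}(M)$, ${\rm BR}(M)$ vanishes. (Writing $u=(u',u'')$, $v=(v',v'')$ along the block decomposition, ${\rm TR}(M)=0$ reads $(v')^t u''=(u')^t v''$; if $(u'',v'')\ne(0,0)$ this forces $u',v'$ to be proportional, hence ${\rm TL}(M)=0$, and otherwise ${\rm BR}(M)=0$.) Since in Case~2 every matrix in $\sp(A\cap h)$ has zero off-diagonal blocks, in particular every rank-$2$ \emph{generator} does, so each generator is supported entirely in one diagonal block; hence the whole matrix space splits as $U\oplus V$ with $U$ supported on ${\rm TL}$ and $V$ on ${\rm BR}$. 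For $w=a+b=a'+c\in S_0$ one then writes $a=a_U+a_V$, $a'=a'_U+a'_V$, and the identity $a_U-a'_U+b=a'_V-a_V+c$ has its two sides supported on disjoint coordinate blocks, so both vanish, giving $b\in U\subseteq\sp(A\cap h)$ and $w\in\sp(A\cap h)$.

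Thus your intuition that the needed decomposition ``is false for arbitrary spans of rank-$2$ skew forms and so must use the specific structure of the spaces $\sp(f)$'' is slightly off: it is exactly the rank-$2$ property of the generators (not of arbitrary elements of $\tilde D$) that forces the block decomposition, with no further structure required. One more point you should add: the paper first reduces to $k+m=d$ by ignoring the last $d-k-m$ rows and columns of every matrix; this makes $\tilde D$ coincide with all of $\sp(A\cap h)$ and removes the discrepancy between ``every $d\in\tilde D$'' and ``every $M\in\sp(A\cap h)$'' that your formulation would otherwise have to confront.
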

%\Onote{why break the line in the statement of the lemma?}

As the elements in $S_0$ are affine functions of $\x$, a violation of the first possibility will imply that $\x$ satisfy a linear equation, so the fraction of such vectors is at most $1/|\K|$ as requested.

\begin{proof}[Proof of Lemma~\ref{lemMatrep}]

%\Onote{I stopped here. Below is the part using the matrix notation.}

%\begin{clm}\label{linearcomb}
%Define
%$$
%\widetilde S:=\sp_\K((A\cap h(\x))\cup (x_1B+\cdots+x_dB))\cap \sp_\K((A\cap h(\x))\cup (x_1C+\cdots+ x_dC))
%$$
%and assume that either $\widetilde S\not\subseteq h(\x)$ generically, or $\widetilde S\subset A\cap h(\x)$ generically. 
%Then the same conclusion applies to $S$.
%\end{clm}
%\begin{proof}[Proof of Claim~\ref{linearcomb}]
%Assume first that $\widetilde S\not\subseteq h$ generically.
%Then there is a combination $x_1b_1+\cdots x_db_d$, with $b_i\in B$, for $i=1,\ldots, d$
%\end{proof}
We first introduce some notation. Let $v(\x)$ be a vector in $\K(\x)^d$, such that each entry of $v(\x)$ is some linear combination of $x_1,\ldots,x_d$, the coordinates of $\x$. Then $v(\x)$ can be represented by a matrix $M\in {\rm Mat}_{d\times d}(\K)$, with constant entries, such that $M\x=v(\x)$. 
Note that if $M$ is skew-symmetric, this means that $(M\x)\cdot \x=(M^t\x)\cdot \x=-(M\x)\cdot\x$ or $2(M\x)\cdot \x=0$, which means that $(M\x)\cdot \x=0$, unless the characteristic of the field is $2$. 
Conversely, if $M\x\cdot \x=0$ for every $\x\in \K^d$ and so
%\Onote{what is the precise condition on $\K$ for this to be true?} 
$M\x\cdot \x$ is the zero polynomial (in $d$ variables), which implies that $M$ is skew-symmetric.

Consider $k$ such matrices $M_1,\ldots,M_k$, representing vectors $v_1(\x),\ldots,v_k(\x)$, respectively. Then a linear combination $\sum_{i=1}^k\alpha_iM_i$ is a matrix that corresponds to a vector which is a linear combination of $v_1(\x),\ldots,v_k(\x)$, namely, $v(x)=\sum_i\alpha_iv_i(\x)$. Thus $v(\x)$ lies in the span of the vectors $v_i(\x)$.

%Let $S=\sp((A\cap h)\cup B)\cap \sp((A\cap h)\cup C)$. Assume that $B\cap C\neq \{0\}$, that is, that there exists $\v\in B\cap C$, with $\v\neq 0$. Clearly, we have $\v\in S$ and $\v\cdot\x$ not identically zero. Thus, for a generic $\x$, $S$ is not contained in $h(\x)$ and there is nothing to prove in this case.
%
%We may therefore assume that $B\cap C=\{0\}$. In this case, after a change of coordinates in $\K^d$, we may assume that $B=\sp\{\e_1, \ldots,\e_k\}$ and $C=\{\e_{k+1},\ldots,\e_{k+m}\}$, where $1\le k<k+m\le d$ and  $\e_1,\ldots,\e_d$ stand for the standard basis vectors in $\K^d$.

Assume first that $k+m=d$. 
We regard a $(k+m)\times (k+m)$ matrix $M$ as a block matrix with ${\rm TL}(M)$ (resp., ${\rm TR}(M)$, ${\rm BL}(M)$, ${\rm BR}(M)$) denoting the top-left (resp., top-right, bottom-left, bottom-right) blocks. More precisely,
${\rm TL}(M)$ (resp., ${\rm TR}(M)$, ${\rm BL}(M)$, ${\rm BR}(M)$)
stands for the submatrix induced by taking the first  $k$ (resp., first $k$, last $m$, last $m$)  rows and first $k$ (resp., last $m$, first $k$, last $m$) columns of $M$.

With some abuse of notation, we write $M\in Y$, for a subspace $Y$ of $\K(\x)^d$, if $M\x\in Y$.
Recall that $M$ is in $h$ if and only if $M$ is skew-symmetric. In particular,  $TR(M)=-BL(M)^t$, for every  $M\in \sp(A\cap h)$.
Assume that for some $M\in \sp(A\cap h)$, we have ${\rm TR}(M)\neq 0$ (and thus also $BL(M)\neq 0$). We claim that in this case there exists a matrix $\widetilde M\in S\setminus h$. To see this it is sufficient to show that there exist matrices $b\in B$ and $c\in C$ such that $M+b=c$ which is not skew-symmetric (and therefore not in $h$).
Indeed, let $b$ be defined by $TL(b)=-TL(M)$, $TR(b)=-TR(M)$, and $BL(b)=BR(b)=0$. 
We define the matrix $c$ by $TL(c)=TR(c)=0$, $BL(c)=BL(M)$, $BR(c)=BR(M)$.
Clearly, $b\in B$, $c\in C$ and $M+b=c$. 
If $c$ is skew-symmetric, then we must have $BL(c)=BL(M)=0$, contradicting our assumption on $M$. Thus $c=M+b$ is in $A\cap h$ but not in $S$.
We conclude that in this case the general position requirement holds generically.

Assume next that for every $M\in \sp(A\cap h)$, we have ${\rm TR}(M)={\rm BL}(M) =0$. Recall that $\sp(A\cap h)$ is spanned by matrices of the form $v^tu-u^tv$ for some $u,v\in\K^d$. Assume that $TR(v^tu-u^tv)=BL(v^tu-u^tv)=0$ for such a matrix.
We claim that in this case at least one of $TL(v^tu-u^tv)$ or $BR(v^tu-u^tv)$ is the zero matrix.
Indeed, put $M=v^tu-u^tv$, and assume that $TL(M)\neq 0$.
The for some $1\le i_0\neq j_0\le k$ we have $u_{i_0}v_{j_0}\neq u_{j_0}v_{i_0}$. In particular, not both $u_{i_0}v_{j_0}$ and $u_{j_0}v_{i_0}$ are zero. Assume, without loss of generality, that $u_{i_0}v_{j_0}\neq 0$. 
That is, $u_{i_0},v_{j_0}\neq 0$.
Suppose that $u_\ell=0$ for every $\ell>k$.
In this case it is clear that $BR(M)=0$ and the claim is proved.
Therefore, we may assume that for some $\ell>k$ we have $u_\ell\neq 0$. 
Since we  $BL(M)=0$, we have in particular $u_\ell v_j=u_jv_\ell$, for every $j=1,\ldots,k$. 
In particular, $u_\ell v_{j_0}=u_{j_0}v_\ell$. Note that since $v_{j_0}\neq 0$ and 
$u_\ell\neq 0$, we must have that also $v_\ell, u_{j_0}\neq 0$.
Thus, we get $\frac{v_{i_0}}{u_{i_0}}=\frac{v_\ell}{u_\ell}$ and $\frac{v_{j_0}}{u_{j_0}}=\frac{v_\ell}{u_\ell}$. Combining these equalities, we get that 
$u_{i_0} v_{j_0}=u_{j_0}v_{i_0}$, contradicting our assumption.
This proves the claim.

This implies that $\sp(A\cap h)$ is a direct sum $U\oplus V$ of matrices
with entries supported only on $TL(M)$ for $M\in U$ and matrices supported by $BR(M)$ for $M\in V$.

Now let $w\in S$. By the definition of $S$, $w$ can be written as $w=a+b=a'+c$ for some $a,a'\in \sp(A\cap h)$, $b\in B$, $c\in C$. 
Write $a=a_U+a_V$, where $a_U\in U$ and $a_V\in V$. Similarly, write $a'=a'_U+a'_V$. Then $a_U+a_V+b=a'_U+a'_V+c$, or $a_U-a'_U+b=a'_V-a_V+c$. 
But then, we must have $b=a'_U-a_U$ and $c=a_V-a'_V$, which in particular implies that $b,c\in \sp(A\cap h)$.

Since $a_U-a'_U\in U$ and $a'_V-a_V\in V$, this implies that, without loss of generality, we may assume $a\in U$ and $a'\in V$.
Thus also $w=a+b=a'+c\in \sp(A\cap h)$.
We conclude that $w\in \sp(A\cap h)$ for every $w\in S$. Thus the general position requirement holds in this case.

We now prove the remaining case where $k+m<d$, by reducing it to the case $k+m=d$ just discussed.
Write $k+m=d-z$, for some $z>0$. Repeat the above argument ignoring the last $z$ rows and last $z$ columns of every matrix used along the proof. Note that for $a\in A\cap h$, $a$ is skew-symmetric, and adding a matrix $b\in B$ or $c\in C$ will result with a matrix which is either in $h$ or not in $h$, independent of the last $z$ rows and columns of $a$. Indeed, for $b\in B$ and $c\in C$ these rows and columns are zero, and therefore they cannot affect the skew-symmetry of $a+b$ or $a'+c$. %\Onote{verify the last paragraph correctness}
\end{proof}
This completes the proof of Theorem~\ref{generic}.
\end{proof}

Having established the connection between genericity and general position, we can now complete the proof of Theorem~\ref{PIT}.

\begin{proof}[Proof of Theorem~\ref{PIT}.]
Consider the family of subspaces $F=\{f_1,\ldots,f_n\}$, where $f_i:=\sp\{u_i,v_i\}$, for each $i=1,\ldots,n$. 
Let $\x=(x_1,\ldots,x_d)$ and consider $h:=(\sp\{\x\})^\perp$.
In view of Lemma~\ref{fcaph}, we have 
$$\rank A(\x)=d(\{f\cap h\mid f\in F\}).
$$

On the other hand, by Theorem~\ref{Lovthm}, we have $d(\{f\cap h\mid f\in F\})=\rho_1(F)$. Thus  there exists a deterministic strongly-polynomial time algorithm to compute $\rank A(\x)$.
\end{proof}

We note that in the exact same way, our ability to efficiently compute $\rho_k$ for every integer $k$ by Theorem~\ref{main}, and the characterization above, completes the proof of Theorem~\ref{Rk} from the introduction.
%%%%%%%%%%

%\Anote{Added}
%\Section{Discussion}\label{sec:discussion}
\vspace{1.5cm}
\noindent{\bf Acknowledgements} We would like to thank Ze'ev Dvir for many illuminating discussions. We thank Amir Shpilka and Roy Meshulam for useful comments on an earlier version of the paper. We also thank Jan Vondrak for telling us about Dilworth truncation.

\section*{Appendix: Proof of Theorem~\ref{generic} over $\R$}

Here we provide an alternative proof of Theorem~\ref{generic} which works over the field of Real numbers. 
One advantage of working over $\R$ is that we have the notions of a manifold and of the dimension of a manifold available. In the proof below, we use the fact that the set of linear subspaces of $\R^d$ can be viewed as a manifold. Then, to show that a certain set has measure zero, it is sufficient to show that this set has lower dimension. This allows us to obtain a more straightforward proof for the case $\K=\R$.

\begin{proof}[Proof over $\R$:]
We first prove that property (i) in Definition~\ref{def:genp}  is a generic propery.
Fix $A\subset F$ and put $g=\sp(A)$.
For $\x\in\S^{d-1}$ with  $g\subset h(\x)$, we  have $\x\in\mathbb{S}^{d-1}\cap g^{\perp}$. If $d(g)\ge 1$, this means that $\x$ lies in a lower-dimensional sphere, which is a measure-zero subset of $\mathbb{S}^{d-1}$.
Since $F$ is finite (and so the number of different sub-families $A$ is finite), we conclude that for every $\x\in\S^{d-1}$, excluding a finite union of  certain lower-dimensional sub-spheres of $\S^{d-1}$,
$h(\x)$  satisfies property (i) in Definition~\ref{def:genp}.

We now prove that property (ii) in Definition~\ref{def:genp}  is a generic 
property.
Fix some subfamilies $A,B,C\subset F$.
We first handle certain degenerate cases. Note that if
\begin{equation}\label{x=yz}
\sp(A\cap h(\x))=\sp((A\cap h(\x))\cup B)\cap \sp((A\cap h(\x))\cup C),
\end{equation}
for some $\x\in\S^{d-1}$, then $h(\x)$ clearly satisfies property (ii).
Using Lemma~\ref{fcaph}, condition \eqref{x=yz}
defines an algebraic subvariety of $\S^{d-1}$.
In particular, \eqref{x=yz} either holds for every $\x\in \S^{d-1}$ or holds only for $\x$ taken from a subset of $\S^{d-1}$ of measure zero.
In the former case this means that, with respect to the subfamilies $A,B,C$, property (ii) in Definition~\ref{def:genp} holds for $h(\x)$ for every
$\x\in\S^{d-1}$ and there is nothing to prove. Therefore we can assume that we are in the complementary case. Namely, we assume that for almost every $\x\in\S^{d-1}$ we have
\begin{equation}\label{xneqyz}
\sp(A\cap h(\x))\subsetneq \sp((A\cap h(\x))\cup B)\cap \sp((A\cap h(\x))\cup C).
\end{equation}

Our next step is to identify the set of subspaces $g$ of the form 
$g=\sp(A\cap h(\x))$, for some $\x\in\S^{d-1}$, and determine its dimension as a subset of the Grassmannian.

We need the following observation.
Let 
$$
r:=\max_{\x\in \S^{d-1}} d(A\cap h(\x))
$$
We claim that $d(A\cap h(\x))=r$, for almost every $\x\in\S^{d-1}$.
Indeed, by Lemma~\ref{fcaph},
one can write a basis for $\sp(A\cap h(\x))$ with entries that are linear combinations in the coordinates of $\x$.
In particular, $d(A\cap h(\x))$ can be expressed as the rank of a certain symbolic matrix, with entries depending linearly in the coordinates of $\x$.
This implies that $d(A\cap h(\x))=r$ for every $\x\in\S^{d-1}$, excluding some subset of $\S^{d-1}$ of measure zero, which proves our claim. (Here we used the fact that the maximal rank of a given symbolic matrix is the same as the {\it generic} rank of the matrix.)

Let $S_0$ denote the subset of $\x\in \S^{d-1}$ such that either 
$d(A\cap h(\x))<r$ or \eqref{x=yz} holds for $h(\x)$.
As argued above $S_0\subset \S^{d-1}$ has measure zero.
 Let ${\rm Gr}(r,d)$ 
denote the Grassmannian of $r$-dimensional subspaces of $\R^d$, regarded as an affine variety.
 We define a map 
$\phi:\S^{d-1}\setminus S_0\to {\rm Gr}(r,d)$ by  
$$
\x\mapsto \sp(A\cap h(\x)).
$$
We claim that the image of $\phi$ is $r$-dimensional.
Indeed, let $g\in {\rm Im}(\phi)$ and let $\x\in\phi^{-1}(g)$. 
By definition of the domain of $\phi$, we have $\x\not\in S_0$ and thus $d(g)=r$. 
This means $g$ has maximal dimension. Observe that this guarantees that, for every $\x\in g^{\perp}$, we have $\sp(A\cap h(\x))=g$. 
(Indeed, $\x\in g^{\perp}$ certainly implies that $g\subset \sp(A\cap h(\x))$ and since $d(A\cap h(\x))\le r=d(g)$, we have equality.)
That is, $\phi^{-1}(g)=(\S^{d-1} \setminus S_0)\cap g^\perp$ and, in paticular,
$$\dim (\phi^{-1}(g))=d-1-r
$$
(dimension here is as a manifold).
We conclude that 
\begin{equation}\label{dimim}
\dim{\rm  Im}(\phi)=d-1-(d-1-r)=r,
\end{equation}
as claimed.

Next, define
$$
S_1'=\{\x\in\S^{d-1}\mid  
\sp((A\cap h(\x))\cup B)\cap\sp((A\cap h(\x))\cup C)\subset h(\x)
\}.
$$
Our goal is to show that $S_1'$ has measure zero, as a subset of the sphere. For this, it suffices to show that $S_1:=S_1'\setminus S_0$ has measure zero (since $S_0$ is of measure zero). Consider the restriction of $\phi$ to $S_1$.
Let $g\in {\rm Im}(\phi|_{S_1})$ and let $\x\in \phi|_{S_1}^{-1}(g)$.
Set 
$$
g':=\sp((A\cap h(\x))\cup B)\cap\sp((A\cap h(\x))\cup C).
$$
Since $\x\not\in S_0$, we have \eqref{xneqyz} which means
$$
d(g')\ge r+1.$$
Since we assume also that $\x\in S_1'$, we have $\x\in (g')^\perp$.
So 
\begin{equation}\label{fiber'}
\dim (\phi|_{S_1}^{-1}(g))\le d(g')^\perp\le d-1-(r+1)=d-r-2.
\end{equation}
Clearly we also have ${\rm Im}(\phi|_{S_1})\subset {\rm Im}(\phi)$,
and thus, using \eqref{dimim},   
\begin{equation}\label{im'}
\dim ({\rm Im}(\phi|_{S_1}))\le r.
\end{equation}
Combining \eqref{fiber'} and \eqref{im'}, we get that 
$$
\dim S_1=\dim ({\rm Im}(\phi|_{S_1}))+\dim (\phi|_{S_1}^{-1}(g))\le d-2.
$$
This completes the proof of the lemma.\end{proof}

\end{document}